\documentclass[letter,11pt]{article}


\usepackage{algorithm}
\usepackage{amsmath}
\usepackage{amsthm}
\usepackage[noend]{algpseudocode}
\usepackage{enumitem}
\usepackage{geometry}
\usepackage{graphicx}
\usepackage{listings}
\usepackage{mathtools}
\usepackage{subcaption}
\usepackage{tabularx}
\usepackage{tcolorbox}
\usepackage{varwidth}
\usepackage{stmaryrd}


\geometry{margin=1in}

\newcommand{\myparagraph}[1]{\smallskip\noindent \textbf{\textit{#1. }}}


\newtheorem{theorem}{Theorem}
\newtheorem{lemma}{Lemma}
\newtheorem{definition}{Definition}
\newtheorem{corollary}{Corollary}

\newtheorem{invariant}{Invariant}
\newtheorem{property}{Property}


\newcommand{\algtype}[1]{{\itshape\bfseries{#1}}}

\newcommand{\code}[1]{\lstinline!#1!}

\lstset{
  xleftmargin=2.0ex,
  basicstyle={\fontsize{11pt}{11pt}\ttfamily},
  captionpos=b,
  numbers=left, numberstyle=\small, numbersep=8pt,
  keywordstyle={\fontsize{11pt}{11pt}\normalfont\bf},
  keywords={signature,sig,structure,struct,function,fun,fn,case,of,type,datatype,let,val,fn,in,end,as,functor,if,then,else,while,with,and,start,do,for,from,to,parallelfor,return,ref,forkjoin,par,parallelfor,parfor,op,true,false,using},
  commentstyle=\rmfamily\slshape,
  morecomment=[l]{//},
  lineskip={5pt},
  columns=[l]fullflexible,
  keepspaces=true,
  mathescape=true,
  escapeinside={@}{@},
  literate={requires}{{$\lcm\text{\keywordstyle \% requires}$}}6
  {returns}{{$\lcm\text{\keywordstyle \% returns}$}}6
  {=}{{$\lcm=$}}2
  {:=}{{$:=$}}2
  {<=}{{$\lcm\leq$}}2
  {>=}{{$\lcm\geq$}}2
  {(}{{$($}}2
  {)}{{$)$}}2
  {**}{{$\lcm\times$}}2
  {|}{{$|$}}2
  {=>}{{$\lcm\boldsymbol\Rightarrow$}}2
  {->}{{$\lcm\rightarrow$}}2
  {'a}{{$\alpha$}}1
  {'b}{{$\beta$}}1
}

\newcommand{\ra}[2]{#1.#2}
\newcommand{\raa}[2]{#1.#2}

\algrenewcommand\algorithmicindent{1em}

\algnewcommand{\algcomment}[1]{\qquad{\color{blue}\emph{// #1}}}    
\algnewcommand{\alglinecomment}[1]{{\color{purple!40!black}\emph{// #1}}}      

\algnewcommand\alglocal{\textbf{local}}                            
\algnewcommand\algreturn{\textbf{return}}                          
\algnewcommand\algeach{\textbf{each}}                              
\algnewcommand\algretire{\textbf{retire process}}                   

\algnewcommand\algmodassign{\textbf{write}}                         
\algnewcommand\algwritemod[2]{\algmodassign(#1, #2)}          		
\algnewcommand\algto{\textbf{to}}									
\algnewcommand\algis{\textbf{is}}                                   
\algnewcommand\algnot{\textbf{not}}                                 
\algnewcommand\algin{\textbf{in}}                                   
\algnewcommand\algempty{\textbf{empty}}                             

\algnewcommand\algand{\textbf{and}}                               
\algnewcommand\algor{\textbf{or}}                                 
\algnewcommand\algassign{\ensuremath{\gets}}                       

\algnewcommand\algtrue{\textbf{true}}                               
\algnewcommand\algfalse{\textbf{false}}                             
\algnewcommand\algnull{\ensuremath{\perp}}                          

\algnewcommand\algarray[1]{\textnormal{array}\ensuremath{\langle}#1\ensuremath{\rangle}}    
\algnewcommand\algmod[1]{\textbf{mod}\ensuremath{\langle}#1\ensuremath{\rangle}}            

\algnewcommand\algorithmicwith{\textbf{with}}
\algnewcommand\algorithmicread{\textbf{read}}
\algnewcommand\algorithmicas{\textbf{as}}
\algnewcommand\Read[2]{\State \alglocal #2 \algassign \algorithmicread(#1)}
\algnewcommand\EndRead{}



\algnewcommand\algorithmicinparallel{\textbf{in parallel}}
\algblockdefx[PARFOR]{ParallelFor}{EndParallelFor}[1]
{\algorithmicfor\ #1\ \algorithmicdo\ \algorithmicinparallel}
{\algorithmicend\ \algorithmicfor}

\makeatletter
\ifthenelse{\equal{\ALG@noend}{t}}%
{\algtext*{EndParallelFor}}
{}%
\makeatother


\mathtoolsset{showonlyrefs}



\newenvironment{proofsketch}{%
  \proof}{\endproof}

\DeclareMathOperator{\polylog}{polylog}


\newcommand{\multipram}[0]{\emph{multiprefix} CRCW \ensuremath{\mathsf{PRAM}}\renewcommand{\multipram}[0]{multiprefix CRCW \ensuremath{\mathsf{PRAM}}}}

\newcommand{\g}[1]{{\color{red} #1}}

\begin{document}

  \title{Parallel Minimum Cuts in $O(m \log^2(n))$ Work and Low Depth\footnote{This is the full version of the paper appearing in the ACM Symposium on Parallelism in Algorithms and Architectures (SPAA), 2021}}

  \author{Daniel Anderson \\ Carnegie Mellon University \\ dlanders@cs.cmu.edu \and  Guy E. Blelloch \\  Carnegie Mellon University \\  guyb@cs.cmu.edu}

  \date{}
  \maketitle
  
  \begin{abstract}
We present a randomized $O(m \log^2 n)$ work, $O(\polylog n)$ depth parallel algorithm for minimum cut. This algorithm matches the work bounds of a recent sequential algorithm by Gawrychowski, Mozes, and Weimann [ICALP'20], and improves on the previously best parallel algorithm by Geissmann and Gianinazzi [SPAA'18], which performs $O(m \log^4 n)$ work in $O(\polylog n)$ depth.

Our algorithm makes use of three components that might be of independent interest. Firstly, we design a parallel data structure that efficiently supports batched mixed queries and updates on trees. It generalizes and improves the work bounds of a previous data structure of Geissmann and Gianinazzi and is work efficient with respect to the best sequential algorithm. Secondly, we design a parallel algorithm for approximate minimum cut that improves on previous results by Karger and Motwani. We use this algorithm to give a work-efficient procedure to produce a tree packing, as in Karger's sequential algorithm for minimum cuts. Lastly, we design an efficient parallel algorithm for solving the minimum $2$-respecting cut problem.
\end{abstract}

  \pagenumbering{gobble}
  
  \clearpage
  
  \pagenumbering{arabic}
  
  \section{Introduction}

Minimum cut is a classic problem in graph theory and algorithms. The problem is to find, given an undirected weighted graph $G = (V, E)$, a nonempty subset of vertices $S \subset V$ such that the total weight of the edges crossing from $S$ to $V \setminus S$ is minimized. Early approaches to the problem were based on reductions to maximum $s$-$t$ flows~\cite{gomory1961multi,hao1994faster}. Several algorithms followed which were based on edge contraction~\cite{nagamochi1992computing,nagamochi1992linear,karger1993global,karger1996new}. Karger was the first to observe that tree packings~\cite{nash1961edge} can be used to find minimum cuts~\cite{karger2000minimum}.
In particular, for a graph with $n$ vertices and $m$ edges, Karger showed how to use random sampling and a tree packing algorithm of Gabow~\cite{gabow1995matroid} to generate a set of $O(\log n)$ spanning trees such that, with high probability, the minimum cut crosses at most two edges of one of them. A cut that crosses at most $k$ edges of a given tree is called a \emph{$k$-respecting} cut. Karger then gives an $O(m \log^2 n)$-time algorithm for finding minimum 2-respecting cuts, yielding a randomized $O(m\log^3 n)$-time algorithm for minimum cut.  Karger also gives a parallel algorithm for minimum 2-respecting cuts in $O(n^2)$ work and $O(\log^3 n)$ depth.

Until very recently, these were the state-of-the-art sequential and parallel algorithms for the weighted minimum cut problem. A new wave of interest in the problem has recently pushed these frontiers. Geissmann and Gianinazzi~\cite{geissmann2018parallel} design a parallel algorithm for minimum $2$-respecting cuts that performs $O(m\log^3 n)$ work in $O(\log^2 n)$ depth. Their algorithm is based on parallelizing Karger's algorithm
  by replacing a sequential data structure for the so-called \emph{minimum path} problem, based on dynamic trees, with a data structure that can evaluate a \emph{batch} of updates and queries in
  parallel.  Their algorithm performs just a factor of $O(\log n)$ more work than Karger's sequential algorithm, but substantially improves on the work of Karger's parallel algorithm.
  
Soon after, a breakthrough from Gawrychowski, Mozes, and Weimann~\cite{gawrychowski2019minimum} gave a randomized $O(m\log^2 n)$ algorithm for minimum cut. Their algorithm achieves the  $O(\log n)$ speedup by designing an $O(m\log n)$ algorithm for finding the minimum $2$-respecting cuts, which was the bottleneck of Karger's algorithm. This is the first result to beat Karger's seminal algorithm in over 20 years.

An open question posed by Karger was whether a deterministic algorithm can achieve an $O\left(m^{1+o(1)}\right)$ runtime. This was recently resolved in the affirmative by Li~\cite{li2020deterministic} by derandomizing the construction of the spanning trees.


  In our work, we combine ideas from Gawrychowski et al.\ and Geissmann and Gianinazzi with several new techniques to close the gap between
  the parallel and sequential algorithms.   Our contribution can be summarized by:
  
  \begin{theorem}\label{thm:main-theorem}
    The minimum cut of a weighted graph can be computed with high probability in $O(m \log^2 n)$ work and $O(\log^3 n)$ depth.
  \end{theorem}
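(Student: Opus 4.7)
The plan is to follow Karger's high-level strategy~\cite{karger2000minimum} and reduce the minimum cut problem to computing $O(\log n)$ minimum $2$-respecting cuts, but to realize each of the three stages in parallel without losing more than constant factors in work relative to the best sequential algorithm~\cite{gawrychowski2019minimum}. Specifically, I would (a) compute an approximate minimum cut value $\lambda'$ with $\lambda' = \Theta(\lambda)$; (b) use $\lambda'$ to drive a random-sampling and tree-packing procedure producing $O(\log n)$ spanning trees such that, with high probability, some spanning tree $2$-respects the minimum cut; and (c) for each such tree, compute the minimum $2$-respecting cut in parallel, finally returning the minimum over all trees.

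For stage~(a), I would invoke the improved parallel approximate minimum cut procedure promised in the introduction. This gives a constant-factor estimate $\lambda'$ in $O(m \log n)$ work (subsumed by the global bound) and $O(\polylog n)$ depth. For stage~(b), using $\lambda'$ I would sample the edges of $G$ at rate $\Theta(\log n / \lambda')$, obtaining a sparse skeleton $H$ with $O(n \log n)$ expected edges whose min cut is $\Theta(\log n)$, and then run a parallel implementation of Gabow's tree-packing algorithm~\cite{gabow1995matroid} on $H$. Since $H$ has $\tilde{O}(n)$ edges and we only need $O(\log n)$ trees, standard parallel augmenting-tree arguments should keep the cost of this stage within $O(m \log^2 n)$ work; by Karger's argument, with high probability the minimum cut of $G$ $2$-respects at least one of the returned trees.

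For stage~(c), for each of the $O(\log n)$ spanning trees $T$ I would call the parallel $2$-respecting-cut routine on $(G, T)$. To achieve the desired $O(m \log n)$ work per tree (and therefore $O(m \log^2 n)$ total), I would reformulate Gawrychowski et al.'s sequential algorithm so that the heavy path decomposition of $T$ and the induced minimum-path queries are resolved as batches of mixed updates and queries, and service those batches with the parallel tree data structure built for this purpose. Each batch of size $b$ is handled in $O((b + n) \log n)$ work and $O(\polylog n)$ depth, and summing over the $O(\log n)$ recursive levels of the heavy path decomposition yields the claimed bound. Finally, taking the minimum over the $O(\log n)$ trees adds only $O(\log n)$ depth.

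The main obstacle will be stage~(c): it is not obvious how to emulate Gawrychowski et al.'s $O(m \log n)$ sequential algorithm in a purely batched parallel setting while preserving work optimality, because that algorithm interleaves updates and minimum queries along heavy paths in a data-dependent order. The key technical work is therefore to design the batched tree data structure that supports arbitrary interleavings of path/subtree updates and minimum-path queries work-efficiently (matching the sequential $O(\log n)$ per operation), and to decompose the $2$-respecting cut computation into $O(\log n)$ batches that respect the dependencies imposed by the heavy path decomposition. Assuming those three components are in place, Theorem~\ref{thm:main-theorem} follows by summing $O(m \log^2 n)$ work and $O(\log^3 n)$ depth across the three stages.
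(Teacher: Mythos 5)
Your high-level skeleton (approximate cut $\to$ skeleton + tree packing $\to$ $O(\log n)$ minimum $2$-respecting cut computations) is the same as the paper's, and you correctly identify stage~(c) as the hard part. However, the specific resolutions you sketch for stages~(b) and~(c) diverge from the paper in ways that matter, and stage~(c) as you describe it has a genuine gap.

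For stage~(b), you propose to parallelize Gabow's tree-packing algorithm via ``standard parallel augmenting-tree arguments.'' The paper deliberately avoids Gabow's algorithm for exactly this reason and instead uses the packing algorithm of Plotkin et al., which reduces to $O(\log^2 n)$ MST computations on the skeleton and is straightforwardly parallel. It is not clear how to parallelize Gabow's augmenting procedure within the target bounds, and you give no argument. Also, you quote the skeleton as having $O(n\log n)$ edges; in the paper's accounting (after contracting heavy edges) it has $O(m\log n)$ edges, and the per-MST cost is controlled by the Gawrychowski et al. trick of maintaining only the lightest parallel multiedge, not by shrinking the skeleton to $\tilde O(n)$ size. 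Separately, the paper puts substantial work into the $O(1)$-approximate minimum cut of stage~(a) — a $\log n$-approximation first, a transformation to bounded edge weights, binomial sampling with subsampling, a weighted sparse certificate, and a weighted Matula step — none of which is trivial, and the resulting cost is $O(m\log^2 n)$ work, not the $O(m\log n)$ you assert.

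For stage~(c), the gap is more serious. You propose to parallelize the Gawrychowski et al. $O(m\log n)$ routine by batching its mixed updates/queries along the heavy-light decomposition, i.e., ``decompose the $2$-respecting cut computation into $O(\log n)$ batches that respect the dependencies imposed by the heavy path decomposition.'' The paper explicitly identifies this as the obstacle that cannot be overcome directly: the biased divide-and-conquer on heavy paths is data-dependent and the subtrees may be badly unbalanced, so the dependency structure does not decompose into $O(\log n)$ clean batches, and pursuing the HLD route (as in Lovett--Sandlund) yields only $O(m\log^2 n)$ work per tree even with the improved batched data structure, i.e., $O(m\log^3 n)$ total. The paper instead replaces the HLD-guided search entirely with a divide-and-conquer over the \emph{recursive clustering of an RC tree}: the independent-edges case is reduced to ``bipartite problems'' built from compressed path trees grouped by non-tree-edge LCAs, and the recursion descends the RC cluster hierarchy of $T_1$ while maintaining a compressed, reweighted copy of $T_2$. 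That is what delivers $O(m\log n)$ work and $O(\log^3 n)$ depth per tree. The descendant-edges case is handled by a single batch generated from an Euler tour, which does match your batched-data-structure intuition, but the independent case does not. In short, you correctly name the problem but propose a resolution that the paper argues (and Lovett--Sandlund's accounting confirms) cannot reach the target work bound; the missing idea is the RC-tree-guided divide-and-conquer with compressed path trees.
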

  
  \noindent We achieve this using a combination of results that may be of independent interest. Firstly, we design a framework for evaluating mixed batches of updates and queries on trees work efficiently in low depth. This algorithm is based on parallel Rake-Compress Trees (RC trees)~\cite{acar2020batch}. Roughly, we say that a set of update and query operations implemented on an RC tree is \emph{simple} (defined formally in Section~\ref{sec:batched-mixed-ops}) if the updates maintain values at the leaves that are modified by an associative operation and combined at the internal nodes, and the queries read only the nodes on a root-to-leaf path and their children. Simple operation sets include updates and queries on path and subtree weights.
  
  \begin{theorem}\label{theorem:batch}
    Given a bounded-degree RC tree of size $n$ and a simple operation set, after $O(n)$ work and $O(\log n)$ depth preprocessing, batches of $k$ operations from the operation-set, can be processed in
    $O(k \log (kn))$ work and $O(\log n\log k)$ depth. The total space required is $O(n + k_{max})$, where $k_{max}$ is the maximum size of a batch.
  \end{theorem}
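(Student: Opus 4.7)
The plan is to build on the parallel RC tree construction from prior work, which produces an $O(\log n)$-height hierarchical decomposition in $O(n)$ work and $O(\log n)$ depth w.h.p. Preprocessing additionally computes, at every internal cluster, the associative aggregate of the values at its descendant leaves --- a standard bottom-up tree reduction that fits into the same $O(n)$ work, $O(\log n)$ depth budget. Because the operation set is simple, any query can be answered by combining aggregates stored at the children of the clusters on its root-to-leaf path in the RC tree, so no extra auxiliary information is needed.

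For a batch of $k$ operations, the main idea is to restrict all computation to the \emph{induced sub-RC-tree}, namely the union of the RC-tree root-paths of the $k$ affected leaves. First I would sort the batch by RC-tree leaf in $O(k \log k)$ work and $O(\log k)$ depth. Then I would construct the induced subtree level by level, obtaining the induced clusters at level $i$ by taking the RC-tree parents of those at level $i-1$ and deduplicating via a parallel sort or hash. Since the clusters at any single RC-tree level partition the base tree, at most $k$ distinct induced clusters appear per level, so the induced subtree has size $O(k \log n)$ and is constructed in $O(k \log n)$ work and $O(\log n \log k)$ depth.

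I would then evaluate the batch in two sweeps over the induced subtree. The update sweep proceeds bottom-up: at each of the $O(\log n)$ levels, pending contributions at each induced cluster are combined into its aggregate via a parallel reduction, costing $O(\log k)$ depth and $O(k)$ work per level. The query sweep proceeds top-down: each query descends along its induced path, consulting the children's aggregates at each ancestor cluster and accumulating its answer. Together with sorting, the total cost is $O(k \log(kn))$ work and $O(\log n \log k)$ depth, as required.

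The main obstacle is meeting the $O(n + k_{max})$ space bound, since naively materializing the induced subtree would cost $O(k \log n)$ additional words. I would address this by processing the sweeps one RC-tree level at a time and overwriting the previous level's frontier, each of which has size at most $k_{max}$. A second subtlety is correctly ordering queries against updates within the same batch when they touch overlapping paths; I would resolve this by tagging operations with their batch index and performing a stable parallel merge at each affected cluster, which is subsumed by the per-level $O(\log k)$ depth budget.
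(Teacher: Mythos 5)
Your preprocessing step, the initial sort by leaf and timestamp, the idea of confining work to the clusters touched by the batch, and the level-by-level frontier trick for the $O(n+k_{\max})$ space bound all match the paper. The gap is in the core of the algorithm: your two-pass structure --- a bottom-up ``update sweep'' that folds contributions into final aggregates, followed by a separate top-down ``query sweep'' that reads those aggregates --- cannot correctly evaluate a \emph{mixed} batch. A query with timestamp $t$ must observe the cluster values reflecting exactly the updates with timestamps $<t$, not the post-all-updates state. Consider the batch $[\,\textproc{subtreeSum}(v),\ \textproc{addWeight}(u,5),\ \textproc{subtreeSum}(v)\,]$ with $u$ in the subtree of $v$: in your scheme both queries would return the same value, but the correct answers differ by $5$. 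You acknowledge this subtlety in your final paragraph, but the proposed fix (``stable parallel merge at each affected cluster'') is not reconciled with the two-pass structure, and if realized as storing a timestamped list of cluster values and then binary-searching it per query per level, it would cost $O(k\log k\log n)$ work, a $\log k$ factor over the target.

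The paper's proof resolves this with a \emph{single} bottom-up pass that carries both kinds of operations together. Each cluster maintains an \emph{operation list}: a timestamp-sorted list of entries holding the cluster's value at that timestamp and, for queries, the query's partially evaluated result so far. When moving up a level, each parent merges its children's operation lists by timestamp, uses a prefix sum over the merged list to determine, for every entry, the latest value of each child at or before that timestamp, and then evaluates its own value and advances the query partials in $O(1)$ per entry. Queries are therefore finished exactly when they reach the root, with no separate top-down pass and no per-query binary search. (Note also that in the simple-RC model of the paper, queries are defined as leaf-to-root traversals, so a ``top-down query sweep'' is a mismatch with the model as well.) To repair your proposal you should fold queries into the same bottom-up pass and replace per-query lookups with the merge-plus-prefix-sum mechanism, which is precisely the step that delivers the $O(k\log(kn))$ work bound.
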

  
  \noindent This result generalizes and improves on Geissmann and Gianinazzi~\cite{geissmann2018parallel} who give an algorithm for evaluating a batch of $k$ path-weight updates and queries in $\Omega(k\log^2 n)$ work.
  
  Next, we design a faster parallel algorithm for approximating minimum cuts, which is used as an ingredient in producing the tree packing used in Karger's approach (Section~\ref{sec:packing}). To achieve this, we design a faster sampling scheme for producing graph skeletons, leveraging recent results on sampling binomial random variables, and a transformation that reduces the maximum edge weight of the graph to $O(m \log n)$ while approximately preserving cuts.
 
  Lastly, we show how to solve the minimum $2$-respecting cut problem efficiently in parallel, using a combination of our new mixed batch tree operations algorithm and the use of RC trees to efficiently perform a divide-and-conquer search over the edges of the $2$-constraining trees (Section~\ref{sec:two-respect}).
  
  \begin{theorem}\label{thm:two-respecting}
    The minimum $2$-respecting cut of a weighted graph with respect to a given spanning tree can be computed in $O(m \log n)$ work and $O(\log^3 n)$ depth with high probability.
  \end{theorem}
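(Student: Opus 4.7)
The plan is to parallelize the framework of Gawrychowski, Mozes, and Weimann by replacing their heavy-path decomposition with an RC tree of the given spanning tree $T$, and replacing their sequential path data structures with the batched mixed-operation framework of Theorem~\ref{theorem:batch}. As a preprocessing step, compute for every tree edge $e$ the $1$-respecting cut value $c(e) = w(e) + \sum \{ w(u,v) : (u,v) \in E \setminus T \text{ and } e \text{ lies on the tree path from } u \text{ to } v\}$. This is accomplished by a single batch: for each non-tree edge $(u,v)$ of weight $w$, issue a path-add of $w$ on the tree path from $u$ to $v$, then read off subtree sums to get each $c(e)$. By Theorem~\ref{theorem:batch} this costs $O(m \log n)$ work and $O(\log^2 n)$ depth, and the minimum $1$-respecting cut falls out as a side effect.

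For the $2$-respecting case, express the cut value for a pair of tree edges $(e,f)$ as $c(e) + c(f) - 2\cdot \mathrm{cross}(e,f)$, where $\mathrm{cross}(e,f)$ is the total weight of non-tree edges whose tree path crosses both $e$ and $f$. Split into the ancestor--descendant case and the case where $e,f$ lie in disjoint subtrees, and handle each by a divide-and-conquer over the RC tree of $T$: at each of its $O(\log n)$ levels, process all clusters in parallel, with each cluster responsible for the pairs $(e,f)$ whose lowest common cluster is that cluster. Within a cluster, the relevant $\mathrm{cross}$ quantities can be expressed as subtree sums and root-to-leaf path sums on a contracted tree determined by the cluster's boundary vertices, which is exactly the kind of simple operation set that Theorem~\ref{theorem:batch} supports. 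Each pair is thus considered by exactly one cluster, analogous to the way heavy-path decomposition charges each pair to a unique heavy-path/light-edge transition in the sequential algorithm.

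The accounting is a charging argument: each non-tree edge's endpoints lie in only $O(\log n)$ nested RC clusters, so the total size of all query batches summed over all $O(\log n)$ levels is $O(m \log n)$. Applying Theorem~\ref{theorem:batch} to a batch of size $k$ costs $O(k \log(kn))$ work and $O(\log n \log k) = O(\log^2 n)$ depth, giving $O(m \log n)$ work and $O(\log^3 n)$ depth overall. Taking a parallel reduction of the minimum cut value over all cluster outputs adds only $O(\log n)$ extra depth.

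The hard part is setting up the per-cluster subproblem so that (i) every pair $(e,f)$ is considered at exactly one cluster, and (ii) the required $\mathrm{cross}(e,f)$ values can be assembled entirely from simple subtree and path aggregates that fit the operation set of Theorem~\ref{theorem:batch}. Concretely, one must show that for each cluster the local problem reduces to finding, over the boundary-induced contracted tree of the cluster, a pair of edges minimizing a sum of $c$-values minus twice a subtree-path interaction whose contributions can be computed with path-sum updates and subtree-sum queries. This structural lemma is the analog of the heavy/light split in Gawrychowski et~al., and its parallel version is what makes the RC tree a drop-in replacement rather than a mere rebalancing.
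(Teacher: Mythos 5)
Your high-level plan---preprocessing $c(e)=w(F_e)$ via a single batch of \textproc{AddPath}/\textproc{QueryEdge} operations, decomposing the two-respecting cost as $c(e)+c(f)-2\,\mathrm{cross}(e,f)$, and splitting into an ancestor--descendant case and an independent-edges case---matches the paper, and the ancestor--descendant case can in fact be handled essentially as you suggest (the paper issues a length-$O(m)$ sequence of \textproc{AddPath}/\textproc{QuerySubtree} operations along an Euler tour and feeds it to Corollary~\ref{corollary:batch}). The independent case is where your proposal diverges and where it does not close.

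First, the accounting has an arithmetic gap. Theorem~\ref{theorem:batch} charges $\Theta(k\log(kn))$ work for a batch of size $k$, i.e.\ roughly an extra $\log n$ per operation. If your batches summed over all levels have total size $O(m\log n)$, the work through Theorem~\ref{theorem:batch} is $O(m\log^2 n)$, not $O(m\log n)$; to get $O(m\log n)$ you would need total batch size $O(m)$. The paper avoids this precisely by \emph{not} routing the independent case through Theorem~\ref{theorem:batch}: that machinery is used only for the $O(m)$ operations in preprocessing and in the descendant case, while the independent case is solved by a separate recursion whose per-level cost is $O(1)$ per non-tree edge, not $O(\log n)$. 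Second, you identify the per-cluster structural lemma as ``the hard part,'' and it is genuinely unresolved: the set of pairs $(e,f)$ whose lowest common RC cluster is a given cluster does not obviously decompose $\mathrm{cross}(e,f)$ into subtree/path aggregates of a simple operation set, and the RC-tree LCA structure does not line up with the LCA structure of $T$ that the problem actually depends on. The paper takes a concretely different route: group non-tree edges by the LCA \emph{in $T$} of their endpoints; for each group, build a compressed path tree, split it at the LCA into $T_1,T_2$ with a crossing list $L$ (a ``bipartite problem'' of total size $O(m)$ over all groups); then solve each bipartite problem by recursing on the RC clusters of $T_1$ while \emph{recompressing} $T_2$ at each step so its size tracks $|L|$ in the current subproblem. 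The recompression is the load-bearing ingredient that keeps each recursion level to $O(m)$ total work; without an analog of it, your per-cluster subproblems would carry an uncompressed second tree and the work would not telescope. So the two concrete gaps are: (i) the independent case must avoid the per-operation $\log$ of Theorem~\ref{theorem:batch}, and (ii) some mechanism like compressed path trees is needed to shrink the second tree as the recursion proceeds.
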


\myparagraph{Application to the unweighted problem}
The unweighted minimum cut problem, or edge connectivity problem was recently improved
by Ghafarri, Nowicki, and Thorup~\cite{ghaffari2020faster} who give an $O(m\log n + n\log^4 n)$
work and $O(\polylog n)$ depth randomized algorithm which uses Geissmann and Gianinazzi's algorithm as a subroutine. By plugging
our improved algorithm into Ghafarri, Nowicki, and Thorup's algorithm, we obtain an
algorithm that runs in $O(m\log n + n\log^2 n)$ work
and $O(\polylog n)$ depth w.h.p.

  \section{Preliminaries}
\label{sec:prelims}
  
  \myparagraph{Model of computation} 
  We analyze algorithms in the \emph{work-depth} model using fork-join parallelism. A procedure can \emph{fork} another procedure call to run in parallel and then wait for forked procedures to complete with a \emph{join}. Work is defined as the total number of instructions performed by the algorithm and depth (also called span) is the length of the longest chain of sequentially dependent instructions \cite{Blelloch96}. The model can work-efficiently cross simulate the classic CRCW PRAM model~\cite{Blelloch96}, and the more recent Binary Forking model~\cite{blelloch2020optimal} with at most a logarithmic-factor difference in the depth.
  
  \myparagraph{Randomness}
  We say that a statement happens \emph{with high probability} (w.h.p) in $n$ if for any constant $c$, the constants in the statement can be set such that the probability that the event fails to hold is $O(n^{-c})$.
  In line with Karger's work on random sampling~\cite{karger1999random}, we assume that we can generate $O(1)$ random bits in $O(1)$ time. Since some of the subroutines we use require random $\Theta(\log n)$-bit words, these take $O(\log n)$ work to generate.  The depth is unaffected since we can always pre-generate the anticipated number of random words at the beginning of our algorithms.
  
  Our algorithms are Monte Carlo, i.e.,\ correct w.h.p.\ but run in a deterministic amount of time. We can use Las Vegas algorithms, which are fast w.h.p.\ but always correct, as subroutines, because any Las Vegas algorithm can be converted into a Monte Carlo algorithm by halting and returning an arbitrary answer after the desired time.

  \myparagraph{Tree contraction}
  Parallel tree contraction is a technique developed to efficiently apply various operations over trees in logarithmic parallel depth~\cite{miller1989parallel}, and was also later applied to dynamic trees~\cite{acar2005experimental}.   
Tree contration consists of a set of rake and compress operations.
  The \emph{rake} operation removes a leaf vertex and merges it with its parent.
  The \emph{compress} operation removes a vertex of degree two and replaces its two incident edges with a single edge
  joining its neighbors. Miller and Reif~\cite{miller1989parallel} observed that rakes and compresses can be applied in parallel as long as they
are applied to an independent set of vertices.   They describe a random-mate technique that ensures that 
any tree contracts to a single vertex in $O(\log n)$ rounds w.h.p., and using a total of $O(n)$ work in expectation.
Gazit, Miller, and Teng~\cite{gazit1988optimal} give a deterministic version with the same bounds, and Blelloch et al.~\cite{blelloch2020optimal} give a version that works in the binary-forking model. Miller and Reif's algorithm applies to bounded-degree trees, but arbitrary-degree
trees can typically be handled by converting them into bounded-degree trees. For a rooted tree, the root is never removed, and is the final surviving vertex.



  \myparagraph{Rake-compress trees} The RC tree~\cite{acar2005experimental,acar2020batch} of a tree $T$ encodes a recursive clustering of $T$ corresponding to the result of tree contraction, where each cluster corresponds to a rake or compress (see Figure~\ref{fig:rc-tree}).  A cluster is defined to be a connected subset of vertices and edges of the original
  tree. Importantly, a cluster can contain an edge without containing
  its endpoints. The \emph{boundary vertices} of a cluster $C$ are the vertices $v \notin C$
  such that an edge $e \in C$ has $v$ as one of its endpoints. All of the clusters
  in an RC tree have at most two boundary vertices. A cluster with no boundary vertices
  is called a \emph{nullary cluster} (generated at the top-level \emph{root} cluster), a cluster with one boundary is a
  \emph{unary cluster} (generated by the rake operation)
  and a cluster with two boundaries is \emph{binary cluster}
  (generated by the compress operation). The \emph{cluster path} of a binary cluster is the path in $T$ between its boundary vertices. Nodes in an RC tree correspond to clusters, such that a node is the disjoint union of its children.
  
  The leaf clusters of the RC tree are the vertices and
  edges of the original tree, which are nullary and binary clusters respectively. Note that all non-leaf clusters have exactly one vertex (leaf)
  cluster as a child. This vertex is that cluster's \emph{representative} vertex.
  The recursive clustering is then defined by the following simple rule: Each rake or compress operation corresponds to a cluster, such that the operation that deletes vertex $v$ from the tree defines a cluster with representative vertex $v$ whose non-leaf subclusters are all of the clusters that have $v$ as a boundary vertex. Clusters therefore have the useful property that the constituent clusters of a parent cluster $C$ share a single boundary vertex in common---the representative of $C$, and their remaining boundary vertices become the boundary vertices of $C$.

In this paper we will be considering rooted trees.  In this case the root of the tree is also the representative of the top level nullary cluster of the RC-tree, e.g. vertex \texttt{e} in Figure~\ref{fig:rc-tree}.  Non-leaf binary clusters have a binary subcluster whose cluster path is above the representative vertex in the input tree, which we will refer to as the \emph{top cluster}, and a binary subcluster whose cluster path is below the representative vertex, which we call the \emph{bottom cluster}.  We will also refer to the binary subcluster of a unary cluster as the top cluster as its cluster path is also above the representative vertex.
In our pseudocode, we will use the following notation. For a cluster $x$: $\ra{x}{v}$ is the representative vertex, $\ra{x}{t}$ is the top subcluster, $\ra{x}{b}$ is the bottom subcluster,  $\ra{x}{U}$ is a list of unary subclusters, and $\ra{x}{p}$ is the parent cluster.


\begin{figure}[t]
  \centering
  \begin{subfigure}{0.4\columnwidth}
    \centering
    \includegraphics[width=0.9\columnwidth]{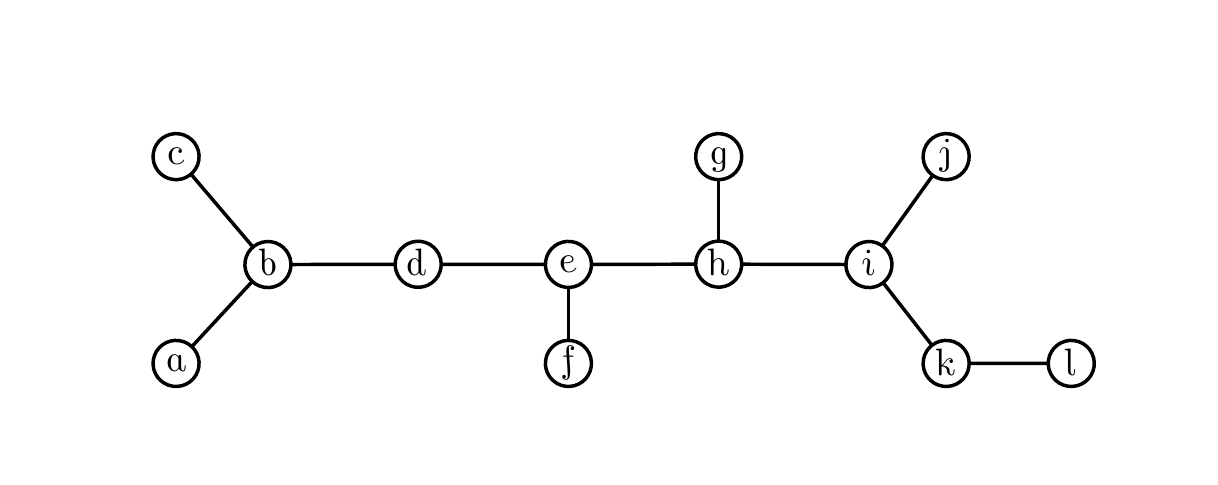}
    \caption{A tree}
  \end{subfigure}
  \begin{subfigure}{0.4\columnwidth}
    \centering
    \includegraphics[width=0.9\columnwidth]{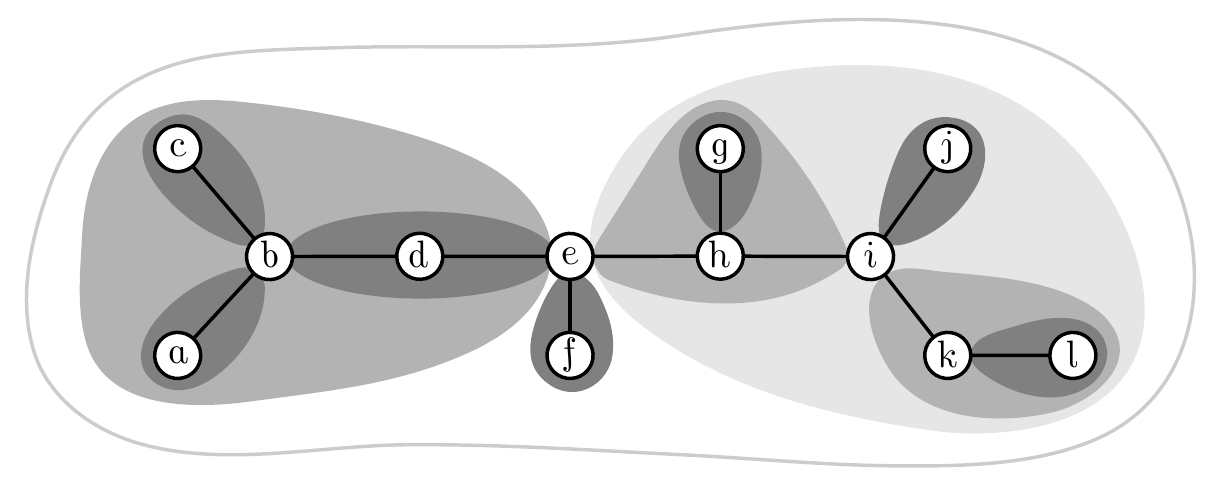}
    \caption{A recursive clustering of the tree produced by tree contraction. Clusters produced in earlier rounds are depicted in a darker color.}
  \end{subfigure}
  \begin{subfigure}{0.8\columnwidth}
    \bigskip  
      \centering
      \includegraphics[width=0.8\columnwidth]{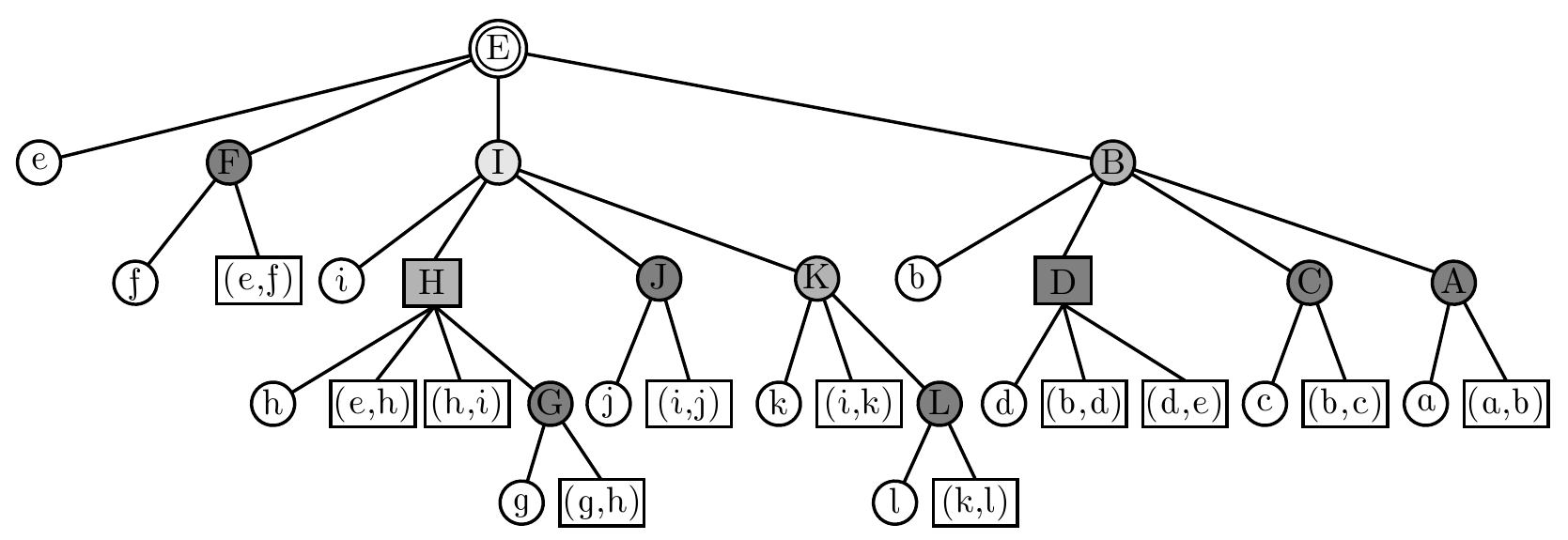}
      \caption{The corresponding RC tree. Unary clusters (from rakes) are shown as filled circles, binary clusters as rectangles, and the finalize (nullary) cluster at the root with two concentric circles. The leaf clusters are labeled in lowercase, and the composite clusters are labeled with the uppercase of their representative. The shade of a cluster corresponds to its height in the clustering. Lower heights (i.e., contracted earlier) are darker.}
    \end{subfigure}  
    \caption{A tree, a clustering, and the corresponding RC tree \cite{acar2020batch}.}\label{fig:rc-tree}
  \end{figure}

  \myparagraph{Compressed path trees} For a weighted (unrooted) tree $T$ and a set of \emph{marked} vertices $V \subset V(T)$, the compressed path tree is a weighted tree $T_c$ on some subset of the vertices of $T$ including $V$ with the following property: for every pair of vertices $(u,v) \in V\times V$, the weight of the lightest edge on the path from $u$ to $v$ is the same in $T$ and $T_c$. The compressed path three $T_c$ is defined as the smallest such tree. Alternatively, the compressed path tree is the tree $T$ with all unmarked vertices of degree less than three spliced out, where each spliced-out path is replaced by an edge whose weight is the lightest of the weights on the path it replaced. It is not hard to show that $T_c$ has size less than $2|V|$.
  Compressed path trees are described in \cite{anderson2020work}, where it is shown that given an RC tree for the tree $T$ and a set of $k$ marked vertices, the compressed path tree can be produced in $O(k\log(1+n/k))$ work and $O(\log^2 n)$ depth w.h.p.
   Gawrychowski et al.~\cite{gawrychowski2019minimum}  define a similar notion which they call ``topologically induced trees'', but their algorithm is sequential and requires $O(k \log n)$ work (time).
    
    \myparagraph{Karger's minimum cut algorithm} Karger's algorithm for minimum cuts~\cite{karger2000minimum} is based on the notion of \emph{$k$-respecting cuts}.
    Karger's algorithm is the following two-step process.
    \begin{enumerate}[leftmargin=*]
    	\item Find $O(\log n)$ spanning trees of $G$ such that w.h.p., the minimum cut $2$-respects at least one of them
    	\item Find, for each of the aforementioned spanning trees, the minimum $2$-respecting cut in $G$
    \end{enumerate}
	Karger solves the first step using a combination of random sampling and \emph{tree packing}. Given a weighted graph $G$, a tree packing of $G$ is a set of weighted spanning trees of $G$ such that for each edge in $G$, its total weight across
  all of the spanning trees is no more than its weight in $G$. The underlying tree packing algorithms used by Karger have running time proportional to the size of the minimum cut, so random sampling is first used to produce a sparsified graph, or \emph{skeleton}, where the minimum cut has size $\Theta(\log n)$ w.h.p. The sampling process is carefully crafted such that the resulting tree packing still has the desired property w.h.p.
  
	Given the skeleton graph, Karger gives two algorithms for producing tree packings such that sampling $\Theta(\log n)$ trees from them guarantees that, w.h.p., the minimum cut $2$-respects one of them. The first approach uses a tree packing algorithm of Gabow~\cite{gabow1995matroid}. The second is based on the packing algorithm of Plotkin et al.~\cite{plotkin1995fast}, and is much more amenable to parallelism. It works by performing $O(\log^2 n)$ minimum spanning tree computations. In total, Step 1 of the algorithm takes $O(m + n\log^3 n)$ time.
	
	For the second step, Karger develops an algorithm to find, given a graph $G$ and a spanning tree $T$, the minimum cut of $G$ that $2$-respects $T$. The algorithm works by arbitrarily rooting the tree, and considering two cases: when the two cut edges are on the same root-to-leaf path, and when they are not. Both cases use a similar technique; They consider each edge $e$ in the tree and try to find the best matching $e'$ to minimize the weight of the cut induced by the edges $\{ e, e' \}$. This is achieved by using a dynamic tree data structure to maintain, for each candidate $e'$, the value that the cut would have if $e'$ were selected as the second cutting edge, while iterating over the possibilities of $e$ and updating the dynamic tree. Karger shows that this step can be implemented sequentially in $O(m\log^2 n)$ time, which results in a total runtime of $O(m \log^3 n)$ when applied to the $O(\log n)$ spanning trees.


\newcommand{\simpleRC}{simple RC}
\newcommand{\la}{\leftarrow}
\newcommand{\addPath}{\textproc{AddPath}}
\newcommand{\querySubtree}{\textproc{QuerySubtree}}
\newcommand{\queryPath}{\textproc{QueryPath}}
\newcommand{\queryEdge}{\textproc{QueryEdge}}

\section{Batched Mixed Operations on Trees}\label{sec:batched-mixed-ops}

The batched mixed operation problem is to take an off-line sequence of mixed operations on a data structure, usually a mix of queries and updates, and process them as a batch.  The primary reason for batch processing is to allow for parallelism on what would otherwise be a sequential execution of the operations.   We use the term \emph{operation-set} to refer to the set of operations that can be applied among the mixed operations. We are interested in operations on trees, and our results apply to operation-sets that can be implemented on an RC tree in a particular way, defined as follows.

\begin{definition}
An implementation of an operation-set on trees is a \emph{\simpleRC{} implementation} if it uses an RC representation
of the trees and satisfies the following conditions.
\begin{enumerate}[leftmargin=*]
\setlength\itemsep{0em}
\item 
The implementation maintains a value at every RC cluster that can be calculated in constant time from the values of the children of the cluster,
\item 
every query operation is implemented by traversing from a leaf to the root examining values at the visited clusters and their children taking contant time per value examined, and using constant space, and
\item
every update operation involves updating the value of a leaf using an associative constant-time operation, and then reevaluating the values on each cluster on the path from the leaf to the root.
\end{enumerate}
\end{definition} 
\noindent Note that every operation has an \emph{associated leaf} (either an edge or vertex).  Also note that setting (i.e.,\ overwriting) a value is an associative operation (just return the second of the arguments).  For \simpleRC{} implementations, all operations take time (work) proportional to the depth of the RC tree since they only follow a path to the root taking constant time at each cluster.  Although the \simpleRC{} restriction may seem contrived, most operations on trees studied in previous work~\cite{sleator1983data,alstrup2005maintaining,acar2005experimental} can be implemented in this form, including most path and subtree operations. This is because of a useful property of RC trees, that all paths and subtrees in the source tree can be decomposed into clusters that are children of a single path in the RC tree, and typically operations need just update or collect a contribution from each such cluster.

\myparagraph{Example}
As an example, consider the following two operations on a rooted tree (the first an update, and the second a query): 
\begin{itemize}[leftmargin=*]
\setlength\itemsep{0em}
\item \textproc{addWeight}$(v, w)$ : adds weight $w$ to a vertex $v$ 
\item \textproc{subtreeSum}$(v)$ : returns the sum of the weights of all of the vertices in the subtree rooted at $v$
\end{itemize}

  \begin{algorithm}[H]
    \begin{algorithmic}[1]
    \footnotesize
\Procedure{subtreeSum}{$v$ : \textbf{vertex}}
  \State $w \la 0$
  \State $x \la v$;  $p \la \ra{x}{p}$
  \While{$p$ \textbf{is a} binary cluster}
     \If{$(x = \ra{p}{t})$ or $(x = \ra{p}{v})$\label{line:subsumif}}
          \State $w  \la  w + \ra{p}{\ra{b}{w}} + \ra{p}{\ra{v}{w}} + \sum_{u \in \raa{p}{U}} \ra{u}{w}$ 
     \EndIf
     \State $x \la p;$ $p \la \ra{x}{p}$
  \EndWhile
  \State return $w + \ra{p}{\ra{v}{w}} + \sum_{u \in \raa{p}{U}} \ra{u}{w} $
\EndProcedure
\end{algorithmic}
\caption{The \textproc{subtreeSum} query.}
\label{alg:subsum}
\end{algorithm}

\noindent These operations can use a \simpleRC{} implementation by keeping as the value of each cluster the sum of values of all its children.  This satisfies the first condition since the sums take constant time. Single-edge clusters in the RC tree start with the initial weight of the edge, while single-vertex clusters start with zero weight.  An \textproc{addWeight}$(v,w)$ adds weight $w$ to the vertex $v$ (which is a leaf in the RC tree) and updates the sums up to the root cluster.  This satisfies the third condition since addition is associative and takes constant time.  The query can be implemented as in Algorithm~\ref{alg:subsum}, where $\ra{x}{w}$ is the weight stored on the cluster $x$.
It starts at the leaf for $v$ and goes up the RC tree keeping track of the total weight underneath $v$. Note that $x$ will never be a unary cluster, so if not the representative or top subcluster of $p$, it is the bottom subcluster with nothing below it in this cluster. Observe that \textproc{subtreeSum} only examines values on a path from the
start vertex to the root and the children along that path.   Each step takes constant time and requires constant space,
satisfying the second condition.   The operation-set therefore has a \simpleRC{} implementation.

\begin{figure*}[t]
\centering
\includegraphics[width=0.8\textwidth]{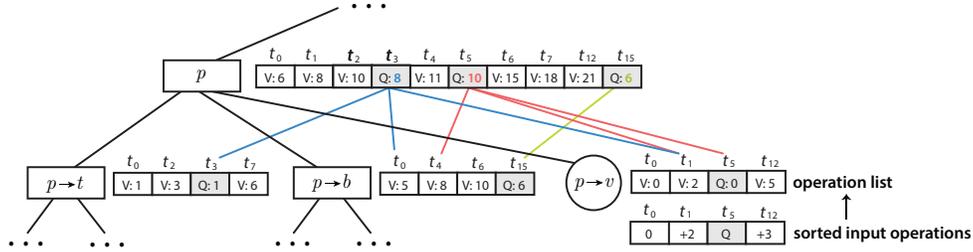}
\caption{Merging the operation lists for a binary cluster consisting of \textproc{addWeight} and \textproc{subtreeSum} operations. Values in the operation sequence, denoted V : $v$, are computed by aggregating the latest values of the children at the given timestamp. For example, at $t_6$ in $p$, the algorithm adds $3$ from $\ra{p}{t}$ at $t_2$, $10$ from $\ra{p}{b}$ at $t_6$, and $2$ from $\ra{p}{v}$ at $t_1$. Queries, denoted Q : $q$, are updated at each level by using the latest values of the children. For example, to update the query at $t_3$, it takes the current value of $1$ from $\ra{p}{t}$ at $t_3$, then adds the weight of $5$ from $\ra{p}{b}$ at $t_0$, and the weight of $2$ from $\ra{p}{v}$ at $t_1$, as per Algorithm~\ref{alg:subsum}.}\label{fig:merging-operations}
\end{figure*}

\subsection{Batched mixed operations algorithm}

We are interested in evaluating batches of operations from an operation-set on trees with a \simpleRC{} implementation. In particular, we prove Theorem~\ref{theorem:batch}.

\begin{proofsketch}[Proof sketch of Theorem~\ref{theorem:batch}]

The preprocessing just builds an RC tree on the source tree, and sets the values for each cluster based on the initial
values on the leaves.   This can be implemented with the Miller-Reif algorithm~\cite{miller1989parallel}, in the binary forking model~\cite{blelloch2020optimal},
or deterministically~\cite{gazit1988optimal}.    All take linear work and logarithmic depth (w.h.p for the randomized versions).
Our algorithm for each batch is then implemented as follows:

\begin{enumerate}[leftmargin=*]
\setlength\itemsep{0em}
\item Timestamp the operations by their order in the sequence.
\item Collect all operations by their associated leaf, and sort within each leaf by timestamp.
This can be implemented with a single sort on the leaf identifier and timestamp.
\item For each leaf use a prefix sum on the update values to calculate the value of the leaf after each operation, starting from the initial value on the leaf.  
\item Initialize each query using the value it received from the prefix sum.
We now have a list of operations on each leaf sorted by timestamp.   For each update we have its value, and for each query we also have its partial evaluation based on the value. We prepend the initial value to the list, and call this the \emph{operation list}.
An operation list is \emph{non-trivial} if it has more than just the initial value.
\item For each level of the RC tree starting one above the deepest, and in parallel
for every cluster on the level for which at least one child has a non-trivial operation list:
\begin{enumerate}[leftmargin=*]
\item Merge the operation lists from each child into a single list sorted by timestamp.
\item Calculate for each element in the merged operations list, the latest value of each child at or before the timestamp.   This can be implemented by prefix sums. 
\item For each list element, calculate the value at that timestamp from the child values collected in the previous step.
\item For queries, use the values and/or child values to update the query.
\end{enumerate}
\end{enumerate}
This algorithm needs to have children with non-trivial operation lists identify parents that need to be processed.  This can be implemented by keeping a list of all the clusters at a level with non-trivial operation lists left-to-right in level order.   When moving up a level, clusters that share the same parent can be combined. An illustration of the merging process is depicted in Figure~\ref{fig:merging-operations} using the operations from Algorithm~\ref{alg:subsum}.

We first consider why the algorithm is correct.  We assume by structural induction (over subtrees) that the operation lists contain the correct values for each timestamped operation in the list.  This is true at the leaves since we apply a prefix sum across the associative operation to calculate the value at each update.  For internal clusters, assuming the child clusters have correct operation lists (values for each timestamp valid until the next timestamp, and partial result of queries), we properly determine the operation lists for the cluster.  In particular for all timestamps that appear in children we promote them to the parent, and for each we calculate the value based on the current value, by timestamp, for each child.

We now consider the costs.  The cost of the batch before processing the levels is dominated by the sort which takes $O(k \log k)$ work and $O(\log k)$ depth.  The cost at each level is then dominated by the merging and prefix sums which take $O(k)$ work and $O(\log k)$ depth accumulated across all clusters that have a child with a non-trivial operation list.  If the RC tree has depth $O(\log n)$ then across all levels the cost is bounded by $O(k \log n)$ work and $O(\log n\log k )$ depth.  The total work and depth is therefore as stated.  The space for each batch of size $k$ is bounded by the size of the RC tree which is $O(n)$ and the total space of the operation lists at any two adjacent levels, which is $O(k)$.
\end{proofsketch}


\subsection{Path updates and path/subtree queries}

We now consider implementing mixed operations consisting of updating paths, and querying both paths and subtrees.
We will use these in Sections~\ref{sec:improvements} and~\ref{sec:two-respect}.     
In particular
we wish to maintain, given a weighted rooted tree $T = (V,E)$, a data structure that supports the following operations.

\begin{itemize}[leftmargin=*]
  \setlength\itemsep{0em}
  \item \addPath($u, v, w$):  For $u,v \in V$ adds $w$ to the weight of all edges on
  the $u$ to $v$ path.
  \item \querySubtree($v$): Returns the lightest weight 
    of an edge in the subtree rooted at $v \in V$,
  \item \queryPath($u,v$): For $u, v \in V$, returns the lightest weight 
    of an edge on the $u$ to $v$ path.
  \item \queryEdge($e$): Returns $w(e)$
\end{itemize}

\noindent To implement these, we first implement the simpler operations \addPath'($v, w$), which adds weight $w$ to the path from $v$ to the root; and \queryPath'($u,v$), which requires that $v$ be the representative vertex of an ancestor of $u$ in the RC tree. The more general forms can be implemented in terms of these with a constant number of calls given the lowest common ancestor (LCA) in the original tree for
\addPath{} and in the RC tree for \queryPath{}.

\begin{figure}[t]
  \centering
  \includegraphics[width=0.4\columnwidth]{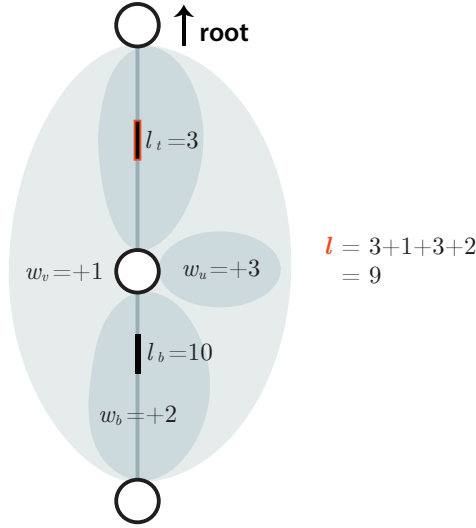}
  \caption{When a binary cluster joins its children, all \textproc{addPath}s' that originated in the vertex, bottom, or unary subclusters will affect all of the edges in the top cluster path. Here, $w' = w_v + w_b + w_u = 6$ weight is added to edges on the top cluster path due to \textproc{addPath} operations from below.}\label{fig:addpath}
\end{figure}

\begin{lemma}\label{lem:batch-add-path-subtree-query}
\label{lemma:path}
The \addPath', \querySubtree, \queryPath', and \queryEdge{} operations on bounded degree trees can be supported with a \simpleRC{} implementation.
\end{lemma}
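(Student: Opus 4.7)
\begin{proofsketch}
The plan is to exhibit concrete per-cluster values and update/query rules for the four operations and to verify each of the three \simpleRC{} conditions. At every cluster $C$ I would store (i) $S(C)$, the sum of the accumulated \addPath'-weights $W_v$ over the vertex-leaves $v\in C$; (ii) for binary $C$, the internal cluster-path minimum $P(C)$; (iii) the internal non-path minimum $N(C)$; and (iv) $M(C) = \min(P(C), N(C))$. I call these minima \emph{internal} because they reflect the effect of every \addPath' whose source lies in $C$ but not of those whose source lies outside $C$.

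The key structural fact is that when a binary cluster $C$ is assembled from representative $r$, top $C.t$, bottom $C.b$, and unary children $U$, the vertices of $C$ lying at or below $r$ in $T$ (namely $r$ together with $C.b$ and $U$) contribute their full $S$-values uniformly to every edge on $C.t$'s cluster path, while no vertex of $C$ lies below any edge on $C.b$'s cluster path. Letting $\Delta_t = S(r) + S(C.b) + \sum_{u\in U} S(u)$, this yields the constant-time recurrences
\begin{align*}
S(C) &= S(r) + S(C.t) + S(C.b) + \textstyle\sum_u S(u),\\
P(C) &= \min\bigl(P(C.t) + \Delta_t,\ P(C.b)\bigr),\\
N(C) &= \min\bigl(N(C.t),\ N(C.b),\ \min_u M(u)\bigr),
\end{align*}
with the analogous form (dropping the $C.b$ terms) for unary clusters, giving condition (i). The only update, \addPath'$(v,w)$, acts on the vertex-leaf $v$ as $W_v \gets W_v + w$, an associative constant-time operation, giving condition (iii).

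For condition (ii) each query traverses from its associated leaf to the root of the RC tree with $O(1)$ working state, examining only the current cluster and its children. \queryEdge$(e)$ walks from the leaf of $e$ and at each level adds, by a constant-time case analysis on $e$'s position, exactly the $S$-weight of the subclusters newly exposed as lying below $e$. \queryPath'$(u,v)$, with $v$ the representative of an RC-ancestor $A$ of $u$, combines the $P$-values of the $O(\log n)$ sibling binary clusters making up the standard RC path decomposition of $u$-to-$v$ encountered along the walk toward $A$. \querySubtree$(v)$ uses the analogous subtree decomposition, combining the $M$-values of the identified pieces by the same pattern as Algorithm~\ref{alg:subsum}.

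The main obstacle is that $P(C')$ and $M(C')$ record only internal contributions, whereas the queries must produce true current-weight minima that additionally reflect \addPath' sources lying outside $C'$. I resolve this using the same uniform-contribution property that drives the recurrences: the total external contribution to every edge on $C'$'s cluster path is a single scalar equal to the sum of the $W$-values of vertices of $T$ strictly below $C'$'s lower boundary. I would maintain this scalar as a running accumulator along each query walk, updating it in $O(1)$ per level from the $S$-values of the appropriate siblings exposed on the way up, and add the current accumulator value to each $P(C')$ or $M(C')$ before taking the minimum. This preserves the $O(1)$ work and space per cluster and the leaf-to-root examination pattern, completing the \simpleRC{} implementation.
\end{proofsketch}
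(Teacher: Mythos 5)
Your stored quantities $(S,P,N)$ and the recurrences for $f_{\mbox{\small binary}}$/$f_{\mbox{\small unary}}$ are exactly the paper's $(w,l,m)$ and its combining rules, and your treatment of \queryEdge{} and \querySubtree{} (accumulate $S$-contributions from the newly-exposed below-subclusters, combine on/off-path minima) is the paper's argument, so the overall route is the same. The one place your sketch substitutes a slogan for the actual difficulty is \queryPath'. There is no ``standard RC path decomposition'' of an arbitrary $u$-to-$v$ path: as the walk ascends from $u$, each binary cluster it passes through contributes only a \emph{portion} of its cluster path to the $u$-$v$ path---from the entry vertex $c$ nearest $u$ toward one of the two cluster boundaries---and which portion is correct is not determined until the walk reaches the ancestor whose representative is $v$. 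Consequently a single ``$P$-value of the sibling'' is not the right object to combine; the paper must carry \emph{two} candidate partial minima, $t$ (best edge from $c$ toward the upper boundary) and $b$ (toward the lower boundary), in addition to $m$ (best edge from $u$ to $c$), and only select between $t$ and $b$ once it sees on which side $v$ sits, then continue upward past that ancestor to pick up \addPath'{} contributions from outside it. Your ``single scalar accumulator added to $P(C')$'' also does not quite work as stated, since the external contribution relevant to a cluster-path segment depends on vertices lying \emph{below} that segment, which the walk has not yet visited when it first meets the sibling; the paper avoids this timing problem by folding $w'$ into the running $t,b$ values level by level rather than correcting the siblings' stored minima. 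With the $m/t/b$ bookkeeping in place of the path-decomposition claim, your sketch matches the paper's proof.
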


  \begin{algorithm}
    \footnotesize
    \begin{algorithmic}[1]
      \State \textbf{using} \textbf{VertexV} = \textbf{int}
      \State \textbf{using} \textbf{UnaryV} = \textbf{struct} \{ $m$ : \textbf{edge}, $w$ : \textbf{int} \}
      \State \textbf{using} \textbf{BinaryV} = \textbf{struct} \{ $m$ : \textbf{edge}, $l$ : \textbf{edge}, $w$ : \textbf{int} \}\medskip
      \Procedure{$f_{\mbox{\small unary}}$}{$w_v$ : \textbf{VertexV}, $(m_t, l_t, w_t)$ : \textbf{BinaryV}, $U$ : \textbf{UnaryV list}}
          \State $w' \la w_v + \sum_{u \in U} \ra{u}{w}$
          \State $m_u \la \min_{u \in U} \ra{u}{m}$
          \State \algreturn{}  \{ $\min(m_t, l_t + w', m_u),  w_t + w'$ \}
      \EndProcedure 
      \Procedure{$f_{\mbox{\small binary}}$}{$w_v$ : \textbf{VertexV}, $(m_t, l_t, w_t)$ : \textbf{BinaryV}, $(m_b, l_b, w_b)$ : \textbf{BinaryV}, $U$ : \textbf{UnaryV list}}
          \State $w' \la w_v + w_b + \sum_{u \in U} \ra{u}{w}$
          \State $m_u \la \min_{u \in U} \ra{u}{m}$
          \State \algreturn{}  \{ $\min(m_t, m_b, m_u), \min(l_t + w', l_b), w_t + w'$ \}
      \EndProcedure
\Procedure{\addPath'}{$v$ : \textbf{vertex}, $w$ : \textbf{int}}
   \State $\ra{v}{\text{value}}$ \algassign{} $\ra{v}{\text{value}}$ + $w$
   \State Reevaluate the $f(\cdot)$ on path to root.
\EndProcedure
    \end{algorithmic}
    \caption{A \simpleRC{} implementation of \addPath{}'.}
    \label{alg:pathupdate}
  \end{algorithm}

\begin{algorithm}
\footnotesize
    \begin{algorithmic}[1]
\Procedure{\querySubtree}{$v$ : \textbf{vertex}}
  \State $w \la \infty$; $l \la \infty$
  \State $x \la v$;  $p \la \ra{x}{p}$
  \While{$p$ \textbf{is a} binary cluster}
    \If{$(x = \ra{p}{t})$ or $(x = \ra{p}{v})$}
      \State $w' \la \ra{\ra{p}{b}}{w} + \ra{\ra{p}{v}}{w} + \sum_{u \in \raa{p}{U}} \ra{u}{w}$
      \State $l \la \min(l + w', \ra{\ra{p}{b}}{l})$ \label{line:sublazy}
      \State $m \la \min(m, \ra{\ra{p}{b}}{m}, \min_{u \in \raa{p}{U}} \ra{u}{m})$ \label{line:submin}
    \EndIf
    \State $x \la p$; $p \la \ra{x}{p}$
  \EndWhile
  \State $w' \la \ra{\ra{p}{v}}{w} + \sum_{u \in \raa{p}{U}} \ra{u}{w}$
  \State return $\min(l + w', m, \min_{u \in \raa{p}{U}} \ra{u}{m})$
\EndProcedure

\Procedure{\queryEdge}{$e$ : \textbf{edge}}
  \State $w \la w(e)$
  \State $x \la e$; $p \la \ra{x}{p}$
  \While{$p$ \textbf{is a} binary cluster}
    \If{$x = \ra{p}{t}$}
      \State $w \la w + \ra{\ra{p}{b}}{w} + \ra{\ra{p}{v}}{w} + \sum_{u \in \raa{p}{U}} \ra{u}{w}$
    \EndIf
    \State $x \la p$; $p \la \ra{x}{p}$
  \EndWhile
  \State return $w + \ra{\ra{p}{v}}{w} + \sum_{u \in \raa{p}{U}} \ra{u}{w}$
\EndProcedure

\Procedure{\queryPath'}{$u$ : \textbf{vertex}, $v$ : \textbf{vertex}}
  \State $m \la \infty$; $t \la \infty$; $b \la \infty$
  \State $x \la u$;  $p \la \ra{x}{p}$
  \While{not $\ra{p}{v} = v$}\label{alg:querypathloopstart}
    \State $w' \la \ra{\ra{p}{v}}{w} + \sum_{u \in \raa{p}{U}} \ra{u}{w}$
    \If{$p$ \textbf{is a} unary cluster}
       \If{$x = \ra{p}{t}$} $m \la \min(t + w', m)$
       \Else~$m \la\min(\ra{\ra{p}{t}}{l} + w', m)$
       \EndIf
       \State $t \la \infty$; $b \la \infty$\label{alg:querypath-unary}
     \Else
       \State $w' \la w' + \ra{\ra{p}{b}}{w}$
       \If{$x = \ra{p}{t}$} $t \la t + w'$; $b \la \min(b + w', \ra{\ra{p}{b}}{l})$
       \ElsIf{$x = \ra{p}{b}$} $t \la \min(\ra{\ra{p}{t}}{l} + w', t)$
       \Else~$t \la \ra{\ra{p}{t}}{l} + w'$; $b \la \ra{\ra{p}{b}}{l}$
       \EndIf
     \EndIf
     \State $x \la p$; $p \la \ra{x}{p}$
  \EndWhile\label{alg:querypathloopend}
  \If{$x = \ra{p}{t}$} $l \la b$
  \ElsIf{$x = \ra{p}{b}$} $l \la t$
  \Else\ \textbf{return} $m$
  \EndIf
  \While{$p$ \textbf{is a} binary cluster}
    \State $w' \la \ra{\ra{p}{v}}{w} + \ra{\ra{p}{b}}{w} + \sum_{u \in \raa{p}{U}} \ra{u}{w}$
    \If{$(x = \ra{p}{t})$} $l \la l + w'$
    \EndIf
  \EndWhile        
  \State \textbf{return} $\min(m,l)$
\EndProcedure
    \end{algorithmic}
    \caption{A \simpleRC{} implementation of \queryEdge{}, \queryPath', and \querySubtree.}
    \label{alg:pathupdate2}
  \end{algorithm}

\begin{proofsketch}
  Our \simpleRC{} implementation for combining values and \addPath' is given in Algorithm~\ref{alg:pathupdate}. The queries are given in Algorithm~\ref{alg:pathupdate2}. The value of each vertex (leaf) in the cluster is the total weight added to that vertex by \addPath'.  The value for each unary cluster consists of: $m$, the minimum weight edge in the cluster; and $w$, the total weight of \addPath{}s' originating in the cluster.  For each binary cluster we separate the minimum weights on and off the cluster path.  In particular, the value of each binary cluster consists of: $m$, the minimum weight edge not on the cluster path; $l$, the minimum edge on the cluster path due to all \addPath{}' originating in the cluster; and $w$, the total weight of \addPath{}s' originating in the cluster.  The $f_{\mbox{\small binary}}$ and $f_{\mbox{\small unary}}$ calculate the values for unary and binary clusters from the values of their children.  We initialize each vertex with zero, and each edge $e$ with $(m = 0, l = w(e), w = 0)$.

  It is a \simpleRC{} implementation since (1) the $f(\cdot)$ can be computed in constant time, (2) the queries just traverse from a leaf on a path to the root (possibly ending early) only examining child values, taking constant time per level and constant space, and (3) the update just sets a leaf using an associative addition, and reevaluates the values to the root.
  
  We argue the implementation is correct.  Firstly we argue by structural induction on the RC tree that the values as described in the previous paragraph are maintained correctly by $f_{\mbox{\small binary}}$ and $f_{\mbox{\small unary}}$.  In particular assuming the children are correct we show the parent is correct.  The values are correct for leaves since we increment the value on vertices with \addPath', and initialize the edges appropriately.  To calculate the minimum edge weight of a unary cluster $f_{\mbox{\small unary}}$ takes the minimum of three quantities: the minimum off-path edge of the child binary cluster, the overall minimum edge of any of the child unary clusters, and, importantly, the minimum edge on the cluster path of the child binary cluster plus the \addPath{}' weight contributed by the unary clusters and the representative vertex (i.e., $\min(m_t, l_t + w', m_u)$).  This is correct since all paths from those clusters to the root go through the cluster path, so it needs to be adjusted.   The off-path edges and child unary clusters do not need to be adjusted since no path from the representative vertex goes through them.    The minimum weight is therefore correct.   The total \addPath{}' weight is correct since it just adds the contributions.
  
  For binary clusters we need to separately consider the minimum off- and on-path edges.     For the off-path edges
  the parts that are off the cluster path are the off-path edges from the two binary children, plus all edges from the unary children (i.e., $\min(m_t,m_b,m_u)$).    For the on-path edges both the top and bottom binary
  clusters contribute their on-path edges.    The on-path edges from the bottom binary cluster do not need to be adjusted because
  no vertices in the cluster are below them.    The on-path edges from the top binary cluster need to be adjusted by the
  \addPath{}' weights from all vertices
  in the bottom cluster, all vertices in unary child clusters, and the representative vertex since they are all below the path (this sum is given by $w'$). See Figure~\ref{fig:addpath}. The minimum of the resulted adjusted top edge and bottom edge is
  then returned, which is indeed the minimum edge on the path accounting for \addPath{}s' on vertices in the cluster.
  
  \querySubtree{}$(v)$ accumulates the appropriate minimum weights within a subtree as it goes up the RC tree. It starts at the node for which $v$ is its representative vertex. As with the calculation of values it needs to separate the on-path and off-path minimum weight.  Whenever coming as the upper binary cluster to the parent, \querySubtree{} needs to add all the contributing \addPath{}' weights from vertices below it in the parent cluster (the representative vertex, the lower binary cluster and the unary clusters, see Figure~\ref{fig:addpath}) to the current minimum on-path weight.  A minimum is then taken with the lower on-path minimum edge to calculate the new minimum on-path edge weight (Line~\ref{line:sublazy}). The off-path minimum is the minimum of the current off-path minimum, the minimum off-path edge of the bottom cluster and the minimums of the unary clusters (Line~\ref{line:submin}).  Once we reach a unary cluster we are done since for a unary
  cluster all subtrees of vertices within the cluster are fully contained within the cluster.    The final line therefore just determines the overal minimum for the subtree rooted at $v$ by considering the on-path edges adjusted by \addPath{}' contributions, the off-path edges, and all edges in child unary clusters.
  
  \queryEdge{}$(e)$ simply adds the total weight of all \addPath{}' operations that occurred beneath $e$ to the weight of $e$. Specifically, at each iteration of the loop, $w$ contains the $w(e)$ plus the total weight of all \addPath{}' operations originating at any vertex below $e$ that is contained in the current cluster $x$. As the query moves up the RC tree, if the parent cluster is a binary cluster and $x$ is its top subcluster, then the vertices not yet accounted for are those in the bottom subcluster, the representative vertex, and the unary subclusters. If $x$ is the bottom subcluster of its binary parent, or one of its unary subclusters, then no vertices in $p$ but not $x$ are below $e$. When the while loop terminates, $p$ is a unary cluster and $x$ is its binary subcluster. At this point, the representative of $p$, and all unary subclusters of $p$ are below $e$, and hence their weight is added to the total. Since $p$ is a unary cluster, there exists no additional vertices below $e$ in the tree, and hence the final weight contains the contributions of all \addPath{}' operations originating below $e$.
  
  Lastly, \queryPath{}' works by maintaining three values, $m, t, b$. To make defining them easier, consider, at each iteration of the main loop (Lines~\ref{alg:querypathloopstart}--\ref{alg:querypathloopend}) in which the current cluster $x$ is a binary cluster, the vertex $c$ which is the closest vertex to $u$ on the cluster path of $x$ (if $u$ is on the cluster path of $x$, say $c = u$). Then, we can define $m$ as the minimum weight edge on the path from $u$ to $c$ (which will be $\infty$ if $u$ is on the cluster path of $x$), $t$ as the minimum weight edge above $c$ on the cluster path of $x$, and $b$ as the minimum weight edge below $c$ on the cluster path of $x$. If $x$ is a unary cluster, then $t$ and $b$ are $\infty$ (undefined), and $m$ is simply the minimum weight edge on the path from $u$ to the boundary of $x$. Observe that it is important for the algorithm to maintain both $t$ and $b$ because it does not know in advance whether $v$ is above or below the current cluster path. It remains to argue that the implementation correctly maintains these values, and that the postprocessing is correct.
  
  Each time the algorithm moves up to the next highest cluster, it first computes $w'$, the total weight of all \addPath{}' operations originating below the representative vertex. If the cluster is a unary cluster, and $u$ originated from the top (binary) subcluster, then the path from $u$ to the boundary of $p$ consists of the previous path from $u$ to $c$ (the lightest edge on which is $m$), and the path from $c$ to the boundary of $p$ (the lightest edge on which is $t$). Since $w'$ weight has been added to all edges on the path from $c$ to the boundary of $p$, the lightest such edge is now $t + w'$ and hence the lightest edge on the path from $u$ to the boundary of $p$ is $\min(t + w', m)$. If $u$ did not originate in the top subcluster of $p$, it came from one of the unary subclusters. In this case, the path from $u$ to the boundary of $p$ consists of the path from $u$ to the boundary of $x$, and the cluster path of the top subcluster (which begins at the boundary of $x$ and ends at the boundary of $p$), and hence the lightest edge is $\min(p.t.l + w', m)$. Since the current cluster is a unary cluster, $t$ and $b$ are undefined (Line~\ref{alg:querypath-unary}).
  
  If the next cluster is a binary cluster, we reason as follows. If $u$ originated in the top subcluster, then the path from $c$ to the top boundary remains the same, but $w'$ weight is added to every edge (including $t$). The cluster path below $c$ now consists of the edges previously below $c$ to the bottom boundary of $x$, and additionally those on the cluster path of the bottom subcluster (the edges from the bottom boundary of $x$ to the bottom boundary of $p$). The edges below $c$ on the cluster path of the top subcluster (including $b$) have had their weight increased by $w'$, and hence the lightest edge on the path from $c$ to the bottom boundary of $p$ is now $\min(b + w', p.b.l)$. Similarly, if $u$ originated in the bottom subcluster, then the path from $c$ to the bottom boundary hasn't changed, so $b$ is unchanged, and no weight is added to the edge $t$. However, since the path from $c$ to the top boundary of $p$ now includes the cluster path of the top subcluster, the lightest edge from $c$ to the top boundary is now $\min(p.t.l + w', t)$. Otherwise, $u$ must have originated from a unary subcluster of $p$, and hence the cluster path of $p$ contains no edges from $x$, so $t$ is simply the lightest edge in the top subcluster, and $b$ is the lightest edge in the bottom subcluster.
  
  Once the main loop terminates (Lines~\ref{alg:querypathloopstart}--\ref{alg:querypathloopend}), by the loop condition, it must be because the current cluster $x$ has $v$ as a boundary. If $u$ originated in the top subcluster of the latest $p$, then $v$ must be the bottom boundary of $p$, and hence the path from $u$ to $v$ consists of the path from $u$ to $c$ and the path from $c$ to $v$ which goes towards the bottom boundary of $p$ and hence contains $b$. Conversely, if $u$ originated in the bottom subcluster of $p$, then the path from $u$ to $v$ goes towards the top boundary of $p$ and hence contains $t$. If $u$ originated in a unary subcluster, then the path from $u$ to $v$ just joins $u$ to the boundary of $x$, hence the lightest edge is $m$. If not, the lightest edge is either $m$, or $b$ or $t$ respectively. The weight of $b$ or $t$ might still be affected by \addPath{}' operations from below, so the total weight of such operations is accumulated by continuing up the RC tree and added to determine the final weight.
\end{proofsketch}

\begin{corollary} 
\label{corollary:batch}
Given a bounded-degree tree of size $n$, any sequence of $k$ \addPath, \querySubtree, \queryPath, and \queryEdge{} operations can be evaluated in $O(n + k \log (nk))$ work, $O(\log n \log k)$ depth and $O(n + k)$ space.  
\end{corollary}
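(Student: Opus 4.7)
The plan is to combine Lemma~\ref{lemma:path}, which gives a \simpleRC{} implementation of \addPath', \querySubtree, \queryPath', and \queryEdge, with Theorem~\ref{theorem:batch}, which processes batches of \simpleRC{} operations efficiently. What remains is to reduce the full \addPath{} and \queryPath{} to their primed forms via lowest-common-ancestor (LCA) queries and to account for the added cost.

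I would begin with preprocessing: build the RC tree and initialize its cluster values as in Lemma~\ref{lemma:path}, in $O(n)$ work and $O(\log n)$ depth; in parallel, preprocess both the original tree and the RC tree for constant-time LCA queries using a standard parallel Euler-tour plus range-minimum construction, also within $O(n)$ work and $O(\log n)$ depth.

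For a given batch of $k$ operations I would convert the inputs into at most $3k$ primed operations. An \addPath$(u,v,w)$ is replaced by \addPath$'(u,w)$, \addPath$'(v,w)$, and \addPath$'(\ell,-2w)$, where $\ell = \mathrm{LCA}(u,v)$ in the original tree; the $-2w$ contribution above $\ell$ cancels the double count and leaves exactly $w$ added on every edge of the $u$--$v$ path. A \queryPath$(u,v)$ is answered by $\min(\queryPath'(u,c), \queryPath'(v,c))$, where $c$ is the representative of the RC-tree LCA of the leaves of $u$ and $v$. \querySubtree{} and \queryEdge{} pass through unchanged. This conversion uses $O(k)$ work and $O(\log k)$ depth. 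I would then invoke Theorem~\ref{theorem:batch} on the resulting batch, obtaining $O(k \log(nk))$ work, $O(\log n \log k)$ depth, and $O(n+k)$ space, and finish with an $O(k)$-work, $O(1)$-depth pass that takes the minimum of the paired \queryPath$'$ answers. Summing gives the claimed $O(n + k \log(nk))$ work, $O(\log n \log k)$ depth, and $O(n+k)$ space.

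The main obstacle is justifying correctness of the \queryPath{} reduction: I must show that $c$ lies on the original $u$--$v$ path and is the representative of an RC-tree ancestor of both $u$ and $v$, so that both \queryPath$'$ calls satisfy their ancestor precondition. This relies on the structural property, already noted in the preliminaries, that the subclusters of any RC cluster share exactly one boundary vertex---its representative. By minimality of the LCA, $u$ and $v$ lie in distinct subclusters of that cluster, so $c$ separates them in the original tree and every $u$--$v$ path passes through $c$; the path therefore decomposes into the two subpaths handled by the two \queryPath$'$ calls, and the remaining cost accounting is a routine composition of the earlier results.
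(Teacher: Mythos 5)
Your proposal is correct and matches the paper's own (terser) proof: preprocess LCAs in the input tree and the RC tree via Schieber–Vishkin, reduce \addPath{} and \queryPath{} to the primed forms with a constant number of calls, then invoke Theorem~\ref{theorem:batch} together with Lemma~\ref{lemma:path}. Your fleshing out of the correctness of the \queryPath{} split at the representative of the RC-tree LCA is exactly the intended justification.
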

\begin{proof}
The LCAs required to convert \addPath{} to \addPath'{} and \queryPath{} to \queryPath'{} can be computed in $O(n + m)$ work, $O(\log n)$ depth, and $O(n)$ space~\cite{schieber1988finding}.
The rest follows from Theorem~\ref{theorem:batch} and Lemma~\ref{lemma:path}.
\end{proof}

\subsection{Improving previous results}
\label{sec:improvements}

Using our batched mixed operations on trees algorithm, we can improve previous results on finding $2$-respecting cuts.  In particular we can
shave off a log factor in the work of Geissmann and Gianinazzi's parallel algorithm~\cite{geissmann2018parallel}, and we can parallelise Lovett and Sandlund's sequential algorithm~\cite{lovett2019simple}.

Geissmann and Gianinazzi find $2$-respecting cuts by first finding an $O(m)$ sequence of mixed \addPath{} and \queryPath{} operations for each of $O(\log n)$ trees.  They show how to find each sequence in $O(m \log n)$ work and $O(\log n)$ depth
On each set they then use their own data structure to evaluate the sequence in $O(m \log^2 n)$ work and $O(\log^2 n)$ depth, for a total of $O(m \log^3 n)$ work and $O(\log^2 n)$ depth across the sets.  Replacing their data structure with the result of Corollary~\ref{corollary:batch} improves their results to $O(m \log^2 n)$ work.

Lovett and Sandlund significantly simplify Karger's algorithm by first finding a heavy-light decomposition---i.e., a vertex disjoint set of paths in a tree such that every path in the tree is covered by at most $O(\log n)$ of them.   It then reduces finding the $2$-respecting cuts to a sequence of \addPath{} and \queryPath{} operations on the decomposed paths induced
by each non-tree edge, for a total of $O(m \log n)$ operations.  Using Geissmann and Gianinazzi's $O(n \log n)$ work $O(\log^2 n)$ algorithm for finding a heavy-light decomposition~\cite[Lemma~7]{geissmann2018parallel}, and the result of Corollary~\ref{corollary:batch} again gives an $O(m \log^2 n)$ work, $O(\log^2 n)$ depth algorithm.

  
  \section{Producing the Tree Packing}\label{sec:packing}
  
    We follow the general approach used by Karger to produce a set of $O(\log n)$ spanning trees such that w.h.p., the minimum cut $2$ respects at least one of them. We have to make several improvements to achieve our desired work and depth bounds. At a high level, Karger's algorithm works as follows.
    \begin{enumerate}[leftmargin=*]
      \setlength\itemsep{0em}
      \item Compute an $O(1)$-approximate minimum cut $c$
      \item Sample edges from the unweighted multigraph corresponding to the weighted graph $G$, where an edge with weight $w$ is represented as $w$ parallel edges, with probability $\Theta(\log n/c)$
      \item Use the tree packing algorithm of Plotkin~\cite{plotkin1995fast} to generate a packing of $O(\log n)$ trees
    \end{enumerate}

  \noindent In this section, we describe the tools required to parallelise this algorithm.

  \subsection{A parallel version}
    
  Step 2 is trivial to parallelize, as the sampling can be done independently in parallel. The sampling procedure produces an unweighted multigraph with $O(m \log n)$ edges, and takes $O(m \log^2 n)$ work and $O(\log n)$ depth.
  
  In Step 3, Plotkin's algorithm consists of $O(\log^2 n)$ minimum spanning tree (MST) computations on a weighting of the sampled graph, which has $O(m \log n)$ edges. Naively this would require $O(m \log^3 n)$ work, but we can use a trick of Gawrychowski et al.~\cite{gawrychowski2019minimum}. Since the sampled graph is a multigraph sampled from $m$ edges, each invocation of the MST algorithm only cares about the current lightest of each parallel edge, which can be maintained in $O(1)$ time since the weights of the selected edges change by a constant each iteration. Using Cole, Klein, and Tarjan's linear-work MST algorithm~\cite{cole1996finding} results in a total of $O(m \log^2 n)$ work in $O(\log^3 n)$ depth w.h.p.
  
  The only nontrivial part of parallelizing the tree production is actually Step 1, computing an $O(1)$-approximate minimum cut. In the sequential setting, Matula's algorithm~\cite{matula1993linear} can be used, which runs in linear time on unweighted graphs, and on weighted graphs in $O(m \log^2 n)$ time. To the best of our knowledge, the only known parallelization of Matula's algorithm is due to Karger and Motwani~\cite{karger1994derandomization}, but it takes ${O}(m^2/n)$ work. We show how to compute an approximate minimum cut in $O(m \log^2 n)$ work and $O(\log^3 n)$ depth, which allows us to prove the following.
  
  \begin{theorem}\label{thm:tree-packing}
      Given a weighted graph, in $O(m\log^2 n)$ work and $O(\log^3 n)$ depth, a set of $O(\log n)$ spanning trees can be produced such that the minimum cut 2-respects at least one of them w.h.p.
    \end{theorem}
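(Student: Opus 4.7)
The plan follows the three-step Karger template outlined just above. Steps~2 and~3 are already within budget by the discussion in this section: the independent Bernoulli sampling of each edge is embarrassingly parallel and costs $O(m\log^2 n)$ work, and Plotkin's packing reduces to $O(\log^2 n)$ MST computations on the sparsified multigraph, each of which can be carried out work-efficiently using Cole--Klein--Tarjan combined with the constant-time trick of Gawrychowski et al.\ for maintaining the lightest of each parallel edge. Thus the remaining obstacle is to show that Step~1---producing an $O(1)$-approximate minimum cut value $c$---can also be done in $O(m\log^2 n)$ work and $O(\log^3 n)$ depth.

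Matula's sequential method is not known to parallelize within our budget, and the Karger--Motwani parallelization costs $O(m^2/n)$ work, which is far too much. The plan is to design a new parallel approximate-mincut procedure based on Karger's skeleton-sampling paradigm: operate on a sparse skeleton whose minimum cut is $\Theta(\log n)$ w.h.p., so that a simple min-cut subroutine can be applied to it work-efficiently. Concretely, I would first apply a cut-preserving weight reduction that brings the maximum edge weight down to $O(m\log n)$ while distorting every cut by only a $(1+o(1))$ factor. The implicit unweighted multigraph is then too large to materialize directly, so I would draw the skeleton by sampling, for each original edge of weight $w$, a $\mathrm{Binomial}(w,p)$ number of skeleton copies using a recent fast binomial sampler, with $p$ tuned as in Karger's analysis. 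The resulting skeleton has $O(m\log n)$ edges and minimum cut $\Theta(\log n)$ w.h.p., after which a parallel approximate min-cut algorithm on low-mincut graphs (for example, a parallel analogue of Matula's contract-and-sparsify, which becomes cheap once the mincut is polylogarithmic) gives the desired $O(1)$ approximation.

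The hard part is two-fold: (i) showing that the weight-reduction transformation preserves cuts tightly enough for Karger's sampling bounds to remain valid, so that the tree packing computed downstream still $2$-respects the true minimum cut w.h.p.; and (ii) executing the binomial draws plus the subsequent approximate-min-cut computation on the skeleton within the target work and depth bounds, since naively drawing and storing $w$ copies per edge is ruinous. Once Step~1 is in place, Theorem~\ref{thm:tree-packing} follows by composition: each of Steps~1, 2, and~3 costs $O(m\log^2 n)$ work and $O(\log^3 n)$ depth, and the fact that $O(\log n)$ trees from Plotkin's packing $2$-respect the minimum cut w.h.p.\ is inherited directly from Karger's analysis, with the $(1+o(1))$ distortion from the weight reduction absorbed by slightly inflating the constants.
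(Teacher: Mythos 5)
Your plan identifies the right high-level structure (Steps~2 and~3 are fine; Step~1, the $O(1)$-approximate min cut, is the bottleneck; address it via weight reduction, binomial skeleton sampling, and a Matula-style routine on the skeleton), and these are indeed the ingredients the paper uses. But there is a genuine circularity that you do not address, and it is precisely where the paper spends its effort. Both the weight-reduction transformation and the choice of sampling probability $p = \Theta(\log n / c)$ require already knowing an approximation of the minimum cut $c$: the weight reduction contracts edges heavier than (a proxy for) $c$ and deletes edges lighter than $c/(2m\log n)$, and Karger's skeleton theorem only guarantees a $\Theta(\log n)$-size skeleton cut when $p$ is tuned to $c$. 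Saying "with $p$ tuned as in Karger's analysis" quietly assumes the quantity you are trying to compute. The paper breaks this circularity by first computing a weaker, $\log n$-approximate minimum cut, obtained by parallelizing Karger's $k$-approximation (random-edge-contraction) algorithm; this parallelization is itself nontrivial and relies on the batched mixed tree-operations framework of Section~\ref{sec:batched-mixed-ops} to simulate the sequential contraction process and track component incident weights. That $\log n$-approximation is then what makes the weight reduction well-defined and narrows the unknown $p$ to an $O(\log\log n)$-length geometric range.

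Two further omissions compound this. First, even with only $O(\log\log n)$ candidate values of $p$, constructing each skeleton independently costs $O(m\log^2 n)$ work, so naively you get $O(m\log^2 n \log\log n)$, which busts the budget; the paper's subsampling lemma (draw $G(p)$ once, then repeatedly halve by drawing $B(\cdot,1/2)$ from the previous skeleton) is what brings the total down to $O(m\log^2 n)$. Second, running a Matula-style routine directly on the skeleton is not cheap enough: the paper first computes a sparse $\Theta(\log n)$-connectivity certificate (a weighted adaptation of Cheriyan--Kao--Thurimella scan-first search) of each skeleton to cap the total weight at $O(n\log n)$ and the minimum weighted degree at $O(\log n)$, and only then runs the weighted parallel Matula, whose work is $O(dm\log(W/m))$. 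Without the certificate step your gesture at "contract-and-sparsify becomes cheap once the mincut is polylogarithmic" does not have the parameters it needs to close. So the plan is in the right direction but is missing the bootstrap $\log n$-approximation (and its dependence on the batched tree-operations machinery), the subsampling trick, and the sparse certificate, all of which are load-bearing.
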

  
  \noindent We achieve our bounds by improving Karger's algorithms and speeding up several of the components. We use the following combination of ideas, new and old.
  \begin{enumerate}[leftmargin=*]
    \setlength\itemsep{0em}
    \item We extend a $k$-approximation algorithm of Karger~\cite{karger1993global} to work in parallel, allowing us to produce a $\log n$-approximate minimum cut in low work and depth.
    \item We use a faster sampling technique for producing Karger's skeletons for weighted graphs. This is done by transforming the graph into a graph that maintains an approximate minimum cut but has edge weights each bounded by $O(m \log n)$, and then using binomial random variables to sample all of the multiedges of a particular edge at the same time, instead of separately. Subsampling is then used to sample the same graph with decreasing probabilities.
    \item We show that the parallel sparse $k$-certificate algorithm of Cheriyan, Kao, and Thurimella~\cite{cheriyan1993scan} for unweighted graphs can be modified to run on weighted graphs.
    \item We show that Karger and Motwani's parallelization of Matula's algorithm can be generalized to weighted graphs.
    \item We use the $\log n$-approximate minimum cut to allow the algorithm to make just $O(\log\log n)$ guesses of the minimum cut such that at least one of them is an $O(1)$ approximation.
  \end{enumerate}
  
  \subsection{Parallel $\log n$-approximate minimum cut}
  
  To compute an $O(1)$-approximate minimum cut, our first step is actually to compute a $\log n$-approximate minimum cut. We parallelize an algorithm of Karger for computing $k$-approximate minimum cuts that is efficient when $k = \Omega(\log n)$~\cite{karger1993global}.
  
  \myparagraph{Mixed incremental connectivity and component weight queries} The following ingredient is useful in parallelizing Karger's $k$-approximate minimum cut algorithm. We show that that the following operations have a \simpleRC{} implementation, and hence can be efficiently implemented. Given a vertex-weighted, undirected graph with given initial vertex weights, we wish to support:
    \begin{itemize}[leftmargin=*]
      \setlength\itemsep{0em}
      \item \textproc{SubtractWeight}($v$, $w$): Subtract weight $w$ from vertex $v$
      \item \textproc{JoinEdge}($e$): Mark the edge $e$ as ``joined''
      \item \textproc{QueryWeight}($v$): Return the weight of the connected component containing the vertex $v$, where the components are induced by the joined edges 
    \end{itemize}
    
    \begin{lemma}
      The \textproc{SubtractWeight}, \textproc{JoinEdge}, and \textproc{QueryWeight} operations can be supported with a simple RC implementation.
    \end{lemma}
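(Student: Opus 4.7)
The plan is to give a simple RC implementation; for concreteness I assume, as is standard in the RC framework, that we are working with the RC tree built on the underlying spanning tree of the graph (with non-tree edges handled separately by the calling algorithm). At each leaf I maintain either a vertex weight, updated by \textproc{SubtractWeight} via associative addition, or a boolean ``joined'' flag on an edge, updated by \textproc{JoinEdge} via the associative $\lor$. At each internal cluster $C$ I store, for each boundary vertex $b \in \partial C$, the weight of the component of $b$ using only joined edges within $C$, plus a bit recording whether (for binary $C$) the two boundaries are in the same such component. These values capture exactly the externally observable component structure of $C$.

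The combine step at a parent cluster $p$ is driven by the observation that, in a bounded-degree RC tree, $p$ has $O(1)$ subclusters, each sharing the representative $r = \ra{p}{v}$ as a boundary vertex. I form an auxiliary graph on the $O(1)$-sized node set $\{r\} \cup \partial p$, where each binary subcluster whose connectivity bit is set identifies its two boundaries and contributes its component weight, each non-connected binary subcluster contributes its two boundary weights separately, and each unary subcluster contributes its weight to $r$. Running a constant-size union-find on this auxiliary graph, and folding in $r$'s own leaf weight with whichever component absorbs $r$, yields the stored values of $p$ in $O(1)$ time, establishing the first condition of a simple RC implementation.

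For \textproc{QueryWeight}$(v)$ I walk from $v$'s leaf upward in the RC tree carrying a constant-size state $(w, B)$: the weight so far of $v$'s component and the subset of boundaries of the current cluster that $v$'s component touches. At the parent cluster $C_v$, in which $v$ is the representative, I initialize the state by running the combine above with $v$'s leaf weight contributing at $r$. At each higher level the state is updated by an analogous $O(1)$-sized combine in which $v$'s current component acts as an extra ``subcluster'' incident at the appropriate boundaries. I return $w$ as soon as $B$ becomes empty; termination is guaranteed at the top nullary cluster, whose boundary set is empty. Each step inspects only the child values of the current cluster in constant time and space, satisfying the second condition; and since every update modifies a single leaf via an associative operation and then triggers re-evaluation up the RC tree, the third condition is satisfied as well.

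The main obstacle is the bookkeeping in the combine: carefully tracking how $r$'s component absorbs the boundary components contributed by each subcluster touching $r$, and how $v$'s current ``touched boundary set'' maps into the node set $\{r\}\cup\partial p$ so that $v$'s component extends through $r$ into every subcluster touching $r$ and thence out to the remaining boundaries of $p$. Once the case analysis is nailed down, correctness of the stored values and query outputs follows by structural induction on the RC tree, analogous to the argument used for Lemma~\ref{lemma:path}.
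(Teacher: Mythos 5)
Your stored values and the update operations are essentially the ones used in the paper: vertices store their (decremented) weight, unary clusters store the component weight reachable from their boundary, and binary clusters store per-boundary component weights together with a connectivity bit (the paper encodes the bit implicitly by storing a single weight when the cluster is joined and a pair when it is split, which is equivalent). The combine is the same bounded-degree case analysis, just dressed up as a tiny union-find. Where you genuinely diverge is in \textproc{QueryWeight}. The paper's query inspects $v$'s parent cluster $P$, and, if some joined binary subcluster connects $v$ to a boundary $u \neq v$, \emph{restarts} the query at $u$, relying on the fact that the leaf $u$ is a child of an ancestor of $P$ so the visited clusters remain children of the root-to-$v$ path; it terminates when it lands on a cluster with no joined binary subcluster. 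You instead walk straight up the leaf-to-root path carrying the state $(w,B)$ -- the weight of $v$'s component inside the current cluster and the set of its boundaries that component reaches -- and at each parent you fold in only the contributions absorbed through the shared representative, returning when $B$ empties. This is a direct bottom-up accumulation rather than a sequence of boundary jumps. Both satisfy the simple-RC definition; your version reads more literally as a leaf-to-root traversal and avoids invoking the structural fact about boundaries being children of ancestors, at the cost of slightly more careful case bookkeeping in the combine (tracking how $B \subseteq \partial C$ maps into $\{r\}\cup\partial C'$ and noting that $v$'s component can only escape the current cluster through the representative $r$, since the other boundary of the current cluster is a boundary of the parent and hence unreachable from anything else inside it). That escape-only-through-$r$ observation is the crux of why the straight walk is sound, and is worth stating explicitly in a full write-up.
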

    
    \begin{proofsketch}
     The values stored in the RC clusters are as follows. Vertices store their weight, and unary clusters store the weight of the component reachable via joined edges from the boundary vertex. A binary cluster is either \emph{joined}, meaning that its boundary vertices are connected by joined edges, in which case it stores a single value, the weight of the component reachable via joined edges from the boundaries, otherwise it is \emph{split}, in which case it stores a pair: the weight of the component reachable via joined edges from the top boundary, and the weight of the component reachable via joined edges from the bottom boundary. We provide pseudocode for the update operations for Illustration in Algorithm~\ref{alg:component-weight}.
    
    \begin{algorithm}
      \footnotesize
        \begin{algorithmic}[1]
          \Procedure{$f_{\mbox{\small unary}}$}{$v_v, t, U$}
            \If{t = $(t_v, b_v)$}
              \algreturn{} $t_v$
            \Else\ 
              \algreturn{} $v_v + t + \sum_{u_v \in U} u_v$
            \EndIf
          \EndProcedure 
          \Procedure{$f_{\mbox{\small binary}}$}{$v_v, t, b, U$}
            \If{$t = t_v$ and $b = b_v$}
              \State \algreturn{} $t_v + b_v + v_v + \sum_{u_v \in U} u_v$
            \ElsIf{$t = (t_{t_v}, t_{b_v})$ \algand{} $b = b_v$}
              \State \algreturn{} $(t_{t_v}, t_{b_v} + v_v + b_v + \sum_{u_v \in U} u_v)$
            \ElsIf{$t = t_v$ \algand{} $b = (b_{t_v}, b_{b_v})$}
              \State \algreturn{} $(t_v + v_v + b_{t_v} + \sum_{u_v \in U} u_v)$
            \ElsIf{$t = (t_{t_v}, t_{b_v})$ \algand{} $b = (b_{t_v}, b_{b_v})$}
              \State \algreturn{} $(t_{t_v}, b_{b_v})$
            \EndIf
          \EndProcedure
    \Procedure{SubtractWeight}{$v, w$}
       \State $\ra{v}{\text{value}}$ \algassign{} $\ra{v}{\text{value}}$ - $w$
       \State Reevaluate the $f(\cdot)$ on path to root.
    \EndProcedure
    \Procedure{JoinEdge}{$e$}
      \State $\ra{e}{\text{value}}$ \algassign{} 0
      \State Reevaluate the $f(\cdot)$ on path to root.
    \EndProcedure
        \end{algorithmic}
        \caption{A \simpleRC{} implementation of \textproc{SubtractWeight} and \textproc{JoinEdge}.}
        \label{alg:component-weight}
    \end{algorithm}
    
    \noindent The initial value of a vertex is its starting weight. The initial value of an edge is $(0, 0)$, indicating that it is split at the beginning. Note that $f_{\mbox{\small unary}}$ and $f_{\mbox{\small binary}}$ can be evaluated in constant time, and the structure of the updates involves setting the value at a leaf using an associative operation and re-evaluating the values of the ancestor clusters.
    
    We can argue that the values are correctly maintained by structural induction. First consider unary clusters. If the top subcluster is split, then the representative vertex and unary subclusters are not reachable via joined edges, and hence the only reachable component is the component reachable inside the top subcluster from its top boundary, whose weight is $t_v$. If the top subcluster is joined, then the representative vertex is reachable, which is by definition the boundary vertex of the unary subclusters, and hence the reachable component is the union of the reachable components of all of the subclusters, whose weight is as given. 
    
    For binary clusters, there are four possible cases, depending on whether the top and bottom subclusters are joined or not. If both are joined, then the representative and hence the boundary of all subclusters is reachable from both boundaries, and hence the cluster is joined and the reachable component is the union of the reachable components of the subclusters. If either subcluster is split, then the reachable component at the corresponding boundary is just the reachable component of the subcluster, whose weight is as given. Lastly, if one of the subclusters is not split, then the corresponding boundary can reach the representative vertex, and hence the reachable components of the unary subclusters, whose weights are as given.
    
    It remains to argue that we can implement \textproc{QueryWeight} with a simple RC implementation. Consider a vertex $v$ whose component weight is desired and consider the parent cluster $P$ of $v$, i.e.,\ the cluster of which $v$ is the representative. If $P$ has no binary subclusters that are joined, observe that $P$ must contain the entire component of $v$ induced by joined edges, since the only way for a component to exit a cluster is via a boundary which would have to be joined. Answering the query in this situation is therefore easy; the result is the sum of the weights of $v$, the unary subclusters of $P$, the bottom boundary weight of the top subcluster (if it exists), and the top bounary weight of the bottom subcluster (if it exists). Suppose instead that $P$ contains a binary subcluster that is joined to some boundary vertex $u \neq v$. Since the subcluster is joined, $u$ is in the same induced component as $v$, and hence \textproc{QueryWeight}($v$) has the same answer as \textproc{QueryWeight}($u$). By standard properties of RC trees, since $u$ is a boundary of $P$, we also know that the leaf cluster $u$ is the child of some ancestor of $P$. Since the root cluster has no binary subclusters, this process of jumping to joined boundaries must eventually discover a vertex that falls into the easy case, and since such a vertex $u$ is always the child of some ancestor is $P$, the algorithm only examines clusters that are on or are children of the root-to-$v$ path in the RC tree, and hence the algorithm is a simple RC implementation.
    \end{proofsketch}
    
     \noindent Invoking Theorem~\ref{theorem:batch}, we obtain the following useful corollary.
    
    \begin{corollary}\label{lem:batch-component-weight}
      Given a vertex-weighted undirected graph, a batch of $k$ \textproc{SubtractWeight}, \textproc{JoinEdge}, and \textproc{QueryWeight} operations can be evaluated in $O(k\log (kn))$ work and $O(\log n \log k)$ depth.
    \end{corollary}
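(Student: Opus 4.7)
The plan is to read this off directly as an application of Theorem~\ref{theorem:batch}. The immediately preceding lemma has already done the main work: it exhibits a \simpleRC{} implementation of \textproc{SubtractWeight}, \textproc{JoinEdge}, and \textproc{QueryWeight}, verifying the three conditions of the definition (constant-time value computation at each cluster via $f_{\mbox{\small unary}}$ and $f_{\mbox{\small binary}}$; queries that traverse only a root-to-leaf path in the RC tree, inspecting constantly many child values per level; and updates that overwrite or modify a single leaf by an associative rule and then re-evaluate the ancestors). Once those three conditions are in hand, Theorem~\ref{theorem:batch} is directly applicable.

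First I would build the RC tree on the input graph's underlying tree structure --- but since the graph here is an undirected graph, not a tree, I need to be slightly careful about what ``RC tree'' means in this context. The natural reading, consistent with Karger's approximation procedure in which this corollary is used, is that the operations are carried out on a fixed spanning structure (or a chosen tree), and the RC tree is built on that; this preprocessing costs $O(n)$ work and $O(\log n)$ depth as in the theorem statement and is absorbed into the standard setup. Given that preprocessing, I would invoke Theorem~\ref{theorem:batch} with the operation-set $\{\textproc{SubtractWeight}, \textproc{JoinEdge}, \textproc{QueryWeight}\}$ and batch size $k$ to obtain the claimed $O(k \log(kn))$ work and $O(\log n \log k)$ depth bounds.

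The only step that requires any thought is justifying that the implementation of the preceding lemma truly falls under the \simpleRC{} umbrella: \textproc{QueryWeight} follows an ancestor path but may ``jump'' when it encounters a joined binary cluster. I would emphasize that, as argued in the lemma's proof sketch, any such jump lands on a vertex that is the child of an ancestor of the original cluster in the RC tree, so the entire query still only examines clusters on, or children of clusters on, a single root-to-leaf path. This keeps the query within the second condition of the \simpleRC{} definition (constant space, constant time per visited value).

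I do not expect any obstacle beyond this sanity-check; there are no new algorithmic ingredients to introduce, only the composition of two already-proven results. The proof will therefore consist of one sentence invoking the lemma to obtain a \simpleRC{} implementation, and a second sentence applying Theorem~\ref{theorem:batch} to conclude the stated work and depth bounds.
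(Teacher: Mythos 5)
Your proposal is correct and follows the paper's own (one-line) proof exactly: the preceding lemma furnishes the \simpleRC{} implementation, and the corollary is then an immediate application of Theorem~\ref{theorem:batch}. The two sanity-checks you raise---that the RC tree is built on a chosen spanning tree of the graph, and that \textproc{QueryWeight}'s ``jumps'' to joined boundaries stay among children of a single root-to-leaf path in the RC tree---are both apt, but both were already discharged inside the lemma's proof sketch, so they don't change the argument.
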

  
  \myparagraph{Parallel $k$-approximate minimum cut}
  Karger describes an $O(m n^{2/k} \log n)$ time sequential algorithm for finding a cut in a weighted graph within a factor of $k$ of the optimal cut~\cite{karger1993global}.  
  It works by randomly selecting edges to contract with probability proportional to their weight until a single vertex remains, and keeping track of the component with smallest incident weight (not including internal edges) during the contraction.
  
  His analysis shows that in a weighted graph with minimum cut $c$, with probability $n^{-2/k}$, the component with minimum incident weight encountered during a single trial of the contraction algorithm corresponds to a cut of weight at most $kc$, and therefore, running $O(n^{2/k}\log n)$ trials yields a cut of size at most $kc$ w.h.p.
  
  Although Karger's contraction algorithm is easy to parallelize using a parallel minimum spanning tree algorithm, keeping track of the incident component weights is trickier. To overcome this problem, we show that we can use our batch component weight algorithm to simulate the sequential contraction process efficiently. With this tool, we can determine the minimum incident weight of a component as follows:
    \begin{enumerate}[leftmargin=*]
      \setlength\itemsep{0em}
      \item\label{step:mst-for-approx} Compute an MST with respect to the weighted random edge ordering, where a heavier weight indicates that an edge contracts later
      \item\label{step:heaviest-edge-for-approx} For each edge $(u,v) \in G$, determine the heaviest edge in the MST on the unique $(u,v)$ path
      \item\label{step:batch-component-weight} Construct a vertex-weighted tree from the MST, where the weights are the total
  incident weight on each vertex in $G$. For each edge $(u,v)$ in the MST in contraction order:
      \begin{itemize}[leftmargin=*]
        \item Determine the set of edges in $G$ such that $(u,v)$ is the heaviest edge on its MST path. For each such edge identified, \textproc{SubtractWeight} from each of its endpoints by the weight of the edge
        \item Perform \textproc{JoinEdge} on the edge $(u,v)$
        \item Perform \textproc{QueryWeight} on the vertex $u$
      \end{itemize}
    \end{enumerate}
    
  \noindent Observe that the weight of a component at the point in time when it is queried is precisely the total weight of incident
  edges (again, not including internal edges). Taking the minimum over the initial degrees and all query results therefore yields the desired answer.
  
  Karger shows how to parallelize picking the (weighted) random permutation of the edges with $O(m \log^2 n)$ work.  It can easily slightly modified to improve the bounds by a logarithmic factor as follows. The algorithm selects the edges by running a prefix sum over the edge weights.  Assuming a total weight of $W$, it then picks $m$ random integers up to $W$, and for each uses binary search on the result of the prefix sum to pick an edge.  This process, however, might end up picking only the heaviest edges.  Karger shows that by removing those edges the total weight $W$ decreases by a constant factor, with high probability.  Since the edges can be preprocessed to be polynomial in $n$ (see below), repeating for $\log n$ rounds the algorithm will select all edges in the appropriate weighted random order.  Each round takes $O(m \log n)$ work for a total of $O(m \log^2 n)$ work.  
  
  Replacing the binary search in Karger's algorithm with a sort of the random integers and merge into the the result of the prefix sum yields an $O(m \log n)$ work randomized algorithm.  In particular $m$ random numbers uniformly distributed over a range can be sorted in $O(m)$ work and $O(\log n)$ depth by first determining for each number which of $m$ evenly distributed buckets within the range it is in, then sorting by bucket using an integer sort~\cite{RR89} and finally sorting within buckets.
  
  Step~\ref{step:mst-for-approx} therefore takes $O(m\log n)$ work and $O(\log^2 n)$ depth to compute the random edge permutation, and $O(m)$ work and $O(\log n)$ depth to run a parallel MST algorithm~\cite{karger1995randomized}. Step~\ref{step:heaviest-edge-for-approx} takes $O(m \log n)$ work and $O(\log n)$ depth using RC trees~\cite{acar2005experimental,acar2020batch}, and Step~\ref{step:batch-component-weight} takes $O(m \log n)$ work and $O(\log^2 n)$ depth by Corollary~\ref{lem:batch-component-weight} and the fact that the algorithm performs a batch of $O(n)$ operations.
  By Karger's analysis, trying $O(n^{2/k}\log n)$ random contractions yields the following lemma, and setting $k = \log n$ gives our desired corollary.
  
  \begin{lemma}\label{lem:k-approx}
    For a weighted graph, a cut within a factor of $k$ of the minimum cut can be found w.h.p.\ in $O(m n^{2/k} \log^2 n)$ work and $O(\log^2 n)$ depth. 
  \end{lemma}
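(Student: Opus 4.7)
The plan is to parallelize Karger's sequential contraction-based $k$-approximation by running $O(n^{2/k}\log n)$ independent trials concurrently and returning the lightest cut observed across all of them. Karger's analysis shows that a single trial produces a cut of weight at most $kc$ (where $c$ is the true minimum cut) with probability at least $n^{-2/k}$, so this many independent trials suffice for success w.h.p., and independence lets me run them in parallel without inflating the depth.

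Each trial I would pipeline as three stages. First, sample a weighted random permutation of the edges and compute the minimum spanning tree under that permutation; using the improved sampler described earlier in this section together with a parallel MST algorithm this costs $O(m\log n)$ work and $O(\log^2 n)$ depth. Second, for each non-tree edge $(u,v)$ identify the heaviest MST edge on its tree path by a batch of path-maximum queries on an RC tree, costing $O(m\log n)$ work and $O(\log n)$ depth. Third, simulate the sequential contraction as a batch of operations on the vertex-weighted MST: initialize each vertex weight to its total incident weight in $G$, then process MST edges in contraction (permutation) order, and for each MST edge $e$ perform a \textproc{SubtractWeight} at both endpoints of every non-tree edge ``activated'' by $e$ (i.e.\ those for which $e$ is the heaviest edge on its tree path), followed by \textproc{JoinEdge}$(e)$ and a \textproc{QueryWeight} at an endpoint of $e$. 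By Corollary~\ref{lem:batch-component-weight} this batch of $O(m)$ mixed operations runs in $O(m\log n)$ work and $O(\log^2 n)$ depth, and the minimum over the initial vertex weights and the query answers is the smallest incident-component weight observed during the simulated contraction.

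The main obstacle is verifying that this batched simulation faithfully mirrors Karger's sequential contraction, so that the minimum-incident-weight we compute is exactly the quantity Karger's probabilistic analysis bounds. The key observation is that under the weighted random ordering, two endpoints $u$ and $v$ of a non-tree edge get merged into a common super-vertex precisely when the heaviest MST edge on the $u$--$v$ tree path is contracted, since any lighter edge on that path merges only a strict subset on one side; hence scheduling the \textproc{SubtractWeight} for $(u,v)$ against that MST edge aligns with the exact moment the sequential algorithm would stop counting $(u,v)$ as incident, and \textproc{QueryWeight} at this instant returns the correct incident weight (excluding internal edges) of the component of $u$. Granting this, each trial costs $O(m\log n)$ work and $O(\log^2 n)$ depth, and launching $O(n^{2/k}\log n)$ trials in parallel yields the claimed $O(m n^{2/k}\log^2 n)$ work and $O(\log^2 n)$ depth.
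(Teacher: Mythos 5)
Your proposal matches the paper's proof essentially step for step: parallelize Karger's contraction-based $k$-approximation by running $O(n^{2/k}\log n)$ independent trials, and simulate each trial as (1) an MST under a weighted random edge ordering, (2) heaviest-MST-path-edge queries to determine when each non-tree edge becomes internal, and (3) a batch of \textproc{SubtractWeight}/\textproc{JoinEdge}/\textproc{QueryWeight} operations via Corollary~\ref{lem:batch-component-weight}, with the crucial observation---as in the paper---that $(u,v)$ becomes internal exactly when the heaviest MST edge on its tree path is contracted. One small unsupported assertion: you claim the weighted random edge permutation can be sampled in $O(m\log n)$ work by citing an ``improved sampler described earlier,'' but Karger's own parallel sampler costs $O(m\log^2 n)$ work; the paper closes this by replacing Karger's per-edge binary searches over the prefix-sum array with a sort of the $m$ random integers followed by a merge (using bucketing plus integer sort to sort in $O(m)$ work), and that detail is needed to reach the stated per-trial bound.
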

  
  \begin{corollary}\label{cor:logn-approx}
    For a weighted graph, a cut within a factor of $\log n$ of the minimum cut can be found w.h.p.\ in $O(m \log^2 n)$ work and $O(\log^2 n)$ depth. 
  \end{corollary}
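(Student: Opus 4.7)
The plan is to derive Corollary~\ref{cor:logn-approx} as an immediate instantiation of Lemma~\ref{lem:k-approx} with $k = \log n$. The work bound in Lemma~\ref{lem:k-approx} has the form $O(m n^{2/k} \log^2 n)$, so the only calculation required is to observe that $n^{2/\log n}$ is a constant. I would verify this by taking base-2 logarithms: $\log_2\bigl(n^{2/\log_2 n}\bigr) = (2/\log_2 n) \cdot \log_2 n = 2$, hence $n^{2/\log_2 n} = 4 = O(1)$. For any other fixed base of the logarithm the constant changes but remains absolute. Substituting into Lemma~\ref{lem:k-approx} therefore collapses the work bound to $O(m \log^2 n)$, while the depth bound $O(\log^2 n)$ is independent of $k$ and carries over verbatim.

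The only thing to check beyond this calculation is that the high-probability guarantee of Lemma~\ref{lem:k-approx} is unaffected by the choice $k = \log n$. This is immediate because the underlying analysis — that $O(n^{2/k} \log n)$ independent trials of Karger's weighted random contraction succeed w.h.p.\ — makes no assumption on $k$ being constant; with $k = \log n$ the number of trials is simply $O(\log n)$, each of which uses the parallel MST/RC-tree/batch-component-weight machinery as described in the proof of Lemma~\ref{lem:k-approx}. Consequently there is no real obstacle: the proof is a one-line substitution supported by the exponent simplification above.
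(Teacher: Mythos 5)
Your proposal is correct and matches the paper's approach exactly: the paper derives Corollary~\ref{cor:logn-approx} precisely by setting $k = \log n$ in Lemma~\ref{lem:k-approx}, relying on the same observation that $n^{2/\log n} = O(1)$. The extra verification of the exponent simplification and the note that the high-probability guarantee is unaffected by a non-constant $k$ are sound and consistent with the paper's (implicit) reasoning.
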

  
  \subsection{Additional tools and lemmas}
  
  \myparagraph{Transformation to bounded edge weights} For our algorithm to be efficient, we require that the input graph has small integer weights. Karger~\cite{karger1995random} gives a transformation that ensures all edge weights of a graph are bounded by $O(n^5)$ without affecting the minimum cut by more than a a constant factor. For our algorithm $O(n^5)$ would be too big, so we design a different transformation that guarantees all edge weights are bounded by $O(m \log n)$, and only affects the weight of the minimum cut by a constant factor.
    
    \begin{lemma}\label{lem:low-weight-transform}
      There exists a transformation that, given an integer-weighted graph $G$, produces an integer-weighted graph $G'$ no larger than $G$, such that $G'$ has edge weights bounded by $O(m\log n)$, and the minimum cut of $G'$ corresponds to an $O(1)$-approximate minimum cut in $G$.
    \end{lemma}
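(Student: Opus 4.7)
The plan is to combine a scaling step (to shrink large weights uniformly) with a capping step (to kill isolated very heavy edges), using the approximate minimum cut from Corollary~\ref{cor:logn-approx} to calibrate both. First I would invoke that corollary to obtain a value $\tilde{c}$ with $c \leq \tilde{c} \leq c \log n$, where $c$ denotes the minimum cut of $G$. Next, I would set the scale factor $\tau = \max(1, \lfloor \tilde{c}/(\alpha m \log n) \rfloor)$ and the cap $T = \beta m \log n$ for suitably chosen constants $0 < \alpha < \beta$, and define $G'$ on the same vertex and edge set as $G$ with weights $w'(e) = \min(\lceil w(e)/\tau \rceil,\, T)$. By construction $G'$ is no larger than $G$ (same graph, modified weights), all weights are positive integers, and every weight is bounded by $T = O(m \log n)$.

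The correctness argument naturally splits into two cases depending on whether the scaling is active. If $\tilde{c} \leq \alpha m \log n$, then $\tau = 1$ and only capping applies; since $c \leq \tilde{c} \leq T$ (for $\beta \geq \alpha$), no edge of the minimum cut of $G$ is altered, and any cut of $G'$ of weight below $T$ uses only uncapped edges, so the minimum cuts of $G$ and $G'$ coincide exactly. Otherwise $\tau = \Theta(\tilde{c}/(m \log n))$, and the inequalities $c \leq \tilde{c} \leq c \log n$ yield $c/\tau = O(m \log n)$; choosing $\beta$ large enough guarantees $c/\tau + 1 < T$, so no edge of the minimum cut $C^*$ of $G$ is capped in $G'$. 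Hence $w'(C^*) \leq c/\tau + |C^*|$, and since the minimum cut $c'$ of $G'$ satisfies $c' \leq w'(C^*) < T$, any optimum cut $C'$ of $G'$ must itself avoid capped edges, so $w(e) \leq \tau\, w'(e)$ for every $e \in C'$.

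Combining these bounds gives $w(C') \leq \tau\, w'(C') \leq c + \tau |C^*|$, and the main obstacle I anticipate is controlling the additive slack $\tau |C^*|$. For this I would use $\tau \leq \tilde{c}/(\alpha m \log n) \leq c/(\alpha m)$ together with the trivial bound $|C^*| \leq m$ to conclude $\tau |C^*| \leq c/\alpha$, whence $w(C') \leq (1 + 1/\alpha)\, c = O(c)$, giving the desired $O(1)$-approximation. The whole argument therefore reduces to selecting the constants $\alpha$ and $\beta$ consistently: $\alpha$ must be small enough that the floor in the definition of $\tau$ does not collapse it to $1$ in the scaling case, and $\beta$ must dominate $1/\alpha$ so that the cap $T$ strictly exceeds $c/\tau + |C^*|$ in both regimes. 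Since the transformation itself is a constant-time local operation per edge, after the initial approximation it adds only $O(m)$ work in $O(1)$ depth, keeping the overall cost within the budget of Corollary~\ref{cor:logn-approx}.
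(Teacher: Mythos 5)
Your proof is correct, but it follows a genuinely different route from the paper's. The paper's transformation modifies the graph structure: it \emph{contracts} all edges with weight above $\tilde{c}$ (they cannot be in the minimum cut), \emph{deletes} all edges with weight below $s = \tilde{c}/(2m\log n)$ (this changes any cut by at most $sm \leq c/2$), and then divides remaining weights by $s$, rounding \emph{down}; since only edges with weight in $[s,\tilde{c}]$ survive, the scaled weights land cleanly in $[1, 2m\log n]$. You instead leave the vertex and edge set of $G$ completely untouched, replace contraction with a \emph{cap} $T$, and replace the deletion step by rounding \emph{up} (so weights are automatically positive integers). Both are sound: the paper's version gives a strictly smaller graph and a simpler error accounting (the rounding error is bounded by $sm \leq c/2$ without any case split), while yours avoids needing a parallel connected-components subroutine for contraction and keeps the transformation purely edge-local. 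The price you pay is the two-case analysis on whether $\tau > 1$ and the constant-chasing to ensure $c/\tau + |C^*| < T$, which you correctly flag as the delicate point. One small inaccuracy in your narration: you worry about choosing $\alpha$ ``small enough that the floor in the definition of $\tau$ does not collapse it to $1$ in the scaling case,'' but since $\alpha$ is a fixed constant while $\tilde{c}/(m\log n)$ varies with the input, you cannot prevent $\tau = 1$ for inputs with $\tilde{c}$ only slightly above $\alpha m \log n$; fortunately your Case 2 bounds already go through when $\tau = 1$ (then $c/\tau = c \leq \tilde{c} < 2\alpha m\log n < T$), so this is a non-issue, and the argument stands as long as $\beta$ dominates $2\alpha + 1$.
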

    
    \begin{proof}
      Let $G$ be the input graph and suppose that the true value of the minimum cut is $c$. First, we use Corollary~\ref{cor:logn-approx} to obtain a $O(\log n)$-approximate minimum cut, whose value we denote by $\tilde{c}$ ($c \leq \tilde{c} \leq c \log n$). We can contract all edges of the graph with weight greater than $\tilde{c}$ since they can not appear in the minimum cut. Let $s = \tilde{c} / (2m \log n)$. We delete (not contract) all edges with weight less than $s$. Since there are at most $m$ edges in any cut, this at most affects the value of a cut by $sm = \tilde{c} / (2 \log n) \leq c/2$. Therefore the minimum cut in this graph is still a constant factor approximation to the minimum cut in $G$.
      
      Next, scale all remaining edge weights down by the factor $s$, rounding down. All edge weights are now integers in the range $[1, 2m \log n]$. This is the transformed graph $G'$. It remains to argue that the value of the minimum cut is a constant-factor approximation. First, note that the scaling process preserves the order of cut values, and hence the true minimum cut in $G$ has the same value in $G'$ as the minimum cut in $G'$. Consider any cut in $G'$, and scale the weights of the edges back up by a factor $s$. This introduces a rounding error of at most $s$ per edge. Since any cut has at most $m$ edges, the total rounding error is at most $sm \leq c/2$. Therefore the value of the minimum cut in $G'$ is a constant factor approximation to the value of the minimum cut in $G$.
    \end{proof}
    
    \noindent Lastly, observe that this transformation can easily be performed in parallel by using a work-efficient connected components algorithm to perform the edge contractions, as is standard (see e.g.~\cite{karger1996new}).
  
  \myparagraph{Sampling binomial random variables} It will be helpful in the next step to be able to efficiently sample binomial random variables. We will
    use the following results due to Farach-Colton et al.~\cite{farach2015exact}.
    
    \begin{lemma}[Farach-Colton et al.~\cite{farach2015exact}, Theorem 1]
    Given a positive integer $n$, one can sample a random variate from the
    binomial distribution $B(n, 1/2)$ in $O(1)$ time with probability $1 - 1/n^{\Omega(1)}$ and in
    expectation after $O(n^{1/2+\varepsilon})$-time preprocessing for any constant $\varepsilon > 0$, assuming
    that $O(\log n)$ bits can be operated on in $O(1)$ time. The preprocessing can be reused for any $n' = O(n)$
    \end{lemma}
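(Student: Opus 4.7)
The plan is to implement inverse-transform sampling: draw a uniform $b$-bit integer $U$ with $b = \Theta(\log n)$ and return the unique $k$ with $F(k-1) \cdot 2^b \le U < F(k) \cdot 2^b$, where $F$ is the CDF of $B(n, 1/2)$. The hard part is giving $O(1)$ query time with only $O(n^{1/2+\varepsilon})$ preprocessing---which is sublinear, so the algorithm cannot afford to touch all $n+1$ CDF values.

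First I would exploit concentration: for a suitable constant $c$, the window $W = [n/2 - c\sqrt{n \log n},\, n/2 + c\sqrt{n \log n}]$ carries all but $n^{-\Omega(1)}$ of the mass of $B(n,1/2)$ by a Chernoff bound. Preprocessing then computes $F(k)$ for $k \in W$ via the incremental recurrence $\binom{n}{k+1} = \binom{n}{k}(n-k)/(k+1)$, taking $O(|W|) = O(\sqrt{n \log n})$ arithmetic operations on $O(\log n)$-bit words. Next, I would build a bucket table $T[0 \ldots M-1]$ of $M = \lceil n^{1/2+\varepsilon} \rceil$ equal-width buckets tiling the image of $W$ under $F$ inside $[0, 1)$. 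Each bucket stores either the single $k$ covering it, or---when the bucket straddles one or more CDF jumps---a short list of candidates with their exact split points. Since $M$ dominates $|W|$, only $O(\sqrt{n \log n})$ buckets are multi-candidate and the total auxiliary list length is $O(\sqrt{n \log n})$, so preprocessing fits in $O(M) = O(n^{1/2+\varepsilon})$ time.

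At query time I generate $U$, check in $O(1)$ whether $U/2^b$ falls in the precomputed image of $W$, and if so index directly into $T$: a single-candidate bucket answers in $O(1)$, and a multi-candidate bucket requires one additional stored comparison. With probability $1 - n^{-\Omega(1)}$ the sample lands in a single-candidate bucket, giving $O(1)$ worst-case time on the common path. When $U$ lies in the tail (probability $n^{-\Omega(1)}$), I fall back to a sequential scan along the tail CDF, which costs $O(\sqrt{n \log n})$ in the worst case but contributes only $n^{-\Omega(1)} \cdot O(\sqrt{n \log n}) = o(1)$ to expected running time. The reusability claim for any $n' = O(n)$ follows because the concentration window and the bucket grid shift only by constant factors between $n$ and $n'$; a single rescaling of $U$ together with a membership recheck reuses the same table.

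The main obstacle is controlling numerical precision: the comparisons between $U/2^b$ and the stored CDF values must be correct at resolution roughly $1/M = n^{-1/2-\varepsilon}$, otherwise misclassified buckets could cause the sampler to return the wrong $k$. I would choose $b = \Theta(\log n)$ with a sufficiently large constant so that the accumulated rounding error over the $O(\sqrt{n \log n})$ incremental binomial recurrences stays well below a bucket width, and verify that every intermediate numerator, denominator, and partial CDF value fits in $O(\log n)$ bits so that each arithmetic step costs $O(1)$ under the word-RAM assumption of the lemma.
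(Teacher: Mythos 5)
This is a cited result (Farach-Colton and Tsai), which the paper invokes but does not prove; the closest thing to a proof sketch in the paper is the paragraph following the lemma, which describes the actual method as building $n^{\varepsilon}$ alias tables of size $O(\sqrt{n\log n})$ together with a lookup table that decomposes any $n' = O(n)$ into a sum of a constant number of perfect squares (so that $B(n',1/2)$ can be drawn as a sum of a constant number of precomputed $B(m_i^2, 1/2)$ variates). Your proposal instead does inverse-transform sampling with a fixed-width bucket table, which is a genuinely different route, but it has two gaps.

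The first and more serious gap is exactness. The cited result is about \emph{exact} sampling from $B(n,1/2)$. Your scheme draws a $\Theta(\log n)$-bit uniform $U$ and compares it against stored CDF values, so the distribution you actually produce is supported on $\{0,\dots,n\}$ with probabilities that are multiples of $2^{-b}$; but the true probabilities $\binom{n}{k}/2^n$ have denominator $2^n$ and cannot all be multiples of $2^{-b}$ for $b = O(\log n)$. Increasing the constant in $b = \Theta(\log n)$ shrinks the total variation distance to the binomial, but never makes it zero, so the argument proves approximate sampling, not the stated lemma. Your discussion of ``accumulated rounding error staying below a bucket width'' controls whether the \emph{bucket} is identified correctly, which is a separate question from whether the resulting random variable has exactly the binomial law. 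To make inverse transform exact one typically generates the random bits of $U$ lazily and compares against exactly represented split points, which requires either $\Theta(n)$-bit split points (breaking the $O(1)$ word-operation budget) or a substantively different representation of the boundaries; you would need to supply that machinery. This is precisely the difficulty the alias-table construction is designed to sidestep.

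The second gap is reusability. You claim the table for $n$ can be reused for $n' = O(n)$ by ``rescaling $U$ together with a membership recheck,'' but the CDF of $B(n',1/2)$ is not any rescaling of the CDF of $B(n,1/2)$ (for instance, $B(3,1/2)$ and $B(4,1/2)$ have incommensurable support sizes and entirely different binomial coefficients). Your bucket table is keyed to the specific CDF of one $n$, so it would have to be rebuilt from scratch for each $n'$, which is exactly what the sum-of-squares decomposition in Farach-Colton's construction is there to avoid: every $n'$ reduces to a constant number of already-tabulated $B(m^2,1/2)$ draws. Without an analogous decomposition, the reusability claim does not follow from your construction.
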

    \noindent We can also use the following reduction to sample $B(n,p)$ for arbitrary $0 \leq p \leq 1$.
    \begin{lemma}[Farach-Colton et al.~\cite{farach2015exact}, Theorem 2]
      Given an algorithm that can draw a sample from $B(n', 1/2)$ in $O(f(n))$
      time with probability $1 - 1/n^{\Omega(1)}$ and in expectation for any $n' \leq n$, then drawing a sample from $B(n', p)$ for any real p can be done in $O(f (n)\log n)$ time with probability $1 - 1/n^{\Omega(n)}$ and in expectation, assuming each bit of p can be obtained in $O(1)$ time
    \end{lemma}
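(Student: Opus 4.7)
The plan is to simulate each of the $n'$ independent Bernoulli$(p)$ trials hierarchically using the binary expansion $p = \sum_{i \geq 1} b_i 2^{-i}$. A single Bernoulli$(p)$ variate is produced by the recursion: flip a fair coin; on heads, output the current bit $b_i$; on tails, advance to the next bit. This outputs $1$ with probability $\sum_{i \geq 1} b_i 2^{-i} = p$. To sample $B(n', p)$ in bulk, process all $n'$ trials in parallel across levels. Set $m_1 = n'$; at level $i$, draw $Y_i \sim B(m_i, 1/2)$ using the assumed subroutine, where $Y_i$ counts the trials whose level-$i$ coin came up heads and which therefore commit to outputting $b_i$; then set $m_{i+1} := m_i - Y_i$ and continue. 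Return $\sum_i b_i Y_i$, consuming one bit of $p$ and one $B(\cdot, 1/2)$ sample per level.

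Correctness follows because the original trials are mutually independent (the underlying coins are) and each commits at level $i$ with probability $2^{-i}$, outputting $b_i$ when it does, so its overall probability of outputting $1$ is $\sum_i b_i 2^{-i} = p$. Hence $\sum_i b_i Y_i \sim B(n', p)$ exactly.

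For the runtime, since $\expct{m_{i+1} \mid m_i} = m_i/2$, a Chernoff bound yields $m_{L+1} = 0$ after $L = c \log n$ levels with probability $1 - n^{-\Omega(1)}$ for $c$ chosen sufficiently large. Each level costs $O(f(n))$ time (one call to the $B(\cdot, 1/2)$ sampler plus $O(1)$ bookkeeping and bit-reading), giving the claimed $O(f(n) \log n)$ bound with high probability. For the expectation bound, on the low-probability event that some trials survive past level $L$, the algorithm simply continues the same recursion; conditioned on survival, $m_i$ still halves in expectation, so only $O(1)$ extra levels are needed per tail event, and the polynomially small tail probability makes this contribution negligible relative to $O(f(n) \log n)$.

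The main obstacle will be coordinating the high-probability and expectation bounds so they hold simultaneously: we must verify that the rare event of many trials surviving past $L$ levels does not blow up the expectation, and we need a clean termination rule (stop as soon as $m_i = 0$) so that even though $p$'s binary expansion may be infinite, the algorithm consumes only $O(\log n)$ bits of $p$ with high probability and $O(\log n)$ in expectation.
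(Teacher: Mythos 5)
This lemma is an external citation---the paper states it as Theorem~2 of Farach-Colton et al.\ and does not supply its own proof. Your reconstruction is the standard ``bit-stream'' reduction and is, to the best of my recollection, essentially what Farach-Colton et al.\ do: write $p$ in binary, and at level $i$ call the assumed $B(\cdot,1/2)$ sampler once to split the surviving population, committing the heads group to output bit $b_i$. The correctness argument (each underlying Bernoulli$(p)$ trial commits at level $i$ with probability $2^{-i}$, hence outputs $1$ with probability $\sum_i b_i 2^{-i}=p$, and trials are independent, so $\sum_i b_i Y_i \sim B(n',p)$ exactly) is correct.

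Two small points. First, for the high-probability bound, invoking Chernoff on the chain $m_1, m_2, \ldots$ is more machinery than you need and slightly awkward to make rigorous directly; the cleaner argument is the union bound you implicitly used in the per-trial analysis: a fixed trial survives $L$ levels with probability $2^{-L}$, so $\probb{m_{L+1}>0} \le n' 2^{-L} \le n^{-\Omega(1)}$ once $L = c\log n$. Second, for the expectation, rather than reasoning informally about ``$O(1)$ extra levels per tail event,'' you can observe that the number of levels is $\max_j T_j$ over $n'\le n$ independent Geometric$(1/2)$ random variables $T_j$, whose expectation is $O(\log n)$; multiplying by the $O(f(n))$ cost per level gives the expectation bound directly. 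With those two tweaks the proof is clean and complete.
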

     \noindent We note, importantly, that the model used by Farach-Colton et al.\ assumes that random $\Theta(\log n)$-size words can be generated in constant time. Since we only assume that we can generate random bits in constant time, we will have to account for this with an extra $O(\log n)$ factor in the work where appropriate. Note that this does not negatively affect the depth since we can pre-generate as many random words as we anticipate needing, all in parallel at the beginning of our algorithm. Lastly, we also remark that although it might not be clear in their definition, the constants in the algorithm can be configured to control the constant in the $\Omega(1)$ term in the probability, and therefore their algorithms take $O(1)$ time and $O(\log n)$ time w.h.p.
    
     To make use of these results, we need to show that the preprocessing of Lemma~\ref{lem:sample-binom} can be parallelized. Thankfully, it is easy. The preprocessing phase consists of generating $n^{\varepsilon}$ alias tables of size $O(\sqrt{n \log n})$. H{\"u}bschle-Schneider and Sanders~\cite{hbschleschneider2019parallel} give a linear work, $O(\log n)$ depth parallel algorithm for building alias tables. Building all of them in parallel means we can perform the alias table preprocessing in $O(n^{1/2+\varepsilon})$ work and $O(\log n)$ depth. The last piece of preprocessing information that needs to be generated is a lookup table for decomposing any integer $n' = O(n)$ into a sum of a constant number of square numbers. This table construction is trivial to parallelize, and hence all preprocessing runs in $O(n^{1/2+\varepsilon})$ work and $O(\log n)$ depth.
     
     \begin{lemma}\label{lem:sample-binom}
      Given a positive integer $n$, after $O(n^{1/2+\varepsilon})$ work and $O(\log n)$ depth preprocessing, one can sample random variables from $B(n, 1/2)$ in $O(\log n)$ work w.h.p., and from $B(n,p)$ in $O(\log^2 n)$ work w.h.p. The preprocessing can be reused for any $n' = O(n)$.
     \end{lemma}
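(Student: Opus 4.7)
The plan is to use the Farach-Colton et al.\ sequential samplers essentially as black boxes and to make two adjustments: parallelize their preprocessing phase, and pay a logarithmic factor wherever their model (which allows $O(1)$-time operations on $\Theta(\log n)$-bit random words) is stronger than ours (which only permits $O(1)$-time generation of a single random bit).

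For the preprocessing, the state built by Farach-Colton et al.\ consists of $n^{\varepsilon}$ alias tables of size $O(\sqrt{n\log n})$, together with a small auxiliary lookup table used to decompose integers into sums of constantly many squares. I would construct all $n^{\varepsilon}$ alias tables simultaneously in parallel, invoking the work-efficient parallel alias-table construction of H\"ubschle-Schneider and Sanders, which runs in linear work and $O(\log n)$ depth per table. Aggregating across the $n^{\varepsilon}$ tables gives $O(n^{\varepsilon}\cdot \sqrt{n\log n}) = O(n^{1/2+\varepsilon})$ total work and $O(\log n)$ depth. The auxiliary lookup table is small and each of its entries depends only on its index, so it can be filled in parallel within the same budget. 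In addition, all random bits the sampler is expected to need may be pre-generated in parallel at the beginning, at no extra asymptotic cost in work and with $O(\log n)$ depth.

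For the per-sample bounds, Farach-Colton et al.'s Theorem~1 draws from $B(n,1/2)$ in $O(1)$ time w.h.p.\ in their model, and their Theorem~2 reduction costs an additional $O(\log n)$ factor to handle arbitrary $p$. Simulating each $\Theta(\log n)$-bit random-word access by explicitly generating its $\Theta(\log n)$ bits inflates both bounds by one logarithmic factor, yielding $O(\log n)$ work per $B(n,1/2)$ sample and $O(\log^2 n)$ work per $B(n,p)$ sample, both with high probability; the preceding discussion notes that the constants in the cited theorems can be tuned to achieve any desired polynomial failure exponent. Reusability of the preprocessing across all $n' = O(n)$ is inherited directly from the corresponding property of Farach-Colton et al.'s construction. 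The only real subtlety is the bookkeeping for the word-level versus bit-level model mismatch; every other step is a routine assembly of known parallel primitives, and I do not anticipate a significant obstacle beyond confirming that the independent alias-table constructions and bit pre-generation can indeed all be run concurrently within $O(\log n)$ depth.
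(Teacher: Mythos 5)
Your proposal matches the paper's argument almost line for line: parallelize the $n^{\varepsilon}$ alias-table constructions via H\"ubschle-Schneider and Sanders, parallelize the small squares-decomposition lookup table, pre-generate random words to absorb the depth cost, pay a $O(\log n)$ work factor to simulate word-sized random draws with bit draws, tune constants in the cited theorems to make the bounds hold w.h.p., and inherit reusability from Farach-Colton et al. This is exactly the paper's route, so no further comparison is needed.
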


  \myparagraph{Subsampling $p$-skeletons}
    Karger defines the $p$-skeleton $G(p)$ of an unweighted graph $G$ as a copy of $G$ where each edge appears with probability $p$. A $p$-skeleton therefore has $O(pm)$ edges in expectation. For a weighted graph, the $p$-skeleton is defined as the $p$-skeleton of the corresponding unweighted multigraph in which an edge of weight $w$ is replaced by $w$ parallel multiedges. The $p$-skeleton of a weighted graph therefore has $O(pW)$ edges in expectation, where $W$ is the total weight in the graph. Karger gives an algorithm for generating a $p$-skeleton in $O(pW \log(m))$ work, which relies on performing $O(pW)$ independent random samples with probabilities proportional to the weight of each edge, each of which takes $O(\log(m))$ amortized time. In Karger's algorithm, given a guess of the minimum cut $c$, he computes $p$-skeletons for $p = \Theta(\log n/c)$. Since no edge of weight greater than $c$ can be contained in the minimum cut, all such edges can be contracted, leaving us with $W \leq mc$, so the skeleton has $O(m \log n)$ edges and takes $O(m \log^2 n)$ work to compute. Since our algorithm does not know the minimum cut $c$ yet, it uses guessing and doubling on $p$, and hence has to compute several $p$-skeletons, so $O(m \log^2 n)$ work is too slow. We overcome this problem using binomial random variables and subsampling.
    
    \begin{lemma}\label{lem:subsampling}
      Given a weighted graph $G$ with edge weights bounded by $m^{2 - \varepsilon}$, an initial sampling probability $p$ and an integer $k$, there exists an algorithm that can produce the skeleton graphs $G(p), G(p/2), ..., G(p/2^k)$ in $O(m\log^2 n + km\log n)$ work w.h.p.\ and $O(k\log n)$ depth.
    \end{lemma}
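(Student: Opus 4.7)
The plan is to exploit the binomial-variable sampler of Lemma~\ref{lem:sample-binom} to build $G(p)$ in one shot and then to halve the sampling rate by independent fair-coin subsampling. View each original edge $e$ of weight $w_e$ as $w_e$ independent parallel multi-edges; then the count of copies of $e$ surviving into $G(p)$ is distributed as $B(w_e, p)$, and halving the rate once more corresponds to independently retaining each surviving copy with probability $1/2$, whose aggregate is $B(m_e, 1/2)$ where $m_e$ is the current multiplicity. Composing these subsamplings at rates $p, 1/2, 1/2, \dots$ is exactly sampling at rate $p/2^i$, which gives the correctness of the scheme described below.

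The algorithm has three stages. First, run the preprocessing of Lemma~\ref{lem:sample-binom} with bound $W_{\max} = O(m^{2-\varepsilon})$; by picking the sampler's parameter, say, less than $\varepsilon/4$, the preprocessing is $O(W_{\max}^{1/2+\varepsilon/4}) = O(m^{1-\Omega(\varepsilon)})$ work and $O(\log n)$ depth, comfortably within budget. Second, in parallel across all $m$ edges, draw $m_e^{(0)} \sim B(w_e, p)$; each sample costs $O(\log^2 n)$ work w.h.p.\ via Lemma~\ref{lem:sample-binom}, yielding $G(p)$ represented as a multiplicity on each original edge in $O(m \log^2 n)$ work. Third, for $i = 1, \dots, k$, in parallel across the edges with $m_e^{(i-1)} > 0$, draw $m_e^{(i)} \sim B(m_e^{(i-1)}, 1/2)$ using the faster $B(n,1/2)$ sampler at $O(\log n)$ work per sample, dropping an edge as soon as its multiplicity reaches zero; each round costs $O(m \log n)$ work for a total of $O(km \log n)$ across all rounds. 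Summing, the total work is $O(m \log^2 n + km \log n)$, and since each round's binomial draws run fully in parallel in $O(\log n)$ depth, the overall depth is $O(k \log n)$.

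The main subtlety I expect is keeping the sampler preprocessing sublinear in $m$: a naive invocation with $W_{\max}$ as large as $m^{2-\varepsilon}$ would cost $\Omega(m)$ work, so it is essential to exploit the freedom in the sampler's exponent parameter, and the $(2-\varepsilon)$ exponent in the hypothesis is precisely what gives enough slack. A secondary point to check is that the per-sample high-probability guarantee of Lemma~\ref{lem:sample-binom} can be union-bounded across the at most $(k+1)m$ samples drawn; this is arranged by tuning the sampler's success-probability constants so each failure probability is $n^{-c}$ for a sufficiently large constant $c$, which preserves the w.h.p.\ work bound across the whole algorithm.
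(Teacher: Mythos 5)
Your proposal is correct and follows essentially the same approach as the paper: sample $G(p)$ once with $B(w_e,p)$ draws (using the Farach-Colton et al.\ sampler after $O(m^{2-\varepsilon})$-scale preprocessing), then obtain each successive skeleton by fair-coin subsampling with $B(\cdot,1/2)$ draws, giving $O(m\log^2 n)$ for the first skeleton and $O(m\log n)$ per subsequent one. You are slightly more explicit than the paper about tuning the sampler's exponent to keep preprocessing sublinear and about union-bounding the w.h.p.\ guarantee over all $O(km)$ draws, both of which are fine.
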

    
    \begin{proof}
    Begin by using Lemma~\ref{lem:sample-binom} and performing the required preprocessing for sampling binomial random variables from $B(m^{2-\varepsilon}, 1/2)$, which takes $O(m)$ work and $O(\log n)$ depth. To construct $G(p)$, for each edge $e$ in the graph, sample a binomial random variable $x \sim B(w(e),p)$. The skeleton then contains the edge $e$ with weight $x$ (conceptually, $x$ unweighted copies of the multiedge $e$). This results in the same distribution of graphs as if sampled using Karger's technique, and takes $O(m \log^2 n)$ work w.h.p.\ and $O(\log n)$ depth. For each additional skeleton $G(p')$ requested, subsample from the previous skeleton by drawing binomial random variables from $B(w_{G(2p')}(e), 1/2)$, which takes $O(m \log n)$ work w.h.p.\ and $O(\log n)$ depth. In total, to perform $k$ rounds of sampling, this takes $O(m \log^2 n + km \log n)$ work w.h.p.\ and $O(k \log n)$ depth.
    \end{proof}
      
    \noindent Using subsampling here is important, since otherwise it would cost $O(km\log^2 n)$ work to sample all of the desired skeleton graphs. Additionally, note that Lemma~\ref{lem:low-weight-transform} makes it easy to satisfy the requirement that all edge weights be bounded by $m^{2 - \varepsilon}$.
  
  \myparagraph{Parallel weighted sparse certificates} 
  A \emph{sparse $k$-connectivity certificate} of an unweighted graph $G = (V,E)$ is a graph $G' = (V, E' \subset E)$ with at most $O(kn)$ edges, such that every cut in $G$ of weight at most $k$ has the same weight in $G'$. Cheriyan, Kao, and Thurimella~\cite{cheriyan1993scan} introduce a parallel graph search called \emph{scan-first search}, which they show can be used to generate $k$-connectivity certificates of unweighted graphs. Here, we briefly note that the algorithm can easily be extended to handle weighted graphs. The scan-first search algorithm is implemented as follows
    
    \begin{algorithm}
      \footnotesize
      \caption{Scan-first search~\cite{cheriyan1993scan}}
      \label{alg:scan_first_search}
      \begin{algorithmic}[1]
        \Procedure{SFS}{$G = (V,E)$ : \algtype{Graph}, $r$ : \algtype{Vertex}}
          \State Find a spanning tree $T'$ rooted at $r$
          \State Find a preorder numbering to the vertices in $T'$
          \State For each vertex $v \in T'$ with $v \neq r$, let $b(v)$ denote the least neighbor of $v$ in preorder
          \State Let $T$ be the tree formed by $\{v, b(v)\}$ for all $v \neq r$
        \EndProcedure
      \end{algorithmic}
    \end{algorithm}
  
    \noindent Using a linear work, low depth spanning tree algorithm, scan-first search can easily be implemented in $O(m)$ work and $O(\log n)$ depth. Cheriyan, Kao, and Thurimella show that if $E_i$ are the edges in a scan-first search forest of the graph $G_{i-1} = (V, E \setminus (E_1 \cup ... E_{i-1}))$, then $E_1 \cup ... E_k$ is a sparse $k$-connectivity certificate. A sparse $k$-connectivity certificate can therefore be found in $O(km)$ work and $O(k \log n)$ depth by running scan-first search $k$ times.
    
    In the weighted setting, we treat an edge of weight $w$ as $w$ parallel unweighted multiedges. As always, this is only conceptual, the multigraph is never actually generated. To compute certificates in weighted graphs, we therefore use the following simple modification. After computing each scan-first search tree, instead of removing the edges present from $G$, simply lower their weight by one, and remove them only if their weight becomes zero. It is easy to see that this is equivalent to running the ordinary algorithm on the unweighted multigraph. We therefore have the following.
  
  \begin{lemma}\label{lem:k-certificate}
    A sparse $k$-connectivity certificate for a weighted, undirected graph can be found in $O(km)$ work and $O(k \log n)$ depth.
  \end{lemma}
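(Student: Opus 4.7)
The plan is to reduce the weighted case to Cheriyan, Kao, and Thurimella's unweighted construction by viewing each edge of weight $w$ as $w$ parallel unweighted copies in a conceptual multigraph, without ever materializing the multigraph. The algorithm would perform $k$ rounds: in round $i$, run the scan-first search of Algorithm~\ref{alg:scan_first_search} on the current weighted graph $G_{i-1}$ to produce a tree $F_i$ of at most $n-1$ edges, then form $G_i$ from $G_{i-1}$ by decrementing $w(e)$ by one for every $e \in F_i$ and removing any edge whose weight drops to zero. The certificate returned is the (weighted) union $F_1 \cup \cdots \cup F_k$.

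For correctness, I would argue that each round faithfully simulates a single round of the unweighted algorithm on the multigraph: scan-first search depends only on the vertex adjacency structure and a preorder numbering of some spanning forest, both of which are identical in the weighted view and the multigraph view, and decrementing $w(e)$ by one models deleting exactly one of the $w(e)$ parallel copies of $e$. Consequently, the sequence of forests $F_1,\dots,F_k$ matches the sequence produced by running the unweighted algorithm on the multigraph, so by the Cheriyan--Kao--Thurimella analysis their union is a sparse $k$-connectivity certificate of the multigraph, and hence of the weighted graph, since cuts in the two views coincide by definition.

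For the cost bound, a single scan-first search can be implemented in $O(m)$ work and $O(\log n)$ depth using a work-efficient parallel spanning tree algorithm together with a preorder numbering and a neighbor-minimum computation, both of which are standard in linear work and logarithmic depth. The weight-decrement bookkeeping costs only $O(n)$ work per round, which is absorbed. Summing over $k$ rounds yields $O(km)$ work and $O(k\log n)$ depth as claimed. The only subtle point is to verify that the decrement-until-zero scheme truly simulates multi-edge deletion and does not interact badly with the spanning-tree choices; this is immediate because each round removes at most one copy per edge, so I do not foresee a significant obstacle.
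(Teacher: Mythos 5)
Your proposal matches the paper's proof essentially line for line: both treat a weight-$w$ edge as $w$ conceptual parallel multiedges, run scan-first search $k$ times, decrement edge weights by one per round (removing an edge only when its weight hits zero), and invoke the Cheriyan--Kao--Thurimella guarantee on the simulated multigraph, with the identical $O(m)$-work/$O(\log n)$-depth cost per round. The approach and the cost accounting are the same as the paper's.
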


  \myparagraph{Parallelizing Matula's algorithm}
  Matula~\cite{matula1993linear} gave a linear time sequential algorithm for $(2+\varepsilon)$-approximate edge connectivity (unweighted minimum cut). It is easy to extend to weighted graphs so that it runs in $O(m \log n \log W)$ time, where $W$ is the total weight of the graph. Using standard transformations to obtain polynomially bounded edge weights, this gives an $O(m \log^2 n)$ algorithm. Karger and Motwani~\cite{karger1994derandomization} gave a parallel version of Matula's unweighted algorithm that runs in ${O}(m^2/n)$ work. Essentially, their version of Matula's algorithm does the following steps as indicated in Algorithm~\ref{alg:matula}.
    
    \begin{algorithm}[H]
      \footnotesize
      \caption{Approximate minimum cut}
      \label{alg:matula}
      \begin{algorithmic}[1]
        \Procedure{Matula}{$G = (V,E)$ : \algtype{Graph} }
          \If{$|V| = 1$} \textbf{return} $\infty$ \EndIf
          \State \alglocal{} $d$ \algassign{} minimum degree in $G$
          \State \alglocal{} $k$ \algassign{} $d / (2 + \varepsilon)$
          \State \alglocal{} $C$ \algassign{} Compute a sparse $k$-certificate of $G$
          \State \alglocal{} $G'$ \algassign{} Contract all non-certificate edges of $E$
          \State \algreturn{} $\min(d, \textproc{Matula}(G'))$
        \EndProcedure
      \end{algorithmic}
    \end{algorithm}
    
    \noindent It can be shown that at each iteration, the size of the graph is reduced by a constant factor, and hence there are at most $O(\log n)$ iterations. Furthermore, the work performed at each step is geometrically decreasing, so the total work using the sparse certificate algorithm of Cheriyan, Kao, and Thurimella~\cite{cheriyan1993scan} is $O(dm)$ and the depth is $O(d \log^2 n)$, where $d$ is the minimum degree of $G$.
    
    Here, we give a slight modification to this algorithm that makes it work on weighted graphs in $O(dm \log(W/m))$ work and $O(d \log n \log W)$ depth, where $d$ is the minimum weighted degree of the graph. 
    To extend the algorithm to weighted graphs, we can replace the sparse certificate routine with our modified version for weighted graphs, and replace the computation of $d$ with the equivalent weighted degree. By interpreting an edge-weighted graph as a multigraph where each edge of weight $w$ corresponds to $w$ parallel multiedges, we can see that the algorithm is equivalent. To argue the cost bounds, note that like in the original algorithm where the size of the graph decreases by a constant factor each iteration, the total weight of the graph must decrease by a constant factor in each iteration. Because of this, it is no longer true that the work of each iteration is geometrically decreasing. Naively, this gives a work bound of $O(d m \log(W))$, but we can tighten this slightly as follows. Observe that after performing $\log(W/m)$ iterations, the total weight of the graph will have been reduced to $O(m)$, and hence, like in the sequential algorithm, the work must subsequently begin to decrease geometrically. Hence the total work can actually be bounded by $O(d m \log(W/m) + dm) = O(dm \log(W/m))$. We therefore have the following.
  
  \begin{lemma}\label{lem:parallel-matula}
    Given a weighted graph with minimum weighted-degree $d$ and total weight $W$, an $O(1)$-approximate minimum cut can be found in $O(dm\log(W/m))$ work and $O(d \log n \log W)$ depth.
  \end{lemma}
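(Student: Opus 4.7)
The plan is to adapt the parallel version of Matula's algorithm described in Algorithm~\ref{alg:matula} to the weighted setting by interpreting an edge of weight $w$ as $w$ parallel unweighted multiedges, and then to plug in the weighted sparse certificate routine from Lemma~\ref{lem:k-certificate}. Concretely, I would replace the computation of the minimum degree $d$ with the minimum \emph{weighted} degree, set $k = d/(2+\varepsilon)$, call the weighted sparse $k$-certificate routine to obtain a certificate $C$, and then contract all edges outside $C$ using a work-efficient parallel connected components algorithm. Since the original Matula analysis argues in terms of the multigraph, the approximation guarantee carries over unchanged, and I would only need to re-examine the resource accounting.

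For correctness, I would appeal directly to Matula's argument applied to the unweighted multigraph: the sparse $k$-certificate preserves every cut of value at most $k$, the minimum degree $d$ upper bounds any cut, and contracting non-certificate edges cannot destroy any cut of value below $k$. Hence the recursive call on the contracted graph returns an $O(1)$-approximation of the minimum cut, and taking the minimum with $d$ at each level yields the desired approximation guarantee.

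For the cost bounds, the main new subtlety is that the size of the graph in \emph{edge count} need not shrink geometrically; what shrinks geometrically is the total \emph{weight}. I would use Lemma~\ref{lem:k-certificate} to bound each iteration's work by $O(km) = O(dm)$ and its depth by $O(d \log n)$. Observing that the total weight decreases by a constant factor each iteration, after $O(\log(W/m))$ iterations the total weight has dropped to $O(m)$. From that point onward the algorithm reduces to (essentially) the unweighted Matula analysis, where both the edge count and minimum degree decrease by constant factors, making subsequent work geometrically decreasing and contributing only an additional $O(dm)$. Summing gives $O(dm\log(W/m))$ work, and the number of iterations $O(\log W)$ together with the per-iteration depth $O(d \log n)$ yields depth $O(d \log n \log W)$.

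The main obstacle I expect is precisely this work accounting: in the unweighted case the geometric decrease in graph size drives the telescoping sum of work across levels, and I need to argue carefully that the $\log(W/m)$ ``heavy'' levels each cost only $O(dm)$ (not $O(dm')$ with a growing $m'$), which relies on the fact that $m$ here is the edge count of the original input and contraction can only decrease it. Once that invariant is in hand, splitting the recursion into the first $\log(W/m)$ levels (which together pay $O(dm\log(W/m))$) and the remaining levels (which pay a geometric $O(dm)$) gives the stated bound cleanly.
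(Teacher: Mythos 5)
Your proposal matches the paper's proof essentially step for step: you replace the unweighted sparse certificate of Cheriyan--Kao--Thurimella with the weighted version (Lemma~\ref{lem:k-certificate}), replace degree with weighted degree, appeal to the multigraph interpretation for the approximation guarantee, observe that total weight (not edge count) shrinks geometrically, and split the iterations into $O(\log(W/m))$ ``heavy'' rounds each costing $O(dm)$ plus a geometrically decaying tail once the total weight drops to $O(m)$ (at which point edge count is also $O(m)$ and the standard unweighted accounting kicks in). The depth accounting ($O(\log W)$ iterations times $O(d\log n)$ per iteration) is also the same. Your added remark that the invariant $m_i \le m$ under contraction is what caps each heavy iteration at $O(dm)$ rather than $O(dm_i)$ is a useful clarification that the paper leaves implicit; otherwise the two arguments coincide.
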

  
  \subsection{Parallel $O(1)$-approximate minimum cut}
  
  We have finally amassed the ingredients needed to produce a parallel $O(1)$-approximate minimum cut algorithm. Well, we need one more trick, unsurprisingly due to Karger. To produce the sampled skeleton graph, Karger's algorithm chooses the sampling probability inversely proportional to the weight of the minimum cut, which paradoxically is what we are trying to compute. This issue is solved by using guessing and doubling. The algorithm guesses the minimum cut and computes the resulting approximation. It can then use Karger's sampling theorem~(Theorem 6.3.1 and Lemma 6.3.2 of \cite{karger1995random}) to verify whether the guess was too high.
  
  \begin{lemma}[Karger~\cite{karger1995random}]
    Let $G$ be a graph with minimum cut $c$ and let $p = \Theta((\log n)/\varepsilon^2 c)$. Then w.h.p.\ the minimum cut in $G(p)$ has value in $(1 \pm \varepsilon)pc$.
  \end{lemma}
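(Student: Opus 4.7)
The plan is to prove both directions (upper and lower bound on the minimum cut of $G(p)$) by applying a Chernoff bound to each individual cut and taking a union bound over all cuts of $G$, using Karger's cut-counting theorem: the number of cuts of weight at most $\alpha c$ is at most $n^{2\alpha}$.

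First I would observe that for any fixed cut $C$ of weight $w$ in $G$, its weight $w_p(C)$ in $G(p)$ is a sum of $w$ independent Bernoulli$(p)$ random variables (by interpreting a weight-$w$ edge as $w$ parallel unweighted multiedges, which is precisely how the $p$-skeleton is defined in the paper), so $\mathbb{E}[w_p(C)] = pw$. A standard Chernoff bound gives $\Pr[|w_p(C) - pw| > \delta pw] \leq 2\exp(-\delta^2 pw/3)$ for $\delta \in (0,1)$. Setting $p = \Theta((\log n)/\varepsilon^2 c)$ with a large enough hidden constant, applying this to the actual minimum cut ($w = c$, $\delta = \varepsilon$) shows that its sampled weight lies in $(1\pm\varepsilon)pc$ with probability $1 - n^{-\Omega(1)}$. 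This immediately gives the upper bound: the minimum cut of $G(p)$ has value at most $(1+\varepsilon)pc$ w.h.p.

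For the lower bound I would show that no cut of $G$ has sampled weight below $(1-\varepsilon)pc$. Fix $\alpha \geq 1$ and consider any cut of weight $w = \alpha c$. For its sampled weight to fall below $(1-\varepsilon)pc$ requires a relative deviation of $\delta_\alpha = 1 - (1-\varepsilon)/\alpha \geq \varepsilon$ from its mean $p\alpha c$, which by Chernoff occurs with probability at most $\exp(-\Theta(\delta_\alpha^2 \alpha \log n/\varepsilon^2))$. Using the cut-counting bound, the total failure probability is at most
\[
\sum_{\alpha \geq 1} n^{2\alpha} \exp\!\bigl(-\Theta(\delta_\alpha^2 \alpha \log n/\varepsilon^2)\bigr),
\]
where the sum is taken over the (discrete) achievable cut weights $w = \alpha c$.

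The main obstacle is confirming that this sum is $n^{-\Omega(1)}$ despite the number of cuts $n^{2\alpha}$ growing exponentially in $\alpha$. The key observation is that $\delta_\alpha^2 \alpha$ grows roughly like $\alpha$ for large $\alpha$ (since $\delta_\alpha \to 1$), so the exponent $\Theta(\alpha \log n/\varepsilon^2)$ in the Chernoff bound dominates the $2\alpha \log n$ contributed by the cut count whenever the constant hidden in $p$ is chosen large enough relative to $\varepsilon$. Splitting the sum at, say, $\alpha = 2$ and handling the $\alpha = 1$ term separately (where $\delta_\alpha = \varepsilon$ directly gives $n^{-\Omega(1)}$ by choice of the constant in $p$) and bounding the tail $\alpha \geq 2$ by a geometric series yields the desired high-probability guarantee, completing the proof.
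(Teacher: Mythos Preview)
Your proposal is correct and is precisely Karger's original argument: Chernoff on each cut, combined with the $n^{2\alpha}$ bound on the number of $\alpha$-approximate cuts, and a union bound summed over $\alpha$. Note, however, that the paper does not actually prove this lemma; it is stated as a citation of Karger's result (Theorem~6.3.1 of~\cite{karger1995random}) and used as a black box, so there is no ``paper's own proof'' to compare against.
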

  
  \begin{lemma}[Karger~\cite{karger1995random}]\label{lem:sampled_cut_size}
    w.h.p., if $G(p)$ is constructed and has minimum cut $\hat{c} = \Theta((\log n)\varepsilon^2)$ for $\varepsilon \leq 1$, then the minimum cut $c$ in $G$ has value in $(1 \pm \varepsilon)\hat{c}/p$.
  \end{lemma}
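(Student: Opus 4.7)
The plan is to follow Karger's standard skeleton-sampling argument: view $G(p)$ as independent Bernoulli sampling of each unit of edge weight, apply a multiplicative Chernoff bound to every cut, and then union-bound over the relevant cuts using Karger's cut-counting theorem (the number of cuts of value at most $\alpha c$ in a graph of minimum cut $c$ is $O(n^{2\alpha})$).

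First I would fix an arbitrary cut $S$ of weight $w_S$ in $G$. Interpreting $G$ as a multigraph in which an edge of weight $w$ is $w$ parallel unit edges, the weight $w_S(G(p))$ in the skeleton is a sum of $w_S$ independent $\mathrm{Bernoulli}(p)$ variables, so $w_S(G(p)) \sim \mathrm{Bin}(w_S, p)$ with mean $pw_S$. A standard multiplicative Chernoff bound gives
\[
\Pr\!\bigl[\,|w_S(G(p)) - pw_S| > \varepsilon\, pw_S\,\bigr] \;\le\; 2\exp\!\bigl(-\Omega(\varepsilon^2 pw_S)\bigr),
\]
so whenever $pw_S = \Omega(\log n / \varepsilon^2)$ with a sufficiently large constant, the sampled weight concentrates inside $(1\pm\varepsilon)pw_S$ with probability $1 - n^{-\omega(1)}$.

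Next I would lift this to all ``dangerous'' cuts simultaneously. Under the hypothesis $\hat c = \Theta(\log n / \varepsilon^2)$, any cut that could either realize the minimum of $G(p)$ or be confused with it has true weight $w_S = O(\hat c / p)$, i.e.\ within a constant factor $\alpha = O(1)$ of the true minimum cut $c$ of $G$. By Karger's cut-counting theorem there are at most $n^{O(1)}$ such cuts, so a union bound of the Chernoff estimate above over this polynomial family shows that, w.h.p., \emph{every} relevant cut $S$ satisfies $w_S(G(p)) \in (1\pm\varepsilon)pw_S$ at the same time. On this good event the two directions of the lemma each follow in one line: applying the bound to the true minimum cut $S^\star$ gives $\hat c \le w_{S^\star}(G(p)) \le (1+\varepsilon)pc$, so $c \ge (1-\varepsilon)\hat c/p$; applying it to the minimizer $\hat S$ in $G(p)$ gives $\hat c \ge (1-\varepsilon)p w_{\hat S}$, so $c \le w_{\hat S} \le (1+\varepsilon)\hat c/p$.

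The main obstacle is making the union bound actually polynomial. Chernoff handles one cut trivially, but $G$ may contain exponentially many cuts, and naively union bounding over all of them would swamp the exponent $\varepsilon^2 p w_S = \Theta(\varepsilon^2 \hat c) = \Theta(\log n)$ produced by Chernoff. The essential step is the observation that only cuts of true weight $O(\hat c / p)$ can be responsible for failure, combined with the $O(n^{2\alpha})$ cut-counting bound at $\alpha = O(1)$; this is precisely what keeps the union bound polynomial and lets the Chernoff tail defeat it.
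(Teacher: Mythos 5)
Your overall approach (Chernoff concentration per cut, followed by a union bound over cuts controlled by Karger's cut-counting bound) is indeed the standard way to prove Karger's sampling theorems; the paper itself does not re-prove the lemma but cites it directly from Karger's thesis. However, as written there is a circular step: to invoke the cut-counting bound you assert that every dangerous cut, having true weight $w_S = O(\hat c / p)$, is therefore within a constant factor of $c$, but the identity $O(\hat c / p) = O(c)$ is (half of) the conclusion you are trying to establish, not something you can assume. The Chernoff argument that you then apply to the true minimum cut $S^\star$ to derive $c \ge (1-\varepsilon)\hat c / p$ has the same problem: the $(1+\varepsilon)$-upper-tail bound $\Pr[\,w_{S^\star}(G(p)) > (1+\varepsilon)pc\,] \le \exp(-\Omega(\varepsilon^2 pc))$ is only polynomially small when $pc = \Omega(\log n / \varepsilon^2)$, and nothing you have established so far rules out that $pc$ is in fact much smaller than $\hat c$.

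The missing ingredient is a case analysis on $pc$ (equivalently, a vacuous-truth argument). If $pc \ge \Theta(\log n / \varepsilon^2)$ with a sufficiently large constant, your argument goes through as sketched: Chernoff plus cut counting gives simultaneous $(1\pm\varepsilon)$ concentration for all cuts of weight $O(c)$, and the two one-line deductions follow. If instead $pc$ is smaller than this threshold by a constant factor, you must show that the hypothesis $\hat c = \Theta(\log n / \varepsilon^2)$ holds only with polynomially small probability, so the conditional statement is vacuously true on the complementary high-probability event. For this you cannot use the multiplicative Chernoff bound centered at $pc$ (its exponent degenerates), but you can use the raw binomial upper-tail estimate
\[
\Pr\bigl[\operatorname{Bin}(c,p) \ge t\bigr] \le \binom{c}{t}p^t \le \Bigl(\frac{e\,pc}{t}\Bigr)^{t},
\]
which for $t = \Theta(\log n / \varepsilon^2)$ and $pc \le t/2$ is $n^{-\Omega(1/\varepsilon^2)} \le n^{-\Omega(1)}$. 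Since $\hat c \le w_{S^\star}(G(p))$ deterministically, this shows $\hat c$ cannot reach $\Theta(\log n / \varepsilon^2)$ in the small-$pc$ regime w.h.p. Once this case is dispatched, the remaining regime has $pc$ large and your cut-counting union bound is legitimate, breaking the circularity.
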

  
  \noindent If the true minimum cut is $c$, then the correct sampling probability for Karger's algorithm is $p = \Theta((\log n)\varepsilon^2 c)$, which produces a skeleton cut of size $\hat{c} = \Theta((\log n)/\varepsilon^2)$ w.h.p. If the algorithm makes a guess $C > 2c$ with corresponding probability $P = \Theta((\log n) / \varepsilon^2 C)$, then Lemma~\ref{lem:sampled_cut_size} says that the minimum cut in the skeleton graph is less than $\hat{c}$ w.h.p. The algorithm can therefore double the guess for $P$ and try again, until the minimum cut in the skeleton is larger than $\hat{c}$, at which point we know that the $P$-skeleton approximates the minimum cut within a factor $\varepsilon$. To perform these steps efficiently, our algorithm does the following:
  \begin{enumerate}[leftmargin=*]
  \item Transform the graph using Lemma~\ref{lem:low-weight-transform} to ensure that all weights are bounded by $O(m \log n)$ while retaining an $O(1)$-approximate minimum cut in $O(m \log^2 n)$ work and $O(\log^2 n)$ depth.
  \item Use Corollary~\ref{cor:logn-approx} to compute a $\log n$-approximate minimum cut value $C$ in $O(m \log^2 n)$ work and $O(\log^2 n)$ depth.
  \item Sample the skeleton graphs $G(\log^2 n / C), G(\log^2 n / (2C)), ..., G(\log n / C)$ using Lemma~\ref{lem:subsampling}. This is $\log \log n \leq \log n$ skeletons, and hence this takes $O(m \log^2 n)$ work w.h.p.\ and $O(\log^2 n)$ depth.
  \item For each skeleton graph:
    \begin{itemize}[leftmargin=*]
    \item Compute a sparse $\Theta(\log n)$ certificate of the skeleton graph. This takes $O(m \log n)$ work and $O(\log^2 n)$ depth by Lemma~\ref{lem:k-certificate}.
    \item Compute an $O(1)$-approximate minimum cut in the $\Theta(\log n)$ certificate using Matula's algorithm (Lemma~\ref{lem:parallel-matula}). Since the certificate guarantees that the total weight is at most $O(n \log n)$ and hence that the minimum weighted degree is at most $O(\log n)$, this takes $O(m \log n \log \log n)$ work and $O(\log^2 n \log \log n)$ depth.
    \end{itemize}
  \end{enumerate}
  
  \noindent Since there are $O(\log \log n)$ skeleton graphs, the total work done by the final step is at most $O(m \log n (\log \log n)^2)$, which is at most $O(m \log^2 n)$, and the depth is $O(\log^3 n)$. The correctness of the algorithm follows from the sampling theorem (Lemma~\ref{lem:sampled_cut_size}) and Karger's discussion~\cite{karger1995random}. Finally, we can conclude the following result.
  
  \begin{lemma}
    Given a weighted, undirected graph, the weight of an $O(1)$-approximate minimum cut can be computed w.h.p.\ in $O(m\log^2 n)$ work and $O(\log^3 n)$ depth
  \end{lemma}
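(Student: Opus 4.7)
The plan is to assemble the ingredients established earlier in the section---Lemma~\ref{lem:low-weight-transform}, Corollary~\ref{cor:logn-approx}, Lemma~\ref{lem:subsampling}, Lemma~\ref{lem:k-certificate}, and Lemma~\ref{lem:parallel-matula}---into a guessing-and-doubling procedure in the spirit of Karger's sequential algorithm, and then verify that the work and depth of each stage sum to the claimed bounds.

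First I would reduce to bounded integer weights using Lemma~\ref{lem:low-weight-transform}, which costs $O(m\log^2 n)$ work and $O(\log^2 n)$ depth and only changes the minimum cut by a constant factor; from here on all weights lie in $[1, O(m\log n)]$. Next I would obtain a coarse estimate $C$ of the minimum cut via Corollary~\ref{cor:logn-approx}, which satisfies $c \leq C \leq c\log n$. This estimate lets us restrict the search for the right sampling probability to the range $P \in [(\log n)/C,\ (\log^2 n)/C]$, in which we only need $O(\log\log n)$ doubling guesses.

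The main step is to produce all $O(\log\log n)$ skeletons $G(P)$ at those probabilities using Lemma~\ref{lem:subsampling}. The key point here is that subsampling amortizes the cost, so that producing the entire family costs $O(m\log^2 n)$ work and $O(\log^2 n)$ depth rather than paying $O(m\log^2 n)$ per skeleton. For each skeleton I would then apply Lemma~\ref{lem:k-certificate} with $k = \Theta(\log n)$ to reduce its edge-weight sum to $O(n\log n)$ and its minimum weighted degree to $O(\log n)$, and feed the result into Lemma~\ref{lem:parallel-matula}; each such application costs $O(m\log n\log\log n)$ work and $O(\log^2 n\log\log n)$ depth. Summing across the $O(\log\log n)$ skeletons yields $O(m\log n(\log\log n)^2) = O(m\log^2 n)$ work and $O(\log^3 n)$ depth, matching the target bounds.

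The part that needs the most care is the correctness argument for which skeleton to choose. Karger's sampling theorem together with Lemma~\ref{lem:sampled_cut_size} tells us that there is a correct probability $p^* = \Theta((\log n)/c)$ at which the skeleton's minimum cut is $\hat{c} = \Theta(\log n)$; because $c \leq C \leq c\log n$, this $p^*$ falls inside our doubling range. I would argue that scanning the skeletons in order of increasing $P$ and selecting the first one whose Matula estimate exceeds the threshold $\hat c$ produces a skeleton whose $O(1)$-approximate minimum cut, rescaled by $1/P$, is an $O(1)$-approximate minimum cut of the original graph (the rescaling absorbing both the certificate approximation and the sampling distortion). The main obstacle is purely bookkeeping: tracking constants so that the two sampling lemmas apply simultaneously at the chosen $P$, and propagating the $O(1)$ loss from Lemma~\ref{lem:low-weight-transform} through to the final estimate---no new algorithmic idea is required beyond stitching together the components already built.
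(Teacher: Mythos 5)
Your proposal is correct and follows essentially the same route as the paper: transform to bounded weights via Lemma~\ref{lem:low-weight-transform}, compute a $\log n$-approximation to fix the $O(\log\log n)$ guessing range, produce all skeletons at once by subsampling (Lemma~\ref{lem:subsampling}), then run the sparse-certificate-plus-Matula pipeline (Lemmas~\ref{lem:k-certificate} and~\ref{lem:parallel-matula}) on each, with correctness handed off to Karger's sampling theorem (Lemma~\ref{lem:sampled_cut_size}). The cost accounting and the doubling-until-the-skeleton-cut-exceeds-the-threshold stopping rule both match the paper's argument.
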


    \section{Finding Minimum $2$-respecting Cuts}\label{sec:two-respect}
    
    We are given a connected, weighted, undirected graph $G = (V,E)$ and a spanning tree $T$. In this section, we will give an algorithm that finds the minimum $2$-respecting cut of $G$ with respect to $T$ in $O(m \log n)$ work and $O(\log^3 n)$ depth.
    
    Our algorithm, like those that came before it, finds the minimum $2$-respecting cut by considering two cases. We assume that the tree $T$ is rooted arbitrarily. In the first case, we assume that the two tree edges of the cut occur along the same root-to-leaf path, i.e.\ one is a descendant of the other. This is called the \emph{descendant edges} case. In the second case, we assume that the two edges do not occur along the same root-to-leaf path. This is the \emph{independent edges} case.

    Since we are going to use RC trees, we require that $G$ have bounded degree. Note that any arbitrary degree graph can easily be \emph{ternarized} by replacing high-degree vertices with cycles of infinite weight edges, resulting in a graph of maximum degree three with the same minimum cut, and only a constant-factor larger size in terms of edges, which our bounds depend on.

    \subsection{Descendant edges}
       
    We present our minimum $2$-respecting cut algorithm for the descendant edges case. Let $T$ be a spanning tree of a connected graph $G = (V,E)$ of degree at most three, and root $T$ at an arbitrary vertex of degree at most two. The rooted tree is therefore a binary tree.
    
    We use the following fact. For any tree edge $e \in T$, let $F_e$ denote the set of edges $(u,v) \in E$ (tree and non-tree)
    such that the $u$ to $v$ path in $T$ contains the edge $e$.
    Then the weight of the cut induced by a pair of edges $\{ e, e' \}$ in $T$ is
    given by
    \begin{equation*}
    w(F_e \Delta F_{e'}) = w(F_e) + w(F_{e'}) - 2w(F_e \cap F_{e'}),
    \end{equation*}
    where $\Delta$ denotes the symmetric difference between the two sets. For each tree edge $e$, our algorithm seeks the tree edge $e'$ that minimizes
    $w(F_e \Delta F_{e'})$, which is equivalent to minimizing
    \begin{equation}
    w(F_{e'}) - 2w(F_e \cap F_{e'}).
    \end{equation}
    To do so, it traverses $T$ from the root while maintaining weights on a tree data structure that satisfies the following invariant:
    \begin{invariant}[Current subtree invariant]\label{inv:descendant-edges-dt}
      When visiting $e = (u,v)$, for every edge $e' \in \text{Subtree}(v)$, the
      weight of $e'$ in the dynamic tree is $w(F_{e'}) - 2w(F_e \cap F_{e'})$
    \end{invariant}
    \noindent The initial weight of each edge $e$ is therefore $w(F_e)$. Maintaining this invariant as the algorithm traverses the tree can then be
    achieved with the following observation.
    When the traversal descends from an edge $p = (w,u)$ to a neighboring child edge $e = (u,v)$, 
    the following hold for all $e' \in \text{Subtree}(v)$:
    \begin{enumerate}[leftmargin=*]
      \setlength\itemsep{0em}
      \item $(F_e \cap F_{e'})
      \supseteq (F_{p} \cap F_{e'})$, since any path that goes through $p$ and
      $e'$ must pass through $e$.
      \item $(F_e \cap
      F_{e'}) \setminus (F_{p} \cap F_{e'})$ are the edges $(x,y)
      \in F_{e'}$ such that
      $e$ is a \emph{top edge} of the path $x-y$ in $T$ (i.e., $e$ is on the path from $x$ to $y$ in $T$, but
      the parent edge of $e$ is not).
    \end{enumerate}
    Therefore, to maintain the current subtree invariant, when the algorithm visits the edge $e$, it need only subtract twice the weight
    of all $x-y$ paths that contain $e$ as a top edge. This can
    be done efficiently by precomputing the sets of top edges.   There are
    at most two top edges for each path $x-y$, and they can be found
    from the LCA of $x$ and $y$ in $T$.   We need not consider tree edges
    since they will never appear in $F_{e'}$.

    By maintaining the aforementioned invariant, the solution follows by taking the minimum value of $w(F_e) + \textproc{QuerySubtree}(v)$ for all edges $e = (u,v)$ during the traversal. As described, this algorithm is entirely sequential, but it can be parallelized using our batched mixed operations on trees algorithm (Corollary~\ref{corollary:batch}).
    
    The operation sequence can be generated as follows. First, the weights $w(F_e)$ for each edge can be computed using the batched mixed operations algorithm (Corollary~\ref{corollary:batch}) where each edge $(u,v)$ of weight $w$ creates an \textproc{AddPath}($u,v,w$) operation, followed by a \textproc{QueryEdge}($e$) for every edge $e \in T$. This takes $O(m\log n)$ work and $O(\log^2 n)$ depth. The LCAs required to compute the sets of top edges can be computed using the parallel LCA algorithm of Schieber and Vishkin~\cite{schieber1988finding} in $O(m)$ work and $O(\log n)$ depth in total. By computing an Euler tour of the tree $T$ (an ordered sequence of visited edges) beginning at the root, the order in which to perform the tree operations can be deduced in $O(n)$ work and $O(\log n)$ depth. Each edge in the Euler tour generates an \textproc{AddPath} operation for each of its top edges, followed by a \textproc{QuerySubtree} operation. Note that each edge is visited twice during the Euler tour. The second visit corresponds to negating the \textproc{AddPath} operations from the first visit. The solution is then the minimum result of all of the \textproc{QuerySubtree} operations. Since there are a constant number of top edges per path, and $O(m)$ paths in total, the operation sequence has length $O(m)$. Using Corollary~\ref{corollary:batch}, we arrive at the following.
    
    \begin{theorem}\label{thm:dependent}
      Given a weighted, undirected graph $G$ and a rooted spanning tree $T$, the minimum $2$-respecting cut of $G$ with respect to $T$ such that one of the cut edges is a descendant of the other can be computed in in $O(m \log n)$ work and $O(\log^2 n)$ depth w.h.p.
    \end{theorem}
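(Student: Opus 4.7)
The plan is to reduce the descendant-edges case to a single batch of \addPath{} and \querySubtree{} operations, which can then be evaluated work-efficiently using Corollary~\ref{corollary:batch}. The algebraic starting point is the identity $w(F_e \Delta F_{e'}) = w(F_e) + w(F_{e'}) - 2 w(F_e \cap F_{e'})$, so that once $w(F_e)$ is known for every tree edge $e$, finding the best partner $e'$ for $e$ among its descendants reduces to minimising $w(F_{e'}) - 2w(F_e \cap F_{e'})$ over descendants $e'$ and then adding $w(F_e)$. The scheme is to precompute the $w(F_e)$ and, during a virtual traversal from the root, maintain Invariant~\ref{inv:descendant-edges-dt} on the tree so that a single \querySubtree{}$(v)$ at edge $e=(u,v)$ returns exactly $\min_{e'} w(F_{e'}) - 2w(F_e\cap F_{e'})$.

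For the initial values $w(F_e)$ I would issue, for every graph edge $(u,v)$ of weight $w$, an \addPath{}$(u,v,w)$ update, followed by a \queryEdge{} on each tree edge; by Corollary~\ref{corollary:batch} this costs $O(m \log n)$ work and $O(\log^2 n)$ depth. The LCAs needed to realise each \addPath{} and to identify the (at most two) top edges of each path can be obtained by Schieber and Vishkin's algorithm in $O(m)$ work and $O(\log n)$ depth. These values become the starting weights for the main phase, after which the invariant maintenance reduces, by the two observations recorded before the theorem, to subtracting $2w(x,y)$ from every descendant edge of $e$ exactly when the traversal enters the subtree below $e$'s deeper endpoint, for each path $(x,y)$ having $e$ as a top edge.

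To extract parallelism I would linearise the conceptual DFS by an Euler tour of $T$, computable in $O(n)$ work and $O(\log n)$ depth. Each Euler-tour visit to an edge emits the constant number of \addPath{} updates that adjust weights by the top-edge contributions (with the second visit to an edge emitting the negations, so that a contribution is in force exactly while the tour is inside the relevant subtree), followed by a \querySubtree{}$(v)$ for that edge. The overall answer is the minimum over all emitted queries of $w(F_e) + \querySubtree(v)$, which is a trivial parallel reduction. Since each graph edge contributes $O(1)$ top-edge updates and each tree edge contributes one subtree query, the total batch length is $O(m)$, and Corollary~\ref{corollary:batch} processes it in $O(m\log n)$ work and $O(\log^2 n)$ depth.

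The main obstacle I anticipate is conceptual rather than arithmetic: verifying that interleaving these updates with the queries along an Euler tour does faithfully realise Invariant~\ref{inv:descendant-edges-dt} at the moment each \querySubtree{}$(v)$ is evaluated. This comes down to checking that the net of \addPath{} updates emitted strictly between the ``enter'' and ``exit'' visits to $e=(u,v)$ is exactly the sum, over graph edges $(x,y)$ with $e$ a top edge, of $-2w(x,y)$ on the subtree below $v$, together with the symmetric cancellations elsewhere. Given the two structural observations about $F_e\cap F_{e'}$ stated just before the theorem, this bookkeeping is routine, and the claimed $O(m\log n)$ work and $O(\log^2 n)$ depth bounds then follow directly from Corollary~\ref{corollary:batch}.
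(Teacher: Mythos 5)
Your proposal is correct and follows the paper's own argument essentially verbatim: precompute $w(F_e)$ via a batch of \addPath{}/\queryEdge{} operations, linearise the traversal with an Euler tour, emit top-edge \addPath{} adjustments (with negations on the second visit) and a \querySubtree{} per tree edge, and invoke Corollary~\ref{corollary:batch} on the resulting $O(m)$-length batch.
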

    
    \begin{figure*}[t]
        \centering
        \includegraphics[width=0.9\textwidth]{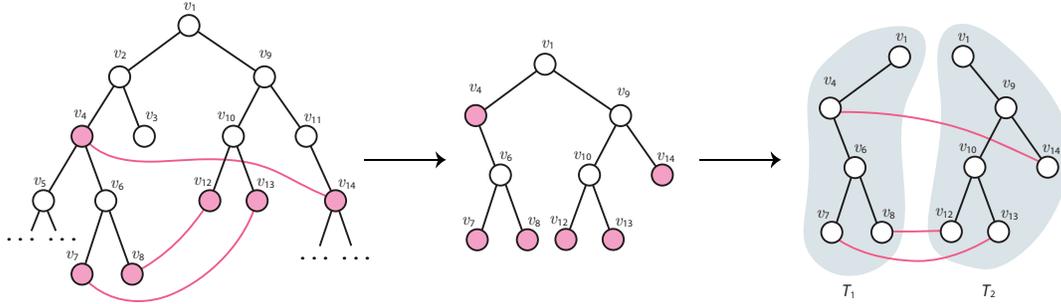}
        \caption{The bipartite problems are generated by compressing the input tree with respect to the endpoints of the edges whose endpoints share an LCA, then splitting the tree into the left and right halves.}\label{fig:bipartite}
      \end{figure*}
    
    \subsection{Independent edges}
    
    The independent edge case is where the two cutting edges do not fall on the same root-to-leaf path. To solve the independent edges problem, we use the framework of Gawrychowski et al.~\cite{gawrychowski2019minimum}, which is to decompose the problem into a set of subproblems, which they call \emph{bipartite problems}. The key challenge in parallelizing the solution to the bipartite problem is dealing with the fact that the resulting trees might not be balanced. The algorithm of Gawrychowski et al.\ relies on performing a biased divide-and-conquer search guided by a heavy-light decomposition~\cite{harel1984fast}, and then propagating results up the trees bottom up. Since the trees may be unbalanced, this can not be easily parallelized. Our solution is to use the recursive clustering of RC trees to guide a divide and conquer search in which we can maintain all of the needed information on the clusters.
    
    \begin{definition}[The bipartite problem]
      Given two weighted rooted trees $T_1$ and $T_2$ and a set of weighted edges that cross from one to the other, $L = \{ (u,v) :  u \in T_1, v \in T_2 \}$, the bipartite problem is to select $e_1 \in T_1$ and $e_2 \in T_2$ with the goal of minimizing the sum of the weight of $e_1$ and $e_2$ plus the weights of all edges $(v_1,v_2) \in L$ such that $v_1$ is in the subtree rooted at the bottom endpoint of $e_1$ and $v_2$ is in the subtree rooted at the bottom endpoint of $e_2$. The size of a bipartite problem is the size of $L$ plus the size of $T_1$ and $T_2$.
    \end{definition}
  
  \noindent Gawrychowski et al.\ observe that if $T_1$ and $T_2$ are edge-disjoint subtrees of $T$, then, assigning weights of $w(F_e)$ to each tree edge and weights of $-2w(e)$ to each non-tree edge, the solution to the bipartite problem is the minimum $2$-respecting cut such that $e_1 \in T_1$ and $e_2 \in T_2$. The independent edges problem is then solved by reducing it to several instances of the bipartite problem, and taking the minimum answer among all of them. We will show how to generate the bipartite problems efficiently, and how to solve them efficiently, both in parallel.

  \subsubsection{Generating the bipartite problems}
  
  The following parallel algorithm generates $O(n)$ instances of the bipartite problem with total size at most $O(m)$. For each edge $e$ in $T$, the algorithm first assigns them a weight equal to $w(F_e)$. Now consider all non-tree edges, i.e.\ all edges $e \in E, e \notin T$, group them by the LCA of their endpoints in $T$, and assign them a weight of $-2w(e)$. This forms a partition of the $O(m)$ edges of $G$, each group identified by a vertex. Each vertex in $T$ conversely has an associated (possibly empty) list of non-tree edges.
  
  For each vertex $v$ in $T$ with a non-empty associated list of edges, create a compressed path tree of $T$ with respect to the endpoints of the associated edges and $v$. Finally, for each such compressed path tree, root it at $v$ (the common LCA of the edge endpoints). The bipartite problems are now generated as follows. For each vertex $v$ with a non-empty list of non-tree edges, and the corresponding compressed path tree $T_v$, consider the children $x, y$ of $v$ in $T_v$. The bipartite problem consists of $T_1$, which contains the edge $(v,x)$ and the subtree of $T_v$ rooted at $x$, and likewise, $T_2$, which contains the edge $(v,y)$ and the subtree of $T_v$ rooted at $y$, and $L$, the associated list of non-tree edges. See Figure~\ref{fig:bipartite} for an illustration.
  
  \begin{lemma}
    Given a tree and a set of non-tree edges, the corresponding bipartite problems can be generated in $O(m \log n)$ work and $O(\log^2 n)$ depth w.h.p.
  \end{lemma}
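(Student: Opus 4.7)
The plan is to execute the construction in three stages---tree-edge weighting, grouping by LCA, and per-group compressed-path-tree assembly---using parallel primitives already available in the paper, and to argue that the aggregate cost telescopes to $O(m\log n)$ work and $O(\log^2 n)$ depth w.h.p.

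First, I would compute the tree-edge weight $w(F_e)$ for every $e\in T$ in a single batch on $T$: for every graph edge $(u,v)$ of weight $w$ issue \addPath$(u,v,w)$, and for every tree edge $e$ issue \queryEdge$(e)$. The batch has length $O(m)$, so Corollary~\ref{corollary:batch} gives $O(m\log n)$ work and $O(\log^2 n)$ depth. Concurrently I would compute, for every non-tree edge, the LCA of its endpoints in $T$ using Schieber--Vishkin~\cite{schieber1988finding} in $O(m)$ work and $O(\log n)$ depth, and then integer-sort the non-tree edges by LCA identifier to obtain, for every vertex $v$, the list $L_v$ of non-tree edges whose endpoints have LCA $v$. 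Sorting costs $O(m)$ work and $O(\log n)$ depth.

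Next, for each $v$ with $L_v\neq\emptyset$, in parallel across all such $v$, I would invoke the parallel compressed-path-tree construction of~\cite{anderson2020work} on $T$ with marked set $\{v\}\cup\bigcup_{(x,y)\in L_v}\{x,y\}$. Rooting the result at $v$ and observing that $T$ has degree at most three (after ternarization), $v$ has at most two children $x,y$ in the compressed tree, and splitting at $v$ immediately yields the pair of subtrees $(T_1,T_2)$. Attaching tree-edge weights $w(F_e)$ (transferred from Stage~1 via the edge identities preserved by splicing) and non-tree weights $-2w(e)$ completes each bipartite instance $(T_1,T_2,L_v)$; this bookkeeping is a linear-work, $O(\log n)$-depth scan.

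The main obstacle is bounding the aggregate cost of the compressed-path-tree constructions across all groups. Let $k_v=|L_v|$ and let $g$ be the number of non-empty groups, so $\sum_v k_v\le m$ and $g\le m$. Each call costs $O(k_v\log(1+n/k_v))$ work and $O(\log^2 n)$ depth w.h.p. The function $k\mapsto k\log(1+n/k)$ is concave in $k$, so Jensen's inequality gives
\[
\sum_v k_v\log\!\bigl(1+n/k_v\bigr)\;\le\;g\cdot\tfrac{m}{g}\log\!\bigl(1+\tfrac{ng}{m}\bigr)\;=\;m\log\!\bigl(1+\tfrac{ng}{m}\bigr)\;=\;O(m\log n),
\]
using $g\le m$ to bound $ng/m\le n$. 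Since all groups are constructed in parallel, the depth of this stage remains $O(\log^2 n)$. Combined with the earlier stages, the total cost matches the claimed $O(m\log n)$ work and $O(\log^2 n)$ depth w.h.p.
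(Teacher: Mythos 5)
Your proof follows essentially the same three-stage plan as the paper: batch \textproc{AddPath}/\textproc{QueryEdge} for the $w(F_e)$ values, Schieber--Vishkin LCAs with a sort to form the groups, and per-group compressed-path-tree constructions run in parallel. The one stylistic difference is in bounding the aggregate cost of the compressed-path-tree calls: your Jensen's-inequality argument is correct but more elaborate than necessary, since the paper simply observes $O(m_i \log(1+n/m_i)) \leq O(m_i \log n)$ per group and sums over the disjoint partition $\sum_i m_i \leq m$ to get $O(m\log n)$ directly.
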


  \begin{proof}
    The edge weight values can be computed in the same way as before using our batched mixed operations on trees algorithm in $O(m \log n)$ work and $O(\log^2 n)$ depth. LCAs can be computed using the parallel LCA algorithm of Schieber and Vishkin~\cite{schieber1988finding} in $O(m)$ work and $O(\log n)$ depth. Grouping the edges by LCA can be achieved using a parallel sorting algorithm in $O(m \log n)$ work and $O(\log n)$ depth. Together, these steps take $O(m \log n)$ work and $O(\log^2 n)$ depth. For each group, computing the compressed path tree takes $O(m_i \log(1+n/{m_i})) \leq O(m_i \log n)$ work and $O(\log^2 n)$ depth w.h.p., where $m_i$ is the number of edges in the group. Performing all compressed path tree computations in parallel and observing that the edge lists of each vertex are a disjoint partition of the edges of $G$, this takes at most $O(m \log n)$ work and $O(\log^2 n)$ depth in total w.h.p.
  \end{proof}

  \noindent It remains only for us to show that the bipartite problems can be efficiently solved in parallel.

  \subsubsection{Solving the bipartite problems}
  
   Our solution is a recursive algorithm that utilizes the recursive cluster structure of RC trees. Recall that RC trees consist of unary and binary clusters (and the nullary cluster at the root, but this is not needed by our algorithm).
   Since the bipartite problems are constructed such that trees $T_1$ and $T_2$ always have a root with a single child, the root cluster of their RC trees consists of exactly one unary cluster.

   \myparagraph{High-level idea} Recall that the goal is to select an edge $e_1 \in T_1$ and an edge $e_2 \in T_2$ that minimizes their costs plus the cost of all edges $(u,v) \in L$ such that $u$ is a descendant of $e_1$ and $v$ is a descendant of $e_2$. Our algorithm first constructs an RC tree of $T_1$, and weights the edges in $T_1$ and $T_2$ by their cost. At a high level, the algorithm then works as follows.   Given a binary cluster $c_1$ of $T_1$, the algorithm maintains weights on $T_2$ such that for each edge $e_2 \in T_2$, its weight is the weight of $e_2$ in the original tree plus the sum of the weights of all edges $(u,v) \in L$ such that $u$ is a descendant of the bottom boundary of $c_1$, and $v$ is a descendant of $e_2$.  This implies that for a binary cluster of $T_1$ consisting of an isolated edge $e_1 \in T_1$, the weights of each $e_2 \in T_2$ are precisely such that $w(e_1) + w(e_2)$ is the value of selecting $\{e_1, e_2\}$ as the solution.  This idea leads to a very natural recursive algorithm.  We start with the topmost unary cluster of $T_1$ and proceed recursively down the clusters of $T_1$, maintaining $T_2$ with weights as described. When the algorithm recurses into the top binary child of a cluster, it must add the weights of all $(u,v) \in L$ that are descendants of that cluster to the corresponding paths in $T_2$. If recursing on the bottom binary subcluster of a binary cluster, the weights on $T_2$ are unchanged. When recursing on a unary cluster, since it has no descendants, the algorithm uses the original weights of $T_2$. Once the recursion hits a binary cluster that consists of a single edge $e_1$, it can return the solution $w(e_1) + w(e_2)$, where $e_2$ is the lightest edge with respect to the current weights on $T_2$. Lastly, to perform this process efficiently, the algorithm \emph{compresses}, using the compressed path tree algorithm~\cite{anderson2020work}, the tree $T_2$ every time it recurses, keeping only the vertices that are endpoints of the crossing edges that touch the current cluster of $T_1$.

   \myparagraph{Implementation}    
   We provide pseudocode for our algorithm in Algorithm~\ref{alg:bipartite}. Given a bipartite problem $(T_1, T_2, L)$, we use the notation $L(C)$ to denote the edges of $L$ limited to those that are incident on some vertex in the cluster $C$. Furthermore, we use $V_{T_2}(L(C))$ to denote the set of vertices given by the endpoints of the edges in $L(C)$ that are in $T_2$. The pseudocode does not make the parallelism explicit, but all that is required is to run the recursive calls in parallel. 
The procedure takes as input a cluster $C$ of $T_1$, a compressed version of $T_2$ with its original weights, and $T_2'$, the compressed version of $T_2$ with updated weights. At the top level, it takes the cluster representing all of $T_1$ for the first argument, and the cluster for all of $T_2$ for the second and third argument. The \textproc{Compress} function compresses the given tree with respect to the given vertex set and its root, and returns the compressed tree still rooted at the same root. \textproc{AddPaths}($S$) takes a set $S \subset L$ of edges and for each one, adds $w(u,v)$ to the root-to-$v$ path, where $v \in T_2$, returning a new tree.

  \begin{algorithm}[h]
    \footnotesize
    \caption{Parallel bipartite problem algorithm}
    \label{alg:bipartite}
    \small
    \begin{algorithmic}[1]
      \small
      \Procedure{Bipartite}{$C$ : \algtype{Cluster}, $T_2$ : \algtype{Tree}, $T_2'$ : \algtype{Tree}, $L$ : \algtype{Edge list}}
        \If{$C = \{ e \}$}
          \State \algreturn{} $w(e) + \textproc{LightestEdge}(T_2')$
        \Else
          \State  $T_\text{cmp}$ \algassign{} $T_2$.\textproc{Compress}($V_{T_2}(L(\ra{C}{t}))$)
          \State  $T_2''$ \algassign{} $T_2'$.\textproc{AddPaths}($L(C) \setminus L(\ra{C}{t})$)
          \State  $T_\text{cmp}''$ \algassign{} $T_2''$.\textproc{Compress}($V_{T_2}(L(\ra{C}{t}))$)
          \State  ans \algassign{} \textproc{Bipartite}($\ra{C}{t}$, $T_\text{cmp}$, $T_\text{cmp}''$, $L(\ra{C}{t})$)
          \For{\algeach{} \algtype{cluster} $C'$ in $\ra{C}{U}$}
            \State  $T_\text{cmp}$ \algassign{} $T_2$.\textproc{Compress}($V_{T_2}(L(C'))$)
            \State ans \algassign{} $\min$(ans, \textproc{Bipartite}($C'$, $T_\text{cmp}$,  $T_\text{cmp}$, $L(C')$))
          \EndFor
          \If{$C$ \algis{} a \algtype{binary cluster}}
            \State  $T_\text{cmp}$ \algassign{} $T_2$.\textproc{Compress}($V_{T_2}(L(\ra{C}{b}))$)
            \State  $T_\text{cmp}'$ \algassign{} $T_2'$.\textproc{Compress}($V_{T_2}(L(\ra{C}{b}))$)
            \State ans \algassign{} $\min$(ans, \textproc{Bipartite}($T_\text{cmp}$, $T_\text{cmp}'$, $L(\ra{C}{b})$)) 
          \EndIf
          \State \algreturn{} ans
        \EndIf
      \EndProcedure
    \end{algorithmic}
  \end{algorithm}

  \noindent Since this algorithm creates many copies of $T_2$, we must ensure that we can still identify and locate a desired vertex given its label. One simple way to achieve this is to build a static hashtable alongside each copy of $T_2$ that maps vertex labels to the instance of that vertex in that copy.


  An ingredient that we need to achieve low depth is an efficient way to update the weights in $T_2$ when adding weights to a collection of paths. Although RC trees support batch-adding weights to paths, the standard algorithm does not meet our cost requirements. This is easy to achieve in linear work and $O(\log n)$ depth by propagating the total weight of all updates up the clusters, and then propagating back down the tree, the weight of all updates that are descendants of the current cluster. It remains to analyze the cost of the \textproc{Bipartite} procedure.
  
  \begin{theorem}
    A bipartite problem of size $m$ can be solved in $O(m \log m)$ work and $O(\log^3 m)$ depth w.h.p.
  \end{theorem}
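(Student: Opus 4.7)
The plan is to verify correctness, then bound the work, then bound the depth, with all three established by induction on the recursive cluster structure of the RC tree of $T_1$.

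For correctness, I would argue by structural induction on $C$ that $\textproc{Bipartite}(C, T_2, T_2', L(C))$ returns the minimum, over all edges $e_1$ contained in the cluster $C$ and all $e_2 \in T_2$, of $w(e_1) + w(e_2)$ plus the total weight of edges $(u,v) \in L$ for which $u$ is a descendant of the bottom endpoint of $e_1$ and $v$ is a descendant of the bottom endpoint of $e_2$, under the invariant that the weight of each edge in $T_2'$ already incorporates the contributions from those $(u,v) \in L$ whose $T_1$-endpoint $u$ lies outside $C$. The base case $C = \{e\}$ is exactly the \textproc{LightestEdge} case. For the inductive step, the tree edges of $T_1$ inside $C$ partition across its top, bottom, and unary subclusters, so the sought minimum is the minimum of the three recursive calls. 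The invariant on $T_2'$ is preserved because (a) descending into the top subcluster calls \textproc{AddPaths} precisely on $L(C) \setminus L(\ra{C}{t})$, which are exactly the edges whose $T_1$-endpoint has just left the subproblem; (b) descending into the bottom subcluster needs no change, since no new $T_1$-endpoints exit; and (c) descending into a unary subcluster resets the third argument to the untouched $T_2$, which is justified because unary clusters have no descendants outside themselves.

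For the work bound, I would exploit the fact that the $T_1$-endpoint of each $\ell = (u,v) \in L$ lies in a unique leaf of the RC tree of $T_1$, so $\ell$ belongs to $L(C)$ for exactly the $O(\log m)$ ancestors $C$ of that leaf (w.h.p.). The non-recursive work at each call consists of $O(1)$ invocations of \textproc{Compress}, one \textproc{AddPaths}, and one \textproc{LightestEdge}. Using the linear-work propagate-up-then-down scheme sketched above the theorem, \textproc{AddPaths} runs in $O(|L(C)|)$ work on the compressed $T_2'$. Each \textproc{Compress} applied to a size-$s$ tree with $k$ marks costs $O(k \log(1 + s/k))$ work. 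Maintaining as an inductive invariant that the tree passed into the call on $C$ already has size $O(|L(\mathrm{parent}(C))|)$, I would charge each compression cost to its marked endpoints; telescoping these charges along the RC-tree root-to-leaf path for a given endpoint contributes $O(\log m)$ per endpoint, and summing over all edges of $L$ yields $O(m \log m)$ work.

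For depth, the three recursive branches fire in parallel, so the recursion depth equals the depth of the RC tree of $T_1$, which is $O(\log m)$ w.h.p. Each level contributes $O(\log^2 m)$ depth from the compressed-path-tree construction and $O(\log m)$ depth from \textproc{AddPaths} and \textproc{LightestEdge} via standard RC-tree traversals, giving $O(\log^3 m)$ total depth. The main obstacle I anticipate is the \textproc{Compress} amortization: a naive bound, compressing the original uncompressed $T_2$ each time, would pay $O(\log m)$ per compression and give $O(m\log^2 m)$ total work. The telescoping only works because, at each recursive call on $C$, the $T_2$ passed in has already been compressed down to size $O(|L(\mathrm{parent}(C))|)$; verifying this size invariant, and checking that successive compressions preserve the lightest-edge weights that subsequent recursions rely on, is the delicate structural step that makes the work bound tight.
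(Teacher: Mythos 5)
Your proof is correct, and the overall structure (recurse on RC clusters of $T_1$, exploit that each $\ell\in L$ appears in $O(\log m)$ clusters, charge the compress cost carefully) matches the paper's. The main difference is in how you bound the compression cost. You use a genuine amortization: each compress costs $O(k\log(1+s/k))$, you charge $O(\log(1+s/k))$ to each of the $k$ marked endpoints, and you argue these charges telescope along the RC root-to-leaf path to $O(\log m)$ per endpoint, with the ``$+\log 2$'' slack per level absorbed by the $O(\log m)$ recursion depth. This works, but it is more delicate than it needs to be. The paper avoids the telescoping entirely by observing that $k\log(1+s/k)=O(k+s)$, and that for each call on $C$ the incoming tree $T_2$ has been compressed down to size $O(|L(C)|)$ while every mark set is a subset of $V_{T_2}(L(C))$; hence each of the constantly many compresses in a call is $O(|L(C)|)$ work outright, the per-level work is $O(m)$ by disjointness of the $L(C)$ across one level, and $O(\log m)$ levels give $O(m\log m)$. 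Your approach buys nothing extra here, and it leans on the correctness of a two-sided telescoping under a looser size invariant than you actually have: the tree passed to the call on $C$ has size $O(|L(C)|)$, not merely $O(|L(\mathrm{parent}(C))|)$, and if you use the tight invariant the amortization degenerates to the paper's simpler per-level bound anyway. Your correctness argument by structural induction on $C$, with the invariant on $T_2'$ absorbing contributions of $(u,v)\in L$ whose $T_1$-endpoint lies outside $C$, matches the paper's informal discussion in the surrounding prose; the paper's formal proof of this theorem confines itself to the cost analysis.
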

  \begin{proof}
    First, since all recursive calls are made in parallel and the recursion is on the clusters of $T_1$, the number of levels of recursion is $O(\log m)$ w.h.p. We will show that the algorithm performs $O(m)$ work in total at each level, in $O(\log^2 m)$ depth w.h.p. Observe first that at each level of recursion, the edges $L$ for each call are a disjoint partition of the non-tree edges, since each recursive call takes a disjoint subset. We will now argue that each call does work proportional to $|L|$. Since $T_2$ and $T_2'$ are both compressed with respect to $L$, their size is proportional to $|L|$. \textproc{AddPaths} takes linear work in the size of $T_2$ and $O(\log m)$ depth, and hence takes $O(|L|)$ work and $O(\log m)$ depth. \textproc{Compress}($K$) takes $O(|K|\log(1+|T_2|/|K|)) \leq O(|K| + |T_2|)$ work and $O(\log^2 m)$ depth w.h.p.. Since compression is with respect to some subset of $L$, all of the compress operations take $O(|L|)$ work and $O(\log^2 m)$ depth w.h.p. In total, this is $O(|L|)$ work in $O(\log^2 m)$ depth w.h.p.\ at each level for each call. Since the $L$s at each level are a disjoint partition of the non-tree edges, the total work per level is $O(m)$ w.h.p., and hence the desired bounds follow.
  \end{proof}
  
  \noindent Since there are $O(n)$ bipartite problems of total size $O(m)$, solving them all in parallel yields the following, which, when combined with Theorem~\ref{thm:dependent}, proves Theorem~\ref{thm:two-respecting}. 
    
  \begin{theorem}
    Given a weighted, undirected graph $G$ and a rooted spanning tree $T$, the minimum $2$-respecting cut of $G$ with respect to $T$ such that the cut edges are independent can be computed in $O(m \log n)$ work and $O(\log^3 n)$ depth w.h.p.
  \end{theorem}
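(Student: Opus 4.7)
The plan is to reduce the independent edges case to the bipartite problem using the construction already described, then bound the total cost by aggregating over all bipartite subproblems. First, I would invoke the generation lemma: given the rooted spanning tree $T$, build the set of bipartite problems $\{(T_1^{(i)}, T_2^{(i)}, L^{(i)})\}_{i}$, one per vertex $v$ of $T$ whose associated list of non-tree edges is nonempty. This step already costs $O(m \log n)$ work and $O(\log^2 n)$ depth w.h.p. As argued earlier, the solution to the bipartite problem with weights $w(F_e)$ on tree edges and $-2w(e)$ on non-tree edges equals the weight of the best $2$-respecting cut whose two tree edges lie in $T_1^{(i)}$ and $T_2^{(i)}$ respectively; the correctness argument then amounts to observing that every pair of independent tree edges has a unique LCA in $T$, and hence is covered by exactly one bipartite instance.

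Next I would solve all bipartite problems in parallel, invoking the single-bipartite bound of $O(m_i \log m_i)$ work and $O(\log^3 m_i)$ depth w.h.p., where $m_i := |T_1^{(i)}| + |T_2^{(i)}| + |L^{(i)}|$. The key accounting step is to verify the size bound $\sum_i m_i = O(m)$. The $|L^{(i)}|$ contributions sum to $O(m)$ because the non-tree edges are partitioned by their LCAs. The $|T_1^{(i)}| + |T_2^{(i)}|$ contributions are $O(|L^{(i)}|)$ each, since the compressed path trees have size proportional to the number of marked vertices. Thus the total work is
\begin{equation*}
\sum_i O(m_i \log m_i) \;\leq\; O(\log n) \sum_i m_i \;=\; O(m \log n),
\end{equation*}
and the depth of solving all instances in parallel is $O(\log^3 n)$ w.h.p.

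Finally I would take the minimum of the values returned by the $O(n)$ instances (a parallel reduction in $O(n)$ work and $O(\log n)$ depth) and return that value; combining with the generation step gives the stated bounds. The only subtle point -- and the place I would spend the most care -- is justifying $\sum_i m_i = O(m)$ cleanly, in particular that the compressed path trees used to form $T_1^{(i)}$ and $T_2^{(i)}$ have size at most $2|V_{T_2}(L^{(i)}) \cup \{v\}| = O(|L^{(i)}|)$ (this is precisely the compressed-path-tree size property quoted in the preliminaries), so that the per-vertex cost is charged only to the non-tree edges it owns. Everything else is a direct combination of the generation lemma, the bipartite-solver theorem, and Theorem~\ref{thm:dependent}, which together yield Theorem~\ref{thm:two-respecting}.
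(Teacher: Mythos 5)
Your proposal is correct and follows the paper's approach: generate the bipartite problems via the earlier lemma, solve each one in parallel using the bipartite-solver theorem, and aggregate. You actually make explicit the size accounting ($\sum_i m_i = O(m)$ via the LCA-partition of non-tree edges and the $O(|L^{(i)}|)$ bound on each compressed path tree) that the paper compresses into a single sentence, which is where the real content of the bound lives.

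Two small remarks. First, your correctness observation ("every pair of independent tree edges has a unique LCA, hence is covered by exactly one bipartite instance") is slightly too quick as stated: the compressed path trees do not contain every tree edge explicitly, only lightest-edge representatives of contracted paths, and pairs of edges with no crossing non-tree edge are not covered at all. The actual justification, inherited from Gawrychowski et al., is that such uncovered pairs can never yield the minimum (their cut value decomposes into two independent $1$-respecting cuts and is dominated by the best one of those), and that minimizing over compressed-path representatives suffices because only the lightest edge on each contracted segment can be optimal. The paper, like you, takes this as given from the cited framework, so this is not a gap relative to the paper's level of detail, but it is worth being aware that this is where the nontrivial correctness argument hides. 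Second, your closing sentence conflates this theorem with the combined Theorem~\ref{thm:two-respecting}; the independent-edges theorem itself does not depend on Theorem~\ref{thm:dependent}.
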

  
  \noindent Combining Theorem~\ref{thm:tree-packing} with Theorem~\ref{thm:two-respecting} on each of the $O(\log n)$ trees in parallel proves Theorem~\ref{thm:main-theorem}.

\section{Conclusion}
  
We present a randomized $O(m\log^2 n)$ work, $O(\log^3 n)$ depth parallel algorithm for minimum cut. It is the first parallel minimum cut algorithm to match the work bound of the best sequential algorithm, making it work efficient. Finding a faster parallel algorithm for minimum cut would therefore entail finding a faster sequential algorithm. It remains an open problem to find a deterministic algorithm for minimum cut, even a sequential one, that runs in $O(m \polylog n)$ time.
  

\section*{Acknowledgments}

We thank the anonymous referees for their comments and suggestions, and Phil Gibbons and Danny Sleator for their feedback on the manuscript. We thank Ticha Sethapakdi for helping with the figures. This research was supported by NSF grants CCF-1901381, CCF-1910030, and CCF-1919223.

  {\footnotesize
  \bibliographystyle{abbrv}
  \bibliography{ref}
  }
  
  
  
  
  
  
  

  
\end{document}